\documentclass{conm-p-l} 

\usepackage{graphicx}
\usepackage{mathrsfs, amsthm, amsmath, amssymb}
\usepackage{appendix}
\usepackage{bm, hyperref}
\usepackage{cite}
\usepackage{url}
\usepackage{epsfig} 
\usepackage{euscript} 
\usepackage{amsfonts} 







\newtheorem{theorem}{Theorem}[section] 
\newtheorem{lemma}[theorem]{Lemma}
\newtheorem{corollary}[theorem]{Corollary}
\newtheorem{conjecture}[theorem]{Conjecture}
\newtheorem{proposition}[theorem]{Proposition}


\theoremstyle{definition}
\newtheorem{definition}[theorem]{Definition}

\newtheorem{example}[theorem]{Example}



\newtheorem{remark}[theorem]{Remark}

\newtheorem{notation}[theorem]{Notation}




\newcommand{\hs}{\hspace{2mm}}


\newcommand{\C}{\mathbb{C}}
\newcommand{\N}{\mathbb{N}}

\newcommand{\R}{\mathbb{R}}
\newcommand{\Z}{\mathbb{Z}}


\newcommand{\bfa}{\mathbf{a}} 
\newcommand{\bfc}{\mathbf{c}} 
\newcommand{\bfg}{\mathbf{g}}

\newcommand{\bfk}{\mathbf{k}}
\newcommand{\bfm}{\mathbf{m}} 
\newcommand{\bfp}{\mathbf{p}}
\newcommand{\bfq}{\mathbf{q}}
\newcommand{\bfr}{\mathbf{r}}
\newcommand{\bfs}{\mathbf{s}} 
\newcommand{\bft}{\mathbf{t}} 
\newcommand{\bfv}{\mathbf{v}} 

\newcommand{\bfPhi}{\mathbf{\Phi}}

\newcommand{\calD}{\mathcal{D}} 
\newcommand{\calL}{\mathcal{L}}
\newcommand{\calR}{\mathcal{R}} 
\newcommand{\calT}{\mathcal{T}} 


\newcommand{\scrJ}{\mathscr{J}} 







\newcommand{\Rec}{\calR} 

\newcommand{\recpoly}[2]{ #1^#2 = a_1#1^{#2-1} + \cdots + a_#2}

\newcommand{\recmatrix}[9]{  
\begin{pmatrix}
 #9  & \cdots  & #5 \\
 #8  & \cdots  & #4 \\
 \vdots & \ddots & \vdots \\
 #5  & \cdots  & #1 
\end{pmatrix} }

\newcommand{\cfmatrix}[4]{ 
\begin{pmatrix}  
 #1 & 1   & \cdots & 0 \\
\vdots & \vdots  & \ddots & \vdots \\
 #3 & 0   & \cdots & 1 \\
 #4 & 0   &  \cdots & 0
\end{pmatrix}  }

\newcommand{\detmatrix}[4]{ 
\begin{vmatrix}
  \lambda - #1 & -1 & 0  & \cdots & 0 \\
 - #2 & \lambda  & -1 &\cdots & 0 \\
\vdots & \vdots & \ddots & \ddots & \vdots \\
 - #3 & 0   & 0 & \cdots & -1 \\
 - #4 & 0   & 0 & \cdots & \lambda
\end{vmatrix} }


\begin{document}

\title[Scaling zeta functions and recursive structure of lattice strings]{Multifractal Analysis via Scaling Zeta Functions and Recursive Structure of Lattice Strings}

\author[R.~de~Santiago]{Rolando~de~Santiago}
\address{Department of Mathematics and Statistics\\ California State Polytechnic University\\ Pomona, California 91768 USA}
\email{rdesantiago@csupomona.edu}

\author[M.~L.~Lapidus]{Michel~L.~Lapidus}
\address{Department of Mathematics\\ 
University of California\\ Riverside, California 92521-0135 USA}
\email{lapidus@math.ucr.edu}
\thanks{The work of the second author (M.~L.~Lapidus) was partially supported by the US National Science Foundation under the research grant DMS--1107750, as well as by the Institut des Hautes Etudes Scientifiques (IHES) where the second author was a visiting professor in the Spring of 2012 while this paper was written.}

\author[S.~A.~Roby]{Scott~A.~Roby}
\address{Department of Mathematics\\ 
University of California\\ Riverside, California 92521-0135 USA}
\email{roby@math.ucr.edu}

\author[J.~A.~Rock]{John~A.~Rock}
\address{Department of Mathematics and Statistics\\ California State Polytechnic University\\ Pomona, California 91768 USA}
\email{jarock@csupomona.edu}

\subjclass[2010]{Primary: 11M41, 28A12, 28A80. Secondary: 28A75, 28A78, 28C15, 33C05, 37B10, 37F35, 40A05, 40A10, 65Q30}
\date{July 21, 2012}


\keywords{Fractal string, generalized self-similar string, recursive string, geometric zeta function, multifractal (or $\alpha$-scaling) zeta function, scaling regularity, self-similar system, iterated function system, self-similar set, Besicovitch subset, self-similar measure, similarity dimension, Hausdorff dimension, Minkowski dimension, geometric multifractal spectrum, symbolic multifractal spectrum, scaling multifractal spectrum, complex dimensions, $\alpha$-scaling complex dimensions, tapestry of complex dimensions, Cantor set, Cantor string, Fibonacci string, Lucas string, lattice vs. nonlattice, linear recurrence relation, hypergeometric series.}

\begin{abstract}
The multifractal structure underlying a self-similar measure stems directly from the weighted self-similar system (or weighted iterated function system) which is used to construct the measure. This follows much in the way that the dimension of a self-similar set, be it the Hausdorff, Minkowski, or similarity dimension, is determined by the scaling ratios of the corresponding self-similar system via Moran's theorem. The multifractal structure allows for our definition of scaling regularity and $\alpha$-scaling zeta functions motivated by the geometric zeta functions of \cite{LapvF6} and, in particular, the partition zeta functions of \cite{ELMR,LapRo1}. Some of the results of this paper consolidate and partially extend the results of \cite{ELMR,LapRo1} to a multifractal analysis of certain self-similar measures supported on compact subsets of a Euclidean space. Specifically, the $\alpha$-scaling zeta functions generalize the partition zeta functions of \cite{ELMR,LapRo1} when the choice of the family of partitions is given by the natural family of partitions determined by the self-similar system in question. Moreover, in certain cases, self-similar measures can be shown to exhibit lattice or nonlattice structure with respect to specified scaling regularity values in a sense which extends that of \cite{LapvF6}. Additionally, in the context provided by generalized fractal strings viewed as measures, we define generalized self-similar strings, allowing for the examination of many of the  results presented here in a specific overarching context and for a connection to the results regarding the corresponding complex dimensions as roots of Dirichlet polynomials in \cite{LapvF6}. Furthermore, generalized lattice strings and recursive strings are defined and shown to be very closely related.
\end{abstract}

\maketitle

\section{Introduction and summary}
\label{sec:IntroductionAndSummary}

The abscissae of convergence of the $\alpha$-\emph{scaling zeta functions} associated with a self-similar measure provide a multifractal spectrum akin to the classic symbolic multifractal spectrum, called the \emph{scaling multifractal spectrum}. The technique described herein allows for partial, yet extensive, generalizations of the main results on the abscissae of convergence of partition zeta functions found in \cite{LapRo1} and especially \cite{ELMR}. In a variety of cases, the Hausdorff dimensions of Besicovitch subsets of a self-similar set are recovered. In other cases, the classic symbolic multifractal spectrum of a self-similar measure is recovered. Moran's theorem is fully recovered in the special case where the measure in question (supported on a given self-similar set) is the natural mass distribution associated with a specific probability distribution. 

Along with the development of $\alpha$-scaling zeta functions, a generalized setting for self-similar and lattice generalized fractal strings is developed and \emph{recursive strings} are introduced in this paper. Indeed, \emph{generalized self-similar strings} provide a framework in which one can analyze certain special cases of $\alpha$-scaling zeta functions. It is also shown that \emph{generalized lattice strings} and \emph{recursive strings} are intimately related.

In terms of applications, multifractal analysis is the study of a variety of mathematical, physical, dynamical, probabilistic, statistical, and biological phenomena from which families of fractals may arise. Such objects and behaviors are often modeled by mass distributions, or measures, with highly irregular and intricate structure. These multifractal measures, or simply multifractals, stem from phenomena such as weather, structure of lightning, turbulence, distribution of galaxies, spatial distribution of oil and minerals, cellular growth, internet traffic, and financial time series. See \cite{BMP,CM,ELMR,Falc,Fed,Hut,LVM,BM2,MauUrb,Ol1,Ol2,Ol3,Ol4,STZ}.

The structure of this paper is summarized as follows:

Section \ref{sec:MultifractalAnalysisOfSelfSimilarSystems} provides a summary of results on classical approaches to multifractal analysis of self-similar measures established in the literature which are most pertinent to the results presented in this paper. In particular, the manner in which words are associated to the structure of self-similar measures (i.e., the way in which symbolic dynamics is employed) is discussed. See \cite{Bes1,BMP,CM,DekLi,Egg,Falc,Fed,Hut,Jaf,LapvF5,MauUrb,Mor,Ol1,Ol2,Ol3,Ol4,STZ} for classical and known results on self-similar sets and multifractal analysis of self-similar measures and other multifractal objects.

In Section \ref{sec:FractalStringsAndComplexDimensions}, definitions and results regarding (generalized) fractal strings and complex dimensions from \cite{LapvF6} are recalled. Further, \emph{generalized self-similar strings} are defined and are shown to have, by design, complex dimensions that are completely determined by the roots of Dirichlet polynomials, as examined (thoroughly) in \cite[Chs.~2 \& 3]{LapvF6}. That is, generalized self-similar strings provide a context in which the self-similar structures considered throughout the paper can be related to the study of Dirichlet polynomials performed in \cite{LapvF6}. In addition to \cite{LapvF6}, see \cite{ELMR,LLVR,LapPeWin,LapPeWi2,LapRo1,LapvF5,LVM,Rock} for further notions and uses of fractal strings and complex dimensions in a variety of contexts.

In Section \ref{sec:GeneralizedLatticeStringsAndLinearRecurrenceRelations}, \emph{generalized lattice strings} and \emph{recursive strings} are defined and the intimate connections between them are discussed. In particular, it is shown that every generalized lattice string is a recursive string and exhibits properties which are determined by a naturally corresponding linear recurrence relation. It is also shown that the complex dimensions of a recursive string are the complex dimensions of a naturally corresponding generalized lattice string. See \cite{deSa} for a more detailed development of the connections between generalized lattice strings, linear recurrence relations, and recursive strings.

In Section \ref{sec:MultifractalAnalysisViaScalingRegularityAndScalingZetaFunctions}, scaling regularity is used to define families of fractal strings associated with a given self-similar measure, giving rise to the definition of $\alpha$-\emph{scaling zeta functions} and the notion of multifractal spectrum as the abscissae of convergence of these zeta functions. This technique is motivated by and partially extends the results on partition zeta functions found in \cite{ELMR,LapvF6,LapRo1,Rock}.

Finally, Section \ref{sec:FurtherResultsAndFutureWork} provides preliminary investigations of some further problems which expand upon the results of this paper. In particular, the $\alpha$-scaling zeta functions of certain self-similar measures are shown to actually be hypergeometric series. This relationship is central to the material studied in \cite{EssLap}.  See \cite{BatErd} for more information on hypergeometric series. Also, a family of self-similar measures which do not satisfy any of the conditions of the theorems and corollaries in Section \ref{sec:MultifractalAnalysisViaScalingRegularityAndScalingZetaFunctions} is investigated, motivating further research.

\section{Multifractal analysis of self-similar systems}
\label{sec:MultifractalAnalysisOfSelfSimilarSystems}

Multifractal analysis of a measure $\nu$ concerns the fractal geometry of objects such as the sets $E_t$ of points $x\in E$ for which the measure $\nu(B(x,r))$ of the closed ball $B(x,r)$ with center $x$ and radius $r$ satisfies
\begin{align*}
\lim_{r \rightarrow 0^+}\frac{\log{\nu(B(x,r))}}{\log{r}} &= t,
\end{align*}
where $t\geq0$ is the {\it local H\"older regularity} and $E$ is the support of $\nu$. Roughly speaking, multifractal analysis is the study of the ways in which a Borel measure behaves locally like $r^t$.

\subsection{Multifractal spectra}
\label{sec:MultifractalSpectra}

The multifractal spectra of Definitions \ref{def:GeometricHausdorffMultifractalSpectrum} and \ref{def:SymbolicHausdorffMultifractalSpectrum} along with Proposition \ref{prop:ClassicMultifractalSpectra} below are presented as found in \cite{Ol4}, as well as the corresponding references therein. See especially the work of Cawley and Mauldin in \cite{CM}.

\begin{definition}
\label{def:GeometricHausdorffMultifractalSpectrum}
The \emph{geometric Hausdorff multifractal spectrum} $f_g$ of a Borel measure $\nu$ supported on $E$ is given by
\begin{align*}
f_g(t)	&:=\dim_H(E_t), 
\end{align*}
where $t\geq0$, $\dim_H$ is the Hausdorff dimension, and
\begin{align*}
E_t	&:=\left\{x \in E : \lim_{r \rightarrow 0^+}\frac{\log{\nu(B(x,r))}}{\log{r}} =t\right\}. 
\end{align*}
\end{definition}

\subsection{Self-similar systems}
\label{sec:SelfSimilarSystems}

Self-similar systems lie at the heart of many of the results presented in this paper.

\begin{definition}
\label{def:SelfSimilarSystem}
Given $N \in \N, N \geq 2$, a \emph{self-similar system} $\bfPhi=\{\Phi_j\}_{j=1}^N$ is a finite family of contracting similarities on a complete metric space $(X,d_X)$. Thus, for all $x,y \in X$ and each $j=1,\ldots,N$ we have 
\begin{align*}
d_X(\Phi_j(x),\Phi_j(y))&=r_jd_X(x,y),
\end{align*}
where $0<r_j<1$ is the \emph{scaling ratio} (or Lipschitz constant) of $\Phi_j$ for each $j=1,\ldots,N$. 

The \emph{attractor} of $\bfPhi$ is the nonempty compact set $F \subset X$ defined as the unique fixed point of the contraction mapping 
\begin{align}
\label{eqn:SelfSimilarSystemAsMap}
\displaystyle
\bfPhi(\cdot)	& := \bigcup_{j=1}^N \Phi_j (\cdot)
\end{align}
on the space of compact subsets of $X$ equipped with the Hausdorff metric. That is, $F=\bfPhi(F)$. The set $F$ is also called the \emph{self-similar set} associated with $\bfPhi$.

A self-similar system (or set) is \emph{lattice} if there is a unique real number $r$ and positive integers $k_j$ such that $0<r<1$ and $r_j=r^{k_j}$ for each $j=1,\ldots,N$. Otherwise, the self-similar system (or set) is \emph{nonlattice}. 
\end{definition}
 
\begin{remark}
\label{rmk:OSCThroughout}
For clarity of exposition, only self-similar systems on some Euclidean space $\R^d$  $(d \in \N)$, with $X \subset \R^d$, are considered.\footnote{Throughout this paper, $ \N $ denotes the set of positive intergers: $\N=\lbrace 1,2,3,\ldots \rbrace$.} Furthermore, throughout the paper we consider only self-similar systems which satisfy the \emph{open set condition}. (See \cite{CM,ELMR,Falc,Hut,LapvF6}.) Recall that a self-similar system $\bfPhi$ satisfies the open set condition if there is a nonempty open set $V \subset \R^d$ such that $\bfPhi(V) \subset V$ and $\Phi_j(V) \cap \Phi_k(V)= \emptyset$ for each $j, k \in \{1,\ldots,N\}$ where $j \neq k$. Results presented in \cite{Ol4} and \cite{STZ}, for example, specifically do \emph{not} require the open set condition to be satisfied.
\end{remark}

The notion of self-similar ordinary fractal strings which are either lattice or nonlattice as defined below follows from \cite[Ch.~2]{LapvF6}. These notions are extended and generalized in various ways throughout this paper.

\begin{definition}
\label{def:SelfSimilarLatticeOrdinaryFractalStrings}
Let $\bfPhi$ be a self-similar system on $\R$ such that $\sum_{j=1}^Nr_j<1$  and satisfying the open set condition on a compact interval $I$. Denote the endpoints of $I$ by $a_1$ and $a_2$, and assume that	 there are $j_1,j_2 \in \{1,\ldots,N\}$ such that $a_1 \in \Phi_{j_1}(I)$ and $a_2 \in \Phi_{j_2}(I)$. The complement of the attractor $F$, given by $I\setminus F$, is a \emph{self-similar ordinary fractal string}. Let $I^o$ denote the interior of $I$. The lengths of the connected components of $I^o\setminus \bfPhi(I)$, called the \emph{gaps} of $\bfPhi$, are denoted by $\bfg=(g_1,\ldots,g_K)\in(0,\infty)^K$ where $K \in \N$. If, additionally, there is a unique unique real number $r$ and positive integers $k_j$ such that $0<r<1$ and $r_j=r^{k_j}$ for each $j=1,\ldots,N$, then $I \setminus F$ is \emph{lattice}. Otherwise, $I \setminus F$ is \emph{nonlattice}. 
\end{definition}

The following notation, which is motivated by the notation of self-similar (ordinary) fractal strings in \cite[Ch.~2]{LapvF6}, allows for a clearer comparison between the main results herein and the classical results found in, for instance, \cite{CM,Falc,Hut}.

\begin{notation}
\label{not:SelfSimilarityWords}
For each $k \in \N\cup\{0\}$, let $\scrJ_k=\{1,\ldots,N\}^k$ denote the set of all finite sequences of length $k$ in the symbols $\{1,\ldots,N\}$ (i.e., words). For $k=0$, let $\scrJ_0$ be the set consisting of the empty word. Let $\scrJ=\cup_{k=0}^{\infty}\scrJ_k$; hence, $\scrJ$ is the set of all finite sequences (or words) in the symbols $\{1,\ldots,N\}$. Let $\scrJ_\infty$ denote the set of all one-sided infinite sequences (or words) in the symbols $\{1,\ldots,N\}$. For $J \in \scrJ$, let $|J|$ denote the number of components (i.e., the length) of $J$ and define $|J|=\infty $ if $J \in \scrJ_\infty$. For a word $J$ (in either $\scrJ$ or $\scrJ_\infty$) and each $n \in \N$, $n \leq |J|$, let $J|n$ denote the truncation of $J$ at its $n$th component and let $J|0$ denote the empty word. More specifically, $J|n=j_1j_2\ldots j_n$ if $J$ begins with the letters $j_1,j_2,\ldots,j_n$.  For $J\in\scrJ$, define the contracting similarity $\Phi_J$ by
\begin{align*}
\Phi_J	&:=\Phi_{j_{|J|}} \circ \Phi_{j_{|J|-1}} \circ \cdots \circ \Phi_{j_1}. 
\end{align*}
The scaling ratio of $\Phi_J$ is given by $r_J=\prod_{q=1}^{|J|}r_{\pi_q(J)}$, where $\pi_q(\cdot)$ is the projection of a word onto its $q$th component. For the empty word $J|0$, let $\Phi_{J|0}$ denote the identity map and let $r_{J|0}=1$. For a set $E \subset X$ and a word $J$, let
\begin{align*}
\displaystyle
E_J	&:= \Phi_J(E).
\end{align*}
Finally, define $\tau : \scrJ_\infty \rightarrow \R^d$ by $\{\tau(J)\}  := \cap_{n=0}^{\infty} E_{J|n}$.
\end{notation}

In Theorem 9.1 of \cite{Falc}, for instance, it is shown via the Contraction Mapping Principle that \eqref{eqn:SelfSimilarSystemAsMap} uniquely defines the attractor $F$ as the fixed point of the map $\bfPhi(\cdot)$. Moreover, for any compact non-empty set $E$ such that $\Phi_j(E) \subset E$ for each $j=1,\ldots,N$, we have 
\begin{align*}
F 	&= \bigcap_{n=0}^{\infty} \bigcup_{|J|=n} E_{J},
\end{align*}
where the notation ``$|J|=n$'' indicates that, for each $n \in \N \cup \{0\}$, the corresponding union runs over all $J \in \scrJ$ such that $|J|=n$ (i.e., over all words of length $n$).

For the support $F$, its Hausdorff dimension $\dim_{H}(F)$ is given by the unique positive real solution $D$ of the Moran equation \eqref{eqn:Moran}; see \cite{Falc,Hut,Mor}. Equivalently, $D$ is equal to the Minkowski dimension of $F$, denoted by $\dim_{M}(F)$. In \cite{BM2}, $D$ is called the `similarity dimension' (or `exponent') of $F$.

\begin{theorem}[Moran's Theorem]
\label{thm:MoransTheorem}
Let $\bfPhi$ be a self-similar system with scaling ratios $\{r_j\}_{j=1}^N$ that satisfies the open set condition. Then the Hausdorff \emph{(}and Minkowski\emph{)} dimension of the attractor $F$ is given by the unique \emph{(}and hence, positive\emph{)} real solution $D$ of the equation  
\begin{align}
\label{eqn:Moran}
\sum_{j=1}^N r_j^\sigma	&= 1, \quad \sigma \in \R.
\end{align}
\end{theorem}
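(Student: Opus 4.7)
The plan is the standard two-sided attack: first establish existence and uniqueness of $D$ from analytic properties of the Moran equation, then bracket $\dim_H(F)$ from above by a simple covering argument and from below by a mass distribution argument using the open set condition, with $\dim_M(F)$ squeezed in between.

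For existence and uniqueness, I would define $f(\sigma) := \sum_{j=1}^N r_j^\sigma$ on $\R$. Since each $r_j \in (0,1)$, $f$ is continuous and strictly decreasing, with $f(0) = N \geq 2$ and $f(\sigma) \to 0$ as $\sigma \to \infty$, so there is a unique real (indeed positive) $D$ with $f(D) = 1$. For the upper bound, fix any compact non-empty $E \subset X$ with $\Phi_j(E) \subset E$ for all $j$ (e.g., a sufficiently large closed ball), so that $F = \bigcap_n \bigcup_{|J|=n} E_J$ by the formula recorded just before the theorem. Using Notation \ref{not:SelfSimilarityWords}, $\diam(E_J) = r_J \diam(E)$, and the multiplicativity $r_J = \prod_{q=1}^{|J|} r_{\pi_q(J)}$ combined with $f(D) = 1$ yields
\begin{equation*}
\sum_{|J|=n} (\diam E_J)^D = \bigl(\diam E\bigr)^D \!\!\sum_{|J|=n} r_J^D = \bigl(\diam E\bigr)^D \biggl(\sum_{j=1}^N r_j^D\biggr)^n = \bigl(\diam E\bigr)^D.
\end{equation*}
Since $\max_J r_J \to 0$ as $n \to \infty$, $\{E_J : |J|=n\}$ is an arbitrarily fine cover of $F$, giving $\haus^D(F) \leq (\diam E)^D < \infty$ and hence $\dim_H(F) \leq D$. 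The same cover, together with the bounded overlap consequence of the open set condition, controls box-counts and gives $\dim_M(F) \leq D$.

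For the lower bound, I would construct the natural self-similar probability measure $\mu$ on $F$ associated with the weights $p_j := r_j^D$ (so that $\sum_j p_j = 1$). Explicitly, $\mu$ is the unique Borel probability measure on $F$ satisfying $\mu(A) = \sum_{j=1}^N p_j \, \mu(\Phi_j^{-1}(A))$, equivalently $\mu(E_J) = r_J^D$ for every word $J$. The key geometric input is the open set condition: there exists an open $V$ with $\bfPhi(V) \subset V$ and $\Phi_j(V) \cap \Phi_k(V) = \emptyset$ for $j \neq k$. Using $V$, I would show that there is a constant $C > 0$ such that for every $x \in \R^d$ and every $r > 0$, the ball $B(x,r)$ meets at most $C$ of the sets $E_J$ arising from the maximal antichain $\calW_r := \{J \in \scrJ : r_J \leq r < r_{J|(|J|-1)}\}$; this is a standard volume-packing argument for the disjoint scaled copies $\Phi_J(V)$, which have diameter comparable to $r$ and mutually disjoint interiors. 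Since $\mu(B(x,r)) \leq \sum_{J \in \calW_r,\, E_J \cap B(x,r) \neq \emptyset} r_J^D \leq C r^D$, the mass distribution principle yields $\haus^D(F) \geq \mu(F)/C = 1/C > 0$, so $\dim_H(F) \geq D$.

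Combining the two bounds with the general inequality $\dim_H(F) \leq \dim_M(F)$ produces $D \leq \dim_H(F) \leq \dim_M(F) \leq D$, whence $\dim_H(F) = \dim_M(F) = D$. The main obstacle is the packing/volume estimate at the heart of the lower bound: without the open set condition, the scaled copies $\Phi_J(V)$ may overlap uncontrollably, and one cannot conclude that only boundedly many of them meet a ball of radius $r$. Everything else is either an analytic property of $f(\sigma)$ or a direct computation using the multiplicativity of $r_J$ and $f(D) = 1$.
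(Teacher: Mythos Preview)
Your proposal is correct and is precisely the standard argument the paper points to in Remark~\ref{rmk:ProofOfMoransTheorem}: the paper does not supply its own proof but refers to \cite{Falc,Hut}, whose proof uses the mass distribution principle applied to the natural Hausdorff measure $\mu$ with weights $p_j=r_j^D$, exactly as you outline. The uniqueness of $D$, the covering upper bound via $\sum_{|J|=n}r_J^D=1$, and the lower bound via the open-set-condition packing estimate together with the mass distribution principle are the same ingredients.
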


\begin{remark}
\label{rmk:ProofOfMoransTheorem}
Moran's original result in \cite{Mor} is provided in dimension one, but it extends to any ambient dimension $d \geq 1$. Furthermore, the proof of Moran's Theorem, as presented in \cite{Falc} and \cite{Hut} for instance, makes use of the mass distribution principle (see mass distribution principle 4.2 in \cite{Falc}) and the ``natural mass distribution'' $\mu$. Recall that $\mu$ is the  self-similar measure determined, as described in the next section, by the self-similar system $\bfPhi$ and the probability vector  $\bfp=(r_1^D,\ldots,r_N^D)$. Note that $\mu$ is supported on the attractor (or self-similar set) $F$.
\end{remark}

\subsection{Self-similar measures and scaling regularity}
\label{sec:SelfSimilarityAndScalingRegularity}

The multifractal measures in the context of this paper are constructed as follows (cf.~\cite{CM,ELMR,Hut}). 

Let $\bfPhi=\{\Phi_j\}_{j=1}^N$ be a self-similar system that satisfies the open set condition with scaling ratios $\bfr=(r_1,\ldots,r_N)$, where $0<r_j<1$ for each $j=1,\ldots,N$. Let $\bfp=(p_1,\ldots,p_N)$ be a probability vector such that $0 \leq p_j \leq 1$ (hence $\sum_{j=1}^Np_j=1$). A self-similar measure $\mu$ supported on the attractor $F$ of the self-similar system $\bfPhi$ can be constructed via the probability vector $\bfp$ and the equation
\begin{align}
\label{eqn:MeasureSelfSimilarity}
\mu(E)	&= \sum_{j=1}^N p_j \cdot \mu(\Phi_j^{-1}(E)),
\end{align}
which holds for all compact subsets $E$ of $\R^d$. The self-similar measure $\mu$ is uniquely defined as the unique fixed point of the contraction implied by \eqref{eqn:MeasureSelfSimilarity} on the space of regular Borel measures with unit total mass equipped with the $L$-metric (see \cite{Hut}). In this setting, we refer to the pair $(\bfPhi,\bfp)$ as a {\it weighted self-similar system}. When $\bfp=(r_1^D,\ldots,r_N^D)$, the resulting self-similar measure $\mu$ is called the \emph{natural Hausdorff measure} (or \emph{natural mass distribution}) associated with the attractor $F$.

It is worth noting that the analysis of a given self-similar measure $\mu$  developed below depends only on the scaling ratios $\bfr$ of the corresponding self-similar system $\bfPhi$ and the probability distribution determined by $\bfp$.

\begin{notation}
\label{not:ProductsOfScalingRatios}
For each $J \in \scrJ$, let 
\begin{align*}
r_J=\prod_{q=1}^{|J|}r_{\pi_q(J)} \quad \textnormal{and} \quad
p_J=\prod_{q=1}^{|J|}p_{\pi_q(J)}. 
\end{align*}
Thus, for $J \in \scrJ_\infty$ and each $k \in \N$ we have $r_{J|k}=r_{j_1} \cdots r_{j_k}$ and $p_{J|k}=p_{j_1} \cdots p_{j_k}$.
\end{notation}

\begin{remark}
From our perspective, there is a specific reason to distinguish between the elements of $\scrJ_\infty$ and those of $\scrJ$. Indeed, the classical symbolic Hausdorff multifractal spectrum of Definition \ref{def:SymbolicHausdorffMultifractalSpectrum} below is defined in terms of the truncation of the elements of $\scrJ_\infty$ whereas our results, found in Section \ref{sec:MultifractalAnalysisViaScalingRegularityAndScalingZetaFunctions}, are stated in terms of elements of $\scrJ$.
\end{remark}

\begin{definition}
\label{def:ScalingRegularity}
Let $J \in \scrJ$. The \emph{scaling regularity} of $J$ is the value $A_{\bfr,\bfp}(J)$ given by 
\begin{align*}
A_{\bfr,\bfp}(J)	&:=\log_{r_J}{p_J} = \frac{\log{p_J}}{\log{r_J}},
\end{align*}
where $r_J$ and $p_J$ are defined in Notation \ref{not:ProductsOfScalingRatios}. Alternately, $A_{\bfr,\bfp}(J)$ is the unique real number $\alpha$ defined by $r_J^\alpha = p_J$.
\end{definition}

\begin{remark}
\label{rmk:ScalingAndCoarseRegularity}
Note that the scaling regularity $A_{\bfr,\bfp}$ depends only on the scaling ratios $\bfr$ of $\bfPhi$ and the probability vector $\bfp$ but not on the contracting similarities $\Phi_j \in \bfPhi$. Indeed, the results presented in this paper are independent of the contracting similarities themselves. However, we do require, as mentioned above, that a self-similar system $\bfPhi$ satisfies the open set condition. Also note that for each $J \in \scrJ$, the scaling regularity of $J$ coincides with the coarse H\"{o}lder regularity of $E_J$; see \cite{CM} and \cite{ELMR}, for instance.
\end{remark}

\subsection{The symbolic Hausdorff multifractal spectrum}
\label{sec:SymbolicHausdorffMultifractalSpectrum}

Self-similar measures are often called \emph{multifractal measures} since, as will be discussed, whenever a self-similar measure is \emph{not} the natural Hausdorff measure of the support, it decomposes the support into an amagalmation of fractal sets.

\begin{definition}
\label{def:SymbolicHausdorffMultifractalSpectrum}
Let $\mu$ be the self-similar measure determined by a weighted self-similar system $(\bfPhi,\bfp)$. The \emph{symbolic Hausdorff multifractal spectrum} $f_s$ of $\mu$ is given by
\begin{align*}
f_s(t)	&:=\dim_H\left\{\tau(J) : J \in \scrJ_\infty \textnormal{ and } \lim_{n \rightarrow \infty} A_{\bfr,\bfp}(J|n) =t \right\} 
\end{align*}
for $t\geq0$, where the map $\tau$ is defined at the very end of Notation \ref{not:SelfSimilarityWords} and the number $A_{\bfr,\bfp}(J)$ is given in Definition \ref{def:ScalingRegularity}. Here and henceforth, given $E \subset \R^d$, $\dim_H(E)$ denotes the Hausdorff dimension of $E$.
\end{definition}

The following proposition is a simplified version of a similar proposition in \cite{CM}.

\begin{proposition}
\label{prop:ClassicMultifractalSpectra}
Let $\mu$ be the unique self-similar measure on $\R^d$ defined by a weighted self-similar system $(\bfPhi,\bfp)$ which satisfies the open set condition. Then
\begin{align*}
f_g(t) = f_s(t), \qquad t\geq 0. 
\end{align*}
\end{proposition}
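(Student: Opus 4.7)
The plan is to reduce the equality $f_g(t) = f_s(t)$ to a pointwise identification: for each $x = \tau(J) \in F$ with $J \in \scrJ_\infty$, I want the local H\"older regularity $\lim_{r \to 0^+} \log \mu(B(x,r))/\log r$ to exist exactly when the scaling regularity limit $\lim_{n \to \infty} A_{\bfr,\bfp}(J|n)$ exists, and the two values to agree. Once this is in hand, the sets defining $f_g(t)$ and $f_s(t)$ coincide modulo the ``overlap set'' of points admitting more than one preimage under $\tau$; under the open set condition this overlap set sits inside the boundary of a self-similar tiling of $F$ and has strictly smaller Hausdorff dimension than $F$, so it is negligible for both multifractal spectra.

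To establish the pointwise identification I first iterate \eqref{eqn:MeasureSelfSimilarity} to obtain $\mu(\Phi_J(F)) = p_J$ for every finite word $J \in \scrJ$, using that under the open set condition the images $\Phi_j(F)$ overlap on a $\mu$-null set. For $x = \tau(J)$ with $J \in \scrJ_\infty$ and $r > 0$ small, I pick the unique $n = n(J,r)$ satisfying $r_{J|n}\, \diam(F) \leq r < r_{J|(n-1)}\, \diam(F)$. The inclusion $\Phi_{J|n}(F) \subseteq B(x,r)$ is immediate and yields the lower bound $\mu(B(x,r)) \geq p_{J|n}$. For the matching upper bound I invoke the standard geometric lemma supplied by the open set condition: there is a constant $C$, depending only on $\bfPhi$ and the separating open set, such that $B(x,r) \cap F$ is covered by at most $C$ cylinders $\Phi_{J'}(F)$ whose scaling ratios $r_{J'}$ are comparable to $r$. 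Dividing the resulting two-sided estimate by $\log r$ and letting $r \to 0^+$, the bounded multiplicative constants drop out, and the geometric limit is forced to coincide with $\lim_{n \to \infty} \log p_{J|n}/\log r_{J|n} = \lim_{n \to \infty} A_{\bfr,\bfp}(J|n)$.

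The hard part will be the upper bound in this pointwise identification, since the competing cylinders in the covering of $B(x,r)$ need not carry weight comparable to $p_{J|n}$, so a naive summation over them is insufficient. I would follow the adaptive stopping-time approach of Cawley and Mauldin \cite{CM}, cutting cylinders at variable lengths chosen so that their diameters are all of order $r$ and controlling the weight of each such cylinder in terms of its scaling regularity, with an error that is $o(\log r)$ after division. This delivers the desired identification on the full-measure set where $\tau$ is single-valued; combining it with a dimension estimate on the overlap set (a finite union of attractors of proper subsystems, each of Hausdorff dimension strictly less than $\dim_H F$) then yields the equality $f_g(t) = f_s(t)$ for every $t \geq 0$.
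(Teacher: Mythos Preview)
The paper does not supply its own proof of this proposition: it is introduced as ``a simplified version of a similar proposition in \cite{CM}'' and is stated without argument, with the reader referred to Cawley and Mauldin for the details. Your sketch is precisely the Cawley--Mauldin strategy that the paper is citing, so in that sense your approach coincides with what the paper points to.

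One small imprecision worth flagging: your description of the overlap set as ``a finite union of attractors of proper subsystems'' is not accurate in general. Under the open set condition the set of points with multiple $\tau$-preimages is contained in the orbit under $\bfPhi$ of the boundary of the separating open set; it has $\mu$-measure zero and (more relevantly) Hausdorff dimension strictly less than $\dim_H F$, but it need not decompose as attractors of subsystems. This does not affect the argument, since all you need is that its Hausdorff dimension is small enough not to alter either spectrum.
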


Due to Proposition \ref{prop:ClassicMultifractalSpectra} and the fact that scaling regularity plays a central role in Section \ref{sec:MultifractalAnalysisViaScalingRegularityAndScalingZetaFunctions}, for a self-similar measure $\mu$, focus is put on the symbolic Hausdorff multifractal spectrum $f_s$ in the remainder of this section. A more complete development of the properties of $f_s$ described here can be found in \cite[\S 1]{CM}.

\begin{remark}
\label{rmk:StructureOfMultifractalSpectrum}
The self-similar measure $\mu$ uniquely defined by a weighted self-similar system $(\bfPhi,\bfp)$ attains maximum and minimum scaling regularity values $\alpha_{\min}$ and $\alpha_{\max}$ which, in turn, define the compact interval on which the sets $E_t$ from Definition \ref{def:GeometricHausdorffMultifractalSpectrum} are nonempty. Define $\alpha_j:=\log_{r_j}{p_j}$ for each $j \in \{1,\ldots,N\}$. These extreme scaling regularity values are given by
\begin{align*}
\alpha_{\min} &= \min\left\{\alpha_j : j \in \{1,\ldots,N\} \right\}, \qquad \alpha_{\max} = \max\left\{\alpha_j : j \in \{1,\ldots,N\} \right\}.
\end{align*}

If $\bfp=(r_1^D,\ldots,r_N^D)$, where $D$ is the Hausdorff dimension of the attractor $F$ of the self-similar system $\bfPhi$ (that is, if $\mu$ is the natural Hausdorff measure of its support, which is the attractor $F$), then $\alpha_{\min}=\alpha_{\max}=D$ and the domain of $f_s$ is the singleton $\{D\}$. In general, the domain of $f_s$ is the collection of all nonnegative real values $t$ such that $A_{\bfr,\bfp}(J|n) =t$. Hence, the domain of $f_s$ is $[\alpha_{\min},\alpha_{\max}]$.

In the case where the domain of $f_s$ is a non-degenerate interval $[\alpha_{\min},\alpha_{\max}]$, we have that $f_s$ is concave and $f_s'$ is unbounded near $\alpha_{\min}$ and $\alpha_{\max}$. Moreover, the unique value $t_1$ such that $t_1=f_s(t_1)$ is the \emph{information dimension} of $\mu$, and 
\begin{align*}
\max\{ f_s(t) : t \in [\alpha_{\min},\alpha_{\max}] \}	&= \dim_H(F) = \dim_M(F),
\end{align*}
where $\dim_M(F)$ denotes the Minkowski (or box) dimension of $F$; see \cite[Chs.~2 \& 3]{Falc} for the definition of Hausdorff and Minkowski dimension. In the context of an \emph{ordinary fractal string} $\Omega$, we are also concerned with the \emph{inner} Minkowski dimension of its boundary $\partial\Omega$; see \cite[\S 1.1]{LapvF6} and Definition \ref{def:MinkowskiDimensionAndMeasurability} below.
\end{remark}

The following example is also studied in the context of partition zeta functions; see \cite{LapRo1}, \cite[\S 5.2]{ELMR}, and \cite{Rock}. 

\begin{example}[Measures on the Cantor set]
\label{eg:MeasuresOnTheCantorSet}
The Cantor set, denoted $C$, is the unique nonempty attractor of the lattice self-similar system $\bfPhi_C$ on $[0,1]$ given by the two contracting similarities $\varphi_1(x)=x/3$ and $\varphi_2(x)=x/3+2/3$ with scaling ratios $\bfr=(1/3,1,3)$. The Hausdorff dimension, and equivalently the Minkowski dimension, of $C$ is the unique real solution of the corresponding Moran equation (cf. \eqref{eqn:Moran}): $2\cdot3^{-s}=1$. Thus, $\dim_{H}C=\dim_{M}C=\log_{3}2=:D_C$. 

When $\bfPhi_C$ is weighted by $\bfp=(1/2,1/2)=(1/3^{D_C},1/3^{D_C})$, the corresponding self-similar measure is the natural mass distribution (i.e., the natural Hausdorff measure) $\mu_C$ of the Cantor set. Such measures are used to find lower bounds on the Hausdorff dimension of their supports; see \cite[Ch.~9]{Falc} and Remark \ref{rmk:NaturalHausdorffMeasures} below.

When $\bfPhi_C$ is weighted by $\bfp=(1/3,2/3)$, we obtain the self-similar measure $\beta$, called the \emph{binomial measure}, which exhibits the following properties: $\alpha_{\min}=1-\log_{3}2$; $\alpha_{\max}=1$; and for $t \in [1-\log_{3}2,1]$, the geometric (and symbolic) Hausdorff multifractal spectrum is given by
\begin{align*}
f_g(t)	&= f_s(t)= -\left(\frac{1-t}{\log_{3}2}\right)\log_{3}\left(\frac{1-t}{\log_{3}2}\right) -\left(1-\frac{1-t}{\log_{3}2}\right)\log_{3}\left(1-\frac{1-t}{\log_{3}2}\right).
\end{align*}
See \cite[\S 5.2]{ELMR} and \cite{Rock} for details.

Note that, in the case of $\mu_C$, the only scaling regularity value attained by any corresponding word $J$ is $A_{\bfr,\bfp}(J)=\log_{3}2$. However, in the case of $\beta$, the attained scaling regularity values depend on $\bfk=(k_1,k_2)$ where $\sum\bfk:=k_1+k_2=|J|$, $k_1$ denotes the number of times 1 appears in $J$, and $k_2$ denotes the number of times 2 appears in $J$. Moreover, for a fixed $n \in \N$ with $\sum\bfk=n$, we have 
\begin{align*}
\#\left\{J \in \scrJ_{n} : \#\left\{q: \pi_q(J)=j \right\} = k_j, j\in \left\{1,2\right\} \right\}		&=\frac{n!}{k_1!k_2!}=:\binom{\sum\bfk}{\bfk},
\end{align*}
where, in general, $\binom{\sum\bfk}{\bfk}$ denotes the multinomial coefficient and the projection $\pi_q$ is defined in Notation \ref{not:SelfSimilarityWords}. 

This decomposition of the words associated with a given weighted self-similar system 
via scaling regularity along with the corresponding multinomial coefficients lies at the heart of the approach to multifractal analysis developed in Section \ref{sec:MultifractalAnalysisViaScalingRegularityAndScalingZetaFunctions} below. 
\end{example}

The multifractal spectrum of a self-similar measure $\mu$ developed in Section \ref{sec:MultifractalAnalysisViaScalingRegularityAndScalingZetaFunctions} of this paper is determined by the abscissae of convergence of the $\alpha$-scaling zeta functions. The motivation for this approach, and the analogous approach taken in \cite{LapRo1} and \cite{ELMR}, is a classic theorem of Besicovitch and Taylor and its significant extension which is at the heart of the theory of complex dimensions of fractal strings developed in \cite{LapvF6}. (See Theorem 1.10 of \cite{LapvF6} along with Theorem \ref{thm:BesicovitchAndTaylor} below.)

\subsection{Besicovitch subsets of the attractor of a self-similar system}
\label{sec:BesicovitchSubsetsOfAttractor}

A probability vector can be used not only to define a self-similar measure supported on a self-similar set, but also to decompose the support of such a measure into a family of disjoint subsets with interesting fractal properties of their own.

\begin{definition}
\label{def:BesicovitchSubsets}
Let $F$ be the attractor of a self-similar system $\bfPhi$ and let $\bfq=(q_1,\ldots,q_N)$ be a probability vector. The \emph{Besicovitch subset} $F(\bfq) \subset F$ is defined as follows:
\begin{align*}
F(\bfq) &:= \left\{ x \in F : \lim_{k\rightarrow \infty} \frac{\#_j(x|_k)}{k} = q_j, \hs j \in \{1,\ldots,N\} \right\},
\end{align*}
where $x|_k$ is the truncation of $x$ at its $k$-th term in the expansion implied by $\bfPhi$ via $\tau(J)=x$ and $\#_j(x|_k)$ is the number of times the term $j$ appears in $x|_k$ (really, the number of times $j$ appears in $J|k$).
\end{definition}

\begin{remark}
\label{rmk:DensityOfBesicovitchSubsets}
A little thought shows that, for a probability vector $\bfq$, the Besicovitch subset $F(\bfq)$ is dense in the support $F$. That is, $\overline{F(\bfq)}=F$, where $\overline{F(\bfq)}$ is the closure of $F(\bfq)$. Furthermore, if the Minkowski dimension $\dim_{M}(F)$ exists (i.e., if the upper and lower Minkowski (or box) dimensions of $F$ coincide), then Proposition 3.4 of \cite{Falc} implies that $\dim_{M}(F(\bfq))=\dim_{M}(F)$. (Actually, the upper and lower Minkowski dimensions of a set are always equal, respectively, to those of its closure.) Throughout this paper, either the Minkowski or Hausdorff dimension of a given set will be considered, depending on the context.
\end{remark}

Care needs to be taken in the above definition. Some $x \in F$ may have more than one $J \in \scrJ_\infty$ where $x=\tau(J)$: however, this has no effect on the following proposition regarding the Hausdorff dimension of a Besicovitch subset $F(\bfq)$. See \cite{CM} for a proof of the following proposition, and see \cite{Bes1} and \cite{Egg} for related classical results.

\begin{proposition}
\label{prop:HausdorffDimensionsOfBesicovitchSubsets}
Let $F(\bfq)$ be the Besicovitch subset of the attractor $F$ of a self-similar system $\bfPhi$ with scaling ratios $\bfr=(r_1,\ldots,r_N)$ determined by a probability vector $\bfq$. Then
\begin{align*}
\dim_H(F(\bfq)) &= \frac{\sum_{j=1}^N q_j \log{q_j}}{\sum_{j=1}^N q_j \log{r_j}}.
\end{align*}
\end{proposition}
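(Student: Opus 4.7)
The plan is to prove matching upper and lower bounds on $\dim_H(F(\bfq))$ by a two-step argument: cover $F(\bfq)$ by cylinder sets corresponding to words whose empirical letter-frequencies are close to $\bfq$, and construct a natural Bernoulli-type measure on $F(\bfq)$ to which one applies the mass distribution principle. Set
$$s_0 := \frac{\sum_{j=1}^N q_j \log q_j}{\sum_{j=1}^N q_j \log r_j};$$
since $q_j \in [0,1]$ and $r_j \in (0,1)$, both numerator and denominator are nonpositive with the denominator strictly negative, so $s_0 \geq 0$.

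For the lower bound, I would define the self-similar probability measure $\nu_\bfq$ supported on $F$ by iterating equation \eqref{eqn:MeasureSelfSimilarity} with the probability vector $\bfq$ in place of $\bfp$; equivalently, $\nu_\bfq(E_J) = q_J$ for every cylinder $E_J$, using the notation of Notation \ref{not:ProductsOfScalingRatios} applied to $\bfq$. Viewing the coordinates $\pi_q(J)$ for $J$ sampled by $\nu_\bfq$ as i.i.d.\ $\{1,\ldots,N\}$-valued random variables with distribution $\bfq$, the strong law of large numbers gives $\#_j(J|k)/k \to q_j$ almost surely, so $\nu_\bfq(F(\bfq)) = 1$. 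For $x = \tau(J) \in F(\bfq)$ and $r>0$ small, let $k = k(x,r)$ be the unique index with $r_{J|k} \geq r > r_{J|(k+1)}$. The open set condition (Remark \ref{rmk:OSCThroughout}) bounds the number of cylinders $E_{J'}$ with $|J'|=k$ meeting $B(x,r)$ by a constant $C$ depending only on $\bfPhi$, whence $\nu_\bfq(B(x,r)) \leq C\, q_{J|k}$. Dividing $\log q_{J|k}$ by $\log r_{J|k}$ and applying the strong law once more yields
$$\lim_{r \to 0^+} \frac{\log \nu_\bfq(B(x,r))}{\log r} = s_0 \quad \text{for $\nu_\bfq$-a.e.\ } x \in F(\bfq),$$
and the mass distribution principle then gives $\dim_H(F(\bfq)) \geq s_0$.

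For the upper bound, fix $\epsilon > 0$ and $k \in \N$ and cover $F(\bfq)$ by those cylinders $E_J$ with $|J|=k$ whose empirical frequencies $\#_j(J)/k$ lie within $\epsilon$ of $q_j$ for every $j$. By the very definition of $F(\bfq)$, there exists $K_\epsilon$ such that every $x = \tau(J) \in F(\bfq)$ satisfies this frequency condition for $J|k$ whenever $k \geq K_\epsilon$. Stirling's formula bounds the number of such typical words by $\exp(k(-\sum_{j=1}^N q_j \log q_j + \eta(\epsilon)))$, while each typical cylinder has diameter at most $\diam(V)\cdot r_J \leq \diam(V)\exp(k(\sum_{j=1}^N q_j \log r_j + \eta(\epsilon)))$, where $\eta(\epsilon) \to 0$ as $\epsilon \to 0$. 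For any $s > s_0$, the $s$-dimensional presum $\sum_J (\diam E_J)^s$ tends to $0$ as $k \to \infty$ once $\epsilon$ is chosen small enough that $-\sum q_j \log q_j + s \sum q_j \log r_j + (1+s)\eta(\epsilon) < 0$. Hence $\dim_H(F(\bfq)) \leq s$, and letting $s \downarrow s_0$ closes the estimate.

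The main obstacle I anticipate is the careful passage from pointwise convergence of empirical frequencies to uniform control at each finite scale $k$. In the upper bound this is handled by a diagonal argument over a sequence $\epsilon_m \downarrow 0$: decompose $F(\bfq)$ according to the thresholds $K_{\epsilon_m}$ beyond which a given point has $\epsilon_m$-typical truncations, and cover each piece from that scale onward, so each piece contributes zero to the $s$-dimensional Hausdorff measure for $s > s_0$. In the lower bound this same issue is absorbed into the almost-sure application of the strong law to $\nu_\bfq$. A secondary technical point is the comparability between arbitrary covers of $F(\bfq)$ and cylinder covers, which follows routinely from the open set condition together with the existence of a bounded open set $V$ with $\bfPhi(V)\subset V$.
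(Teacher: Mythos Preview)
The paper does not supply its own proof of this proposition; it simply refers the reader to Cawley--Mauldin \cite{CM} (and to \cite{Bes1}, \cite{Egg} for the classical antecedents). Your outline is exactly the standard argument one finds in that literature: the self-similar measure $\nu_\bfq$ plus the strong law and the mass distribution (Billingsley/Frostman) lemma for the lower bound, and a typical-cylinder cover counted via Stirling for the upper bound. So there is no meaningful divergence to report.

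One correction worth making explicit: in your upper-bound paragraph you first assert the existence of a \emph{single} threshold $K_\epsilon$ valid for every $x\in F(\bfq)$, which is false because the frequency convergence in Definition~\ref{def:BesicovitchSubsets} is only pointwise, not uniform. You do recognize and repair this in your final paragraph via the countable decomposition $F(\bfq)=\bigcup_K\{x:K(x,\epsilon)\le K\}$ and the countable stability of Hausdorff measure; that fix is correct and is how the classical proofs handle it, so simply build it into the argument from the start rather than presenting it as an afterthought.
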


\section{Fractal strings and complex dimensions}
\label{sec:FractalStringsAndComplexDimensions} 

The material found in this section provides a brief summary of pertinent results from the theory of complex dimensions of fractal strings developed by Lapidus and van Frankenhuijsen in \cite{LapvF6}.

\subsection{Generalized fractal strings}
\label{sec:GeneralizedFractalStrings}

For a (local) measure $\eta$ on $(0,\infty)$, denote the total variation of $\eta$ by $|\eta|$. For a bounded measurable set $S$ we have,\footnote{A \emph{local measure} $\eta$ on $(0,\infty)$ is a $\C$-valued function on the Borel $\sigma$-algebra of $(0,\infty)$ whose restriction to any bounded subinterval is a complex measure. If $\eta$ is $[0,\infty]$-valued, then $\eta$ is simply a locally bounded positive measure on $(0,\infty)$  and it is called a \emph{local positive measure}.}
\begin{align*}
|\eta|(S)=\sup\left\{\sum_{k=1}^{m}|\eta(S_k)|\right\},
\end{align*}
where $m \in \N$ and $\{S_k\}_{k=1}^m$ ranges over all finite partitions of $S$ into disjoint measurable subsets of $(0,\infty)$. Recall that $|\eta|=\eta$ if $\eta$ is positive and that $|\eta|$ is a positive measure.

\begin{definition}
\label{def:GeneralizedFractalString}
A \emph{generalized fractal string} is either a local complex or a local positive measure $\eta$ on $(0,\infty)$ which is supported on a subset of $(x_0,\infty)$ for some $x_0>0$. The \emph{dimension} of $\eta$, denoted $D_\eta$, is the \emph{abscissa of convergence} of the Dirichlet integral $\zeta_{|\eta|}(\sigma)=\int_0^\infty x^{-\sigma}|\eta|(dx)$. That is,
\begin{align*}
D_\eta	&:=\inf\left\{\sigma \in \R \,|\, \int_0^\infty x^{-\sigma}|\eta|(dx) < \infty \right\}.
\end{align*} 
The \emph{geometric zeta function} of $\eta$ is the Mellin transform of $\eta$ given by
\begin{align*}
\zeta_\eta(s)	&=\int_0^\infty x^{-s}\eta(dx),
\end{align*}	
for $\textnormal{Re}(s)>D_\eta$. Let $W \subset \C$ be a window\footnote{As in \cite{LapvF6}, we are interested in the meromorphic extension of the geometric zeta function $\zeta_\calL$ to suitable regions. To this end, consider the \emph{screen} $S$ as the contour 
\begin{align*}
S:S(t)+it \quad &(t \in \R),
\end{align*}
where $S(t)$ is a Lipschitz continuous function $S:\R \to [-\infty,D_\calL]$. Also, consider the \emph{window} $W$ as the closed set 
\begin{align*}
W	&= \{s \in \C : \textnormal{Re}(s) \geq S(\textnormal{Im}(s))\}
\end{align*}
and assume that $\zeta_\eta$ has a meromorphic continuation to an open connected neighborhood of $W$ satisfying suitable polynomial growth conditions (as in \cite[\S 5.3]{LapvF6}).} on an open neighborhood of which $\zeta_\eta$ has a meromorphic extension. By a mild abuse of notation, both the geometric zeta function of $\eta$ and its meromorphic extension are denoted by $\zeta_\eta$. The set of (\emph{visible}) \emph{complex dimensions} of $\eta$, denoted by $\calD_\eta$, is given by
\begin{align*}
\calD_\eta	&= \left\{ \omega \in W : \zeta_\eta \textnormal{ has a pole at } \omega \right\}.
\end{align*}
\end{definition}
In the case where $\zeta_\eta$ has a meromorphic extension to $W=\C$, the set $\calD_\eta$ is referred to as the \emph{complex dimensions} of $\eta$.

Generalized fractal strings have two notable predecessors: fractal strings and ordinary fractal strings.

\begin{definition}
\label{def:FractalString}
A \emph{fractal string} $\calL=\{\ell_j\}_{j=1}^\infty$ is a nonincreasing sequence of positive real numbers which tend to zero. Hence, $\lim_{j \rightarrow \infty}\ell_j=0$.
\end{definition}

\begin{remark}
\label{rmk:FractalStringClassicSetting}
As in \cite{LapPeWin}, but unlike in the classic geometric setting of \cite{LapvF6}, we do not require $\sum_{j=1}^\infty\ell_j < \infty$. That is, in \cite{LapvF6} an \emph{ordinary fractal string} is a bounded open subset $\Omega$ of $\R$ and $\calL$ denotes the sequence of lengths of the disjoint open intervals whose union is $\Omega$; see, e.g., \cite{LapPo1}, \cite[Chs.~1 \& 2]{LapvF6}, and \cite{LapPeWin}. We note, however, that in \cite{LapPo2}, \cite[Chs. 3 \& 10]{LapvF6}, and \cite{LapPeWin,LapPeWi2}, for example, the underlying sequence of scales is allowed to satisfy $\sum_{j=1}^\infty\ell_j=\infty$. 
\end{remark}

There is a natural relationship between discrete generalized fractal strings $\eta$ and fractal strings $\calL$. Recall that the Dirac mass at $x \in (0,\infty)$, denoted by $\delta_{\lbrace x\rbrace}$, is the measure given by
\begin{align*}
\delta_{\lbrace x \rbrace}(S) &:= 
\begin{cases}
1,& x \in S, \\
0,& x \notin S,
\end{cases}
\end{align*} 
where $S \subseteq \R$ (for example). So, a fractal string 
\begin{align*}
\calL	&=\{\ell_j\}_{j=1}^{\infty}=\{l_n\,|\, l_n\textnormal{ has multiplicity }m_n,n \in \N\}
\end{align*} 
defines the generalized fractal string $\eta$ as follows:
\begin{align*}
\eta		&= \sum_{j=1}^{\infty}\delta_{\{\ell_j^{-1}\}}=\sum_{n=1}^{\infty}m_n\delta_{\lbrace l_n^{-1}\rbrace}.
\end{align*}
For such $\eta$, it immediately follows that $\zeta_\calL = \zeta_\eta$ and $\calD_\calL=\calD_\eta$. 

The following theorem, which is Theorem 1.10 of \cite{LapvF6}, is a restatement of a classical theorem of Besicovitch and Taylor (see \cite{BesTa}) formulated in terms of ordinary fractal strings, as first observed in \cite{Lap2}. (A direct proof can be found in \cite{LapvF6}, \emph{loc.~cit.}) For the definition of (inner) Minkowski dimension as used below, see \cite[\S 1.1]{LapvF6} and Definition \ref{def:MinkowskiDimensionAndMeasurability} below.

\begin{theorem}
\label{thm:BesicovitchAndTaylor}
Suppose $\Omega$ is an ordinary fractal string with infinitely many lengths denoted by $\calL$. Then the abscissa of convergence of $\zeta_\calL$ coincides with the \emph{(}inner\emph{)}Minkowski dimension of $\partial\Omega$. That is, $D_\calL=\dim_M(\partial\Omega)$, where $\dim_M(\partial\Omega)$ denotes the \emph{(}inner\emph{)} Minkowski dimension of $\partial\Omega$.\footnote{Unlike in \cite{BesTa}, the fact that the inner Minkowski Dimension is used in \cite{LapvF6} allows for not requiring any additional assumptions about $\partial\Omega$ or about $\Omega$.}
\end{theorem}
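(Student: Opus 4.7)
The plan is to identify both $D_\calL$ and $\dim_M(\partial\Omega)$ with the polynomial growth exponent of the counting function $N(x) := \#\{j : \ell_j^{-1} \leq x\}$, which is finite for each $x > 0$ and tends to $+\infty$ as $x \to \infty$ (since $\calL$ is infinite). For the zeta-function side, $\zeta_\calL(s) = \sum_j \ell_j^s$ is a generalized Dirichlet series in the reciprocal lengths $\ell_j^{-1} \to \infty$. A standard Abel-summation argument, or the integral representation
\[
\zeta_\calL(s) \,=\, s\int_0^\infty N(x)\, x^{-s-1}\, dx \quad \text{for } \textnormal{Re}(s) > D_\calL,
\]
yields the Cahen--Jensen type formula
\[
D_\calL \,=\, \limsup_{x \to \infty} \frac{\log N(x)}{\log x}.
\]

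Next, I would compute the inner $\varepsilon$-neighborhood volume $V(\varepsilon) := \leb(\{x \in \Omega : d(x,\partial\Omega) < \varepsilon\})$. Since $\Omega$ is a disjoint union of open interval components of lengths $\ell_j$ and, for $x$ in such a component, the closest points of $\partial\Omega$ are the two endpoints of that component, the intersection of the inner neighborhood with a component of length $\ell_j$ has Lebesgue measure $\min(2\varepsilon, \ell_j)$. Summing over components gives the exact formula
\[
V(\varepsilon) \,=\, 2\varepsilon\, N\!\left(\tfrac{1}{2\varepsilon}\right) + \sum_{\ell_j < 2\varepsilon} \ell_j.
\]
The second, ``tail'' sum I would rewrite via Abel summation as an integral of $N(t)/t^2$ over $[1/(2\varepsilon), \infty)$, which reduces it to a bound of the same order as the head term.

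Both inequalities $D_\calL \leq \dim_M(\partial\Omega)$ and $\dim_M(\partial\Omega) \leq D_\calL$ then follow from elementary comparisons. The trivial lower bound $V(\varepsilon) \geq 2\varepsilon\, N(1/(2\varepsilon))$ immediately forces $\dim_M(\partial\Omega) \geq \limsup_{x \to \infty} \log N(x)/\log x = D_\calL$; conversely, if $N(x) = O(x^{D_\calL + \delta})$ for a small $\delta > 0$, then the head term is $O(\varepsilon^{1 - D_\calL - \delta})$ and the tail integral admits a bound of the same order (provided $D_\calL + \delta < 1$), yielding $V(\varepsilon) = O(\varepsilon^{1 - D_\calL - \delta})$ and hence $\dim_M(\partial\Omega) \leq D_\calL + \delta$; letting $\delta \to 0^+$ closes the loop. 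The main technical obstacle is the Abel-summation estimate on the tail, which must be sharp enough to match the head term: the borderline case $D_\calL = 1$ (at which the standard estimates degenerate) needs separate handling, and one must consistently take $\dim_M$ to mean the \emph{upper} inner Minkowski dimension, so that it pairs with the $\limsup$ appearing on the zeta-function side.
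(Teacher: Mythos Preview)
Your approach is correct and is precisely the standard argument. Note, however, that the paper does \emph{not} supply its own proof of this theorem: it merely states the result and refers the reader to \cite[Theorem~1.10]{LapvF6} (and to \cite{BesTa,Lap2} for its history), remarking that ``a direct proof can be found in \cite{LapvF6}, \emph{loc.~cit.}'' The proof you outline---identifying both quantities with the growth exponent of the geometric counting function $N(x)$, via Cahen's formula on the zeta side and the exact decomposition $V(\varepsilon)=2\varepsilon\,N(1/(2\varepsilon))+\sum_{\ell_j<2\varepsilon}\ell_j$ on the geometric side, then closing with an Abel-summation estimate on the tail---is essentially the argument given in \cite{LapvF6}. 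Your caveats about the borderline case $D_\calL=1$ and about reading $\dim_M$ as the upper inner Minkowski dimension are apt; for an ordinary fractal string the case $D_\calL=1$ is handled trivially since $\sum_j\ell_j<\infty$ forces $V(\varepsilon)=O(1)=O(\varepsilon^{0})$.
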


In terms of such meromorphic extensions, a given geometric zeta function encountered throughout this paper falls into one of two categories: (i) the geometric zeta function has a meromorphic extension to all of $\C$, or (ii) the geometric zeta function is similar to a hypergeometric series and, hence, in general, the appropriate extension is yet to be determined (see Section \ref{sec:FurtherResultsAndFutureWork} and \cite{BatErd,EssLap}). For a self-similar ordinary fractal string (see Definition \ref{def:SelfSimilarLatticeOrdinaryFractalStrings}), the geometric zeta function has a closed form which allows for a meromorphic extension to all of  $\C$.  This closed form is given in the following theorem, which is Theorem 2.3 in \cite{LapvF6}. Compare this theorem with Theorem \ref{thm:ClosedFormOfGeometricZetaFunction} below, which is a completely analogous result regarding the closed form of the geometric zeta function of a \emph{generalized self-similar string}.

\begin{theorem}
\label{thm:GeometricZetaFunctionOrdinarySelfSimilarString}
Let $\Omega$ be a self-similar ordinary fractal string with lengths $\calL$. Then the geometric zeta function $\zeta_\calL$ has a meromorphic continuation to the whole complex plane, given by
\begin{align}
\label{eqn:GeometricZetaFunctionOrdinarySelfSimilarString}
\zeta_\calL(s)	&=\frac{L^s\sum_{k=1}^Kg_k^s}{1-\sum_{j=1}^Nr_j^s}, \quad s \in \C.
\end{align}
Here, $L=\zeta_\calL(1)$ is the total length of $\Omega$.
\end{theorem}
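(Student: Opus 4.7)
The plan is to exploit the natural self-similar decomposition of $\Omega = I\setminus F$ to derive a functional equation for $\zeta_\calL$, from which the closed form \eqref{eqn:GeometricZetaFunctionOrdinarySelfSimilarString} drops out algebraically, and then to read off the meromorphic extension to $\C$ from this closed form. Using the fixed-point property $F = \bfPhi(F) = \bigcup_{j=1}^N \Phi_j(F)$, I would first write
\begin{align*}
I\setminus F &= (I\setminus \bfPhi(I))\,\cup\,\bigcup_{j=1}^N \bigl(\Phi_j(I)\setminus \Phi_j(F)\bigr).
\end{align*}
By the open set condition, the pieces on the right overlap on at most a set of Lebesgue measure zero; the first piece $I\setminus\bfPhi(I)$ contributes the $K$ gap lengths $Lg_1,\ldots,Lg_K$ (in the scale convention of \cite[Ch.~2]{LapvF6}, where $L=|I|$ and the $g_k$ are normalized so that $\sum_{k=1}^K g_k + \sum_{j=1}^N r_j = 1$), while the $j$th remaining piece $\Phi_j(I)\setminus\Phi_j(F)$ is a similitude-shrunken copy of $\Omega$ with scaling ratio $r_j$ and so contributes the multiset $\{r_j\ell : \ell\in\calL\}$ to $\calL$.

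Translating this bijective partition of lengths into a Dirichlet series identity valid in the absolute convergence region $\{\textnormal{Re}(s) > D_\calL\}$, I obtain
\begin{align*}
\zeta_\calL(s) &= \sum_{k=1}^K (Lg_k)^s + \sum_{j=1}^N \sum_{\ell\in\calL}(r_j\ell)^s = L^s\sum_{k=1}^K g_k^s + \Bigl(\sum_{j=1}^N r_j^s\Bigr)\zeta_\calL(s).
\end{align*}
Solving the resulting linear equation in $\zeta_\calL(s)$ yields exactly \eqref{eqn:GeometricZetaFunctionOrdinarySelfSimilarString}. Since the right-hand side of that identity is a quotient of two entire functions (finite exponential sums in $s$), it is meromorphic on all of $\C$, which is the desired extension. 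The identity $L = \zeta_\calL(1)$ then follows from the observation that the hypothesis $\sum_{j=1}^N r_j < 1$ forces, via Theorem \ref{thm:MoransTheorem}, the Moran exponent $D < 1$, so that $F$ has Lebesgue measure zero and $\zeta_\calL(1) = \sum_j \ell_j$ equals the total length $L$ of $I$; plugging $s=1$ into \eqref{eqn:GeometricZetaFunctionOrdinarySelfSimilarString} is consistent with this in view of $\sum_k g_k = 1 - \sum_j r_j$.

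The main obstacle is the careful accounting of lengths in the self-similar decomposition, to ensure that every length of $\calL$ is counted exactly once: this requires invoking the open set condition to guarantee that the scaled copies $\Phi_j(I)$ overlap at most on a Lebesgue-null boundary set, and using the hypothesis that the endpoints of $I$ lie in some $\Phi_{j_i}(I)$ so that no extraneous first-level gap appears at the ends of $I$. A secondary technical point, easily dispatched, is justifying the term-by-term scaling $(r_j\ell)^s = r_j^s\ell^s$ and the rearrangement of the infinite sum in the Dirichlet series; both are legitimate within the half-plane of absolute convergence, which is nonempty since $D_\calL < \infty$ by a further application of Moran's theorem (the abscissa of convergence is bounded above by the Moran exponent $D$).
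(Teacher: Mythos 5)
Your argument is correct, but it takes a genuinely different route from the paper's treatment. The paper does not prove Theorem \ref{thm:GeometricZetaFunctionOrdinarySelfSimilarString} itself (it is quoted as Theorem 2.3 of \cite{LapvF6}); the proof it does supply, for the completely analogous Theorem \ref{thm:ClosedFormOfGeometricZetaFunction} on generalized self-similar strings, is a direct summation over words: every length has the form $g_k r_J$ with $J\in\scrJ$, the words of length $q$ contribute $\left(\sum_{j=1}^N r_j^s\right)^q$ (or $\left(\sum_{j=1}^N m_j r_j^s\right)^q$ in the weighted case), and summing the geometric series gives the closed form, which is then continued meromorphically. You instead derive a renewal-type functional equation, $\zeta_\calL(s)=L^s\sum_{k=1}^K g_k^s+\left(\sum_{j=1}^N r_j^s\right)\zeta_\calL(s)$, from the decomposition $\Omega=(I^o\setminus\bfPhi(I))\cup\bigcup_{j=1}^N\Phi_j(\Omega)$ and solve it; iterating your equation reproduces the paper's word expansion, so the two proofs are two sides of the same identity, but they buy different things. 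The direct summation works purely at the level of the length multiset $\{Lg_kr_J\}$ and so avoids all geometric bookkeeping, whereas your decomposition explains \emph{why} the lengths are exactly $Lg_kr_J$, at the price of checking that the endpoints of each $\Phi_j(I)$ lie in $F$ (this is where the endpoint hypothesis of Definition \ref{def:SelfSimilarLatticeOrdinaryFractalStrings} enters), so that components of $\Omega$ coming from different pieces neither merge nor split --- a point you correctly flag as the main obstacle. Two small remarks: your normalization $\sum_{k=1}^K g_k+\sum_{j=1}^N r_j=1$ with actual gap lengths $Lg_k$ is indeed the convention of \cite[Ch.~2]{LapvF6} that produces the factor $L^s$ (the paper's Definition \ref{def:SelfSimilarLatticeOrdinaryFractalStrings} uses unscaled gaps, so making the convention explicit was the right call); and the appeal to Moran's theorem to bound $D_\calL$ is unnecessary and slightly circular-sounding --- since $\sum_j\ell_j<\infty$ and the lengths are bounded, the Dirichlet series converges absolutely for $\textnormal{Re}(s)\geq 1$, where moreover $\sum_{j=1}^N r_j^{\textnormal{Re}(s)}\leq\sum_{j=1}^N r_j<1$, which is all your manipulation needs before analytic continuation takes over.
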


\begin{example}[The Fibonacci string]
\label{eg:TheFibonacciString}


Consider the lattice self-similar system on the interval $[0,4]$ given by $\Phi_1(x)=x/2$ and $\Phi_2(x)=x/4+3$ (i.e., $r_1=1/2$ and $r_2=1/4=1/2^2$). This self-similar system generates an attractor $F$ and a lattice ordinary fractal string $\Omega=[0,4] \setminus F$ whose lengths are given by the Fibonacci string $\calL_{\textnormal{Fib}}$. (See \cite[\S2.3.2]{LapvF6}.) The Fibonacci string $\calL_{\textnormal{Fib}}$ is the fractal string with distinct lengths given by $l_n=2^{-n}$ occurring with multiplicity $m_n=F_n$, where $F_n$ is the $n$th Fibonacci number and $n \in \N \cup\{0\}$. (Hence, $F_0=F_1=1,F_2=2,\ldots$.)  Via Theorem \ref{thm:GeometricZetaFunctionOrdinarySelfSimilarString}, the geometric zeta function of $\calL_{\textnormal{Fib}}$ is given by
\begin{align}
\label{eqn:GeometricZetaFunctionFibonacci}
\zeta_{\textnormal{Fib}}(s)	&:= \zeta_{\calL_{\textnormal{Fib}}}(s) =\sum_{n=0}^{\infty}F_{n}2^{-ns}=\frac{1}{1-2^{-s}-4^{-s}} 
\end{align}
for $s \in \C$, and the dimension $D_\textnormal{Fib}$ is the unique real-valued solution of the equation 
\begin{align}
\label{eqn:DFibSolution}
2^{-2s}+2^{-s}	&=1, \quad s \in \C. 
\end{align}
Moreover, the complex dimensions of $\calL_{\textnormal{Fib}}$ are the complex roots of \eqref{eqn:DFibSolution}.\footnote{These roots are obtained by solving the quadratic equation $z^2+z-1=0$ with $z=2^{-s}, s \in \C$.} Thus, we have
\begin{align}
\label{eqn:FibonacciComplexDimensions}
\calD_{\textnormal{Fib}}	&=\left\{D_{\textnormal{Fib}}+izp : z \in \Z \right\} \cup \left\{-D_{\textnormal{Fib}}+i(z+1/2)p : z \in \Z \right\},
\end{align}
where $\varphi=(1+\sqrt{5})/2$ is the Golden Ratio, $D_{\textnormal{Fib}}=\log_2{\varphi}$, and the oscillatory period is $p = 2\pi/\log{2}$.

Accordingly, the ordinary self-similar fractal string $[0,4] \setminus F$ is lattice in the sense of Definition \ref{def:SelfSimilarLatticeOrdinaryFractalStrings}. Furthermore, as they are given by the Fibonacci numbers, the multiplicities $m_n=F_n$ satisfy the linear recurrence relation for $n \geq 2$ given by
\begin{align}
\label{eqn:FibonacciLinearRecurrenceRelation}
F_n	&= F_{n-1}+F_{n-2},
\end{align}
with initial conditions $\bfs_\textnormal{Fib}:=(F_0,F_1)=(1,1)$.
\end{example}

Connections between generalized lattice strings and linear recurrence relations are examined in Section \ref{sec:GeneralizedLatticeStringsAndLinearRecurrenceRelations}.

\subsection{Minkowski measurability and lattice/nonlattice dichotomy}
\label{sec:MinkowskiMeasurabilityAndLatticeNonlatticeDichotomy}

The following theorem is a partial restatement of Theorem 8.15 of \cite{LapvF6} that provides a criterion for the Minkowski measurability of an ordinary fractal string $\Omega$ satisfying certain mild restrictions. First, we introduce a few useful tools. (See \cite[\S 1.1]{LapvF6} for more information such as detailed definitions of lower and upper Minkowski contents for ordinary fractal strings.)

\begin{definition}
\label{def:MinkowskiDimensionAndMeasurability}
Let $V(\varepsilon)$ be the $1$-dimensional Lebesgue measure of the \emph{inner tubular neighborhood} of $\Omega$ given by the set $\{x\in\Omega:d(x,\partial\Omega)<\varepsilon\}$. The \emph{inner
Minkowski dimension} of $\partial\Omega$, denoted $D=D_\calL$, is given by
\[ 
D= \inf\{t\geq0 :V(\varepsilon)=O(\varepsilon^{t}) \textnormal{ as } \varepsilon \to 0^+\}.
\]
The boundary $\partial\Omega$ of an ordinary fractal string $\Omega$ is \emph{Minkowski measurable} if the limit $\lim_{\varepsilon \rightarrow 0^+}V(\varepsilon)\varepsilon^{D-1}$ exists in $(0,\infty)$. 
\end{definition}

In the following theorem, the equivalence of statements (ii) and (iii) below for an arbitrary ordinary fractal string is established in \cite{LapPo1} (without any conditions on $\calL$ other than $D_\calL\neq 0,1$).

\begin{theorem}[Criterion for Minkowski measurability]
\label{thm:CriterionForMinkowskiMeasurability}
Let $\Omega$ be an ordinary fractal string whose geometric zeta function $\zeta_\calL$ has a meromorphic extension which satisfies certain mild growth conditions.\footnote{Specifically, Theorem \ref{thm:CriterionForMinkowskiMeasurability} holds if $\zeta_\calL$ is \emph{languid} (see \cite[Def.~5.2]{LapvF6}) for a screen passing between the vertical line $\textnormal{Re}(s)=D_\calL$ and all the complex dimensions of (the corresponding fractal string) $\calL$ with real part strictly less than $D$, and not passing through 0.} Then the following statements are
equivalent\emph{:}
\begin{enumerate}
	\item $D$ is the only complex dimension with real part $D$, and it is simple.
	\item $\partial\Omega$ is Minkowski measurable.
	\item $\ell_j = Lj^{-1/D}(1+o(1))$ as $j\to\infty$, for some $L>0$.
\end{enumerate}
\end{theorem}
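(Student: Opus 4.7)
My plan is to split the proof into two independent pieces: the equivalence of (ii) and (iii), which is purely real-variable in nature, and the equivalence of (i) with the other two, which requires the explicit tube formula of the theory of complex dimensions. The first piece is the content of the Lapidus--Pomerance result cited in the statement, so I would invoke it and concentrate on proving the equivalence of (i) with, say, (iii).

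For the direction (i)~$\Longrightarrow$~(iii), I would argue via Perron's formula applied to $\zeta_\calL$. Writing $N_\calL(x) = \#\{j : \ell_j^{-1} \leq x\}$ for the counting function of the reciprocal lengths, the Mellin inversion formula gives
\begin{equation*}
N_\calL(x) = \frac{1}{2\pi i}\int_{c-i\infty}^{c+i\infty} \zeta_\calL(s)\,\frac{x^s}{s}\,ds
\end{equation*}
for $c>D$, in a suitable distributional or Abel-summed sense. Shifting the contour leftward to a screen $S$ satisfying the languidness hypothesis in the footnote and collecting residues produces
\begin{equation*}
N_\calL(x) = \sum_{\omega \in \calD_\calL \cap W} \operatorname{res}\!\left(\zeta_\calL(s)\frac{x^s}{s};\,\omega\right) + R(x),
\end{equation*}
where the error $R(x)$ is of lower order because of the polynomial growth bounds. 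Under hypothesis (i), the only complex dimension on $\operatorname{Re}(s)=D$ is $D$ itself and it is simple, so the leading contribution is $\operatorname{res}(\zeta_\calL;D)\,x^D/D$, while every other term is $o(x^D)$. Inverting this asymptotic (via the standard argument that $\ell_j^{-1}$ is the $j$-th jump point of $N_\calL$) yields $\ell_j = L j^{-1/D}(1+o(1))$ with $L = (\operatorname{res}(\zeta_\calL;D)/D)^{1/D}$.

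For the converse (iii)~$\Longrightarrow$~(i), I would argue by contradiction. If (iii) holds, then through the Lapidus--Pomerance equivalence (iii)~$\Leftrightarrow$~(ii) the inner tube function satisfies $V(\varepsilon) \sim M\varepsilon^{1-D}$ for some $M \in (0,\infty)$. On the other hand, the explicit tube formula of \cite{LapvF6}, valid under the languidness assumption, gives
\begin{equation*}
V(\varepsilon) = \sum_{\omega \in \calD_\calL \cap W} \operatorname{res}\!\left(\zeta_\calL(s)\,\frac{(2\varepsilon)^{1-s}}{s(1-s)};\,\omega\right) + \mathcal{R}(\varepsilon),
\end{equation*}
with $\mathcal{R}(\varepsilon)=o(\varepsilon^{1-D})$. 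Any additional complex dimension $\omega = D + it$ with $t \neq 0$ on the critical line would contribute an oscillatory term of exact order $\varepsilon^{1-D}$ with a nonzero imaginary frequency, and a pole of multiplicity $\geq 2$ at $D$ would contribute a term of order $\varepsilon^{1-D}\log(1/\varepsilon)$. Either possibility is incompatible with the existence of the finite nonzero limit $\lim_{\varepsilon\to 0^+} V(\varepsilon)\varepsilon^{D-1}$, so (i) must hold.

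The main obstacle is technical rather than conceptual: one must carefully justify the contour shift and the term-by-term residue extraction under the languidness hypothesis, which requires the polynomial growth estimates along the screen to control the tail integrals, and one must handle the distributional/pointwise nature of Perron's formula for a measure that need not satisfy $\sum \ell_j < \infty$. Ruling out cancellation among oscillatory terms on the critical line in the converse direction is the subtle step, and it is done by isolating the single frequency associated to $\omega$ through an averaging argument (e.g.\ a Dirichlet-series analog of Ingham's Tauberian theorem), which is where the simplicity and uniqueness of the pole at $D$ is crucially used.
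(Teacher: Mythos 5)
The paper does not actually prove this theorem: it is quoted as a partial restatement of Theorem 8.15 of \cite{LapvF6}, with the equivalence of (ii) and (iii) attributed to \cite{LapPo1}, so there is no in-paper argument to compare against. Your outline reconstructs essentially the same proof as that cited source --- Lapidus--Pomerance for (ii)$\Leftrightarrow$(iii), the explicit counting/tube formulas under the languidness hypothesis for (i)$\Rightarrow$(iii), and a uniqueness argument for almost periodic oscillations to rule out cancellation on the line $\textnormal{Re}(s)=D$ in the converse --- and the subtle step you flag is precisely the one handled in \cite[Ch.~8]{LapvF6}, so your approach is sound and matches the reference.
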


Theorem \ref{thm:CriterionForMinkowskiMeasurability} applies to all lattice self-similar ordinary fractal strings. Specifically, if $\Omega$ is a lattice string, then the mild growth conditions are satisfied by $\zeta_\calL$ and 
there are no complex dimensions other than $D$ which have real part $D$, so the boundary of $\Omega$ is not Minkowski measurable. On the other hand, Theorem \ref{thm:CriterionForMinkowskiMeasurability} does not apply to all nonlattice self-similar strings since there are some for which $\zeta_\calL$ does not satisfy the growth conditions for a screen of the type described in footnote 6; see \cite[Example~5.32]{LapvF6}. Nonetheless, we have the following theorem which partially summarizes Theorems 8.23 and 8.36 of \cite{LapvF6}.

\begin{theorem}[Lattice/nonlattice dichotomy]
\label{thm:LatticeNonlatticeDichotomy}
A self-similar ordinary fractal string $\Omega$ is nonlattice if and only if its boundary $\partial\Omega$ is Minkowski measurable. 
\end{theorem}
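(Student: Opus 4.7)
The plan is to combine Theorem \ref{thm:GeometricZetaFunctionOrdinarySelfSimilarString}, which gives the closed form
$\zeta_\calL(s) = L^s \sum_{k=1}^K g_k^s / (1 - \sum_{j=1}^N r_j^s)$,
with the Minkowski measurability criterion of Theorem \ref{thm:CriterionForMinkowskiMeasurability}. Since $L, g_k > 0$, the numerator has no real positive zeros, so the visible complex dimensions of $\calL$ coincide with the roots of the Dirichlet polynomial equation $\sum_{j=1}^N r_j^s = 1$ (up to controllable cancellation by nonreal numerator zeros). The dichotomy therefore reduces to a statement about the distribution of these roots along the vertical line $\textnormal{Re}(s) = D$.

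For the lattice direction, I would write $r_j = r^{k_j}$ with $r \in (0,1)$ and $k_j \in \N$, and substitute $z = r^s$; the Dirichlet polynomial equation becomes the ordinary polynomial $\sum_{j=1}^N z^{k_j} = 1$ in $z$. Moran's Theorem \ref{thm:MoransTheorem} forces $z_0 := r^D$ to be a real positive root of this polynomial, and every preimage of $z_0$ under $s \mapsto r^s$ is then a complex dimension of $\calL$. This produces the full arithmetic progression
$\{D + 2\pi i k / \log(1/r) : k \in \Z\}$
of complex dimensions on the line $\textnormal{Re}(s) = D$, so condition (i) of Theorem \ref{thm:CriterionForMinkowskiMeasurability} fails and $\partial\Omega$ is not Minkowski measurable. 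This proves one direction by contraposition.

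For the nonlattice direction, I would first show that $D$ is the unique, and simple, complex dimension on $\textnormal{Re}(s) = D$ by a triangle-inequality-equality argument. If $s = D + it$ with $t \neq 0$ satisfies $\sum_{j=1}^N r_j^s = 1$, then writing $r_j^s = r_j^D \cdot r_j^{it}$ and using $r_j^D > 0$ together with $\sum_{j=1}^N r_j^D = 1$ gives
\begin{align*}
1 \;=\; \Bigl|\sum_{j=1}^N r_j^D \cdot r_j^{it}\Bigr| \;\leq\; \sum_{j=1}^N r_j^D \, |r_j^{it}| \;=\; 1.
\end{align*}
Equality in the triangle inequality forces all phases $e^{it \log r_j}$ to coincide; as the weighted sum itself equals $1$, the common value must be $1$, so $t \log r_j \in 2\pi \Z$ for every $j$. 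This makes the $\log r_j$ rationally commensurable, contradicting the nonlattice hypothesis. Simplicity of $D$ follows because the derivative $-\sum_{j=1}^N r_j^D \log r_j$ of the denominator at $s = D$ is strictly positive (each $\log r_j$ is negative).

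The main obstacle is invoking Theorem \ref{thm:CriterionForMinkowskiMeasurability} rigorously in the nonlattice setting: the languidity growth bounds on $\zeta_\calL$ needed to insert a screen between $\textnormal{Re}(s) = D$ and the nearby complex dimensions are nontrivial and, as recalled in the footnote to Theorem \ref{thm:CriterionForMinkowskiMeasurability}, do not automatically hold for every nonlattice string (see \cite[Example~5.32]{LapvF6}). I would handle this by either (a) exploiting the Diophantine properties of the ratios $\log r_i / \log r_j$ to prove that, on each bounded-height horizontal strip, the complex dimensions stay uniformly to the left of $\textnormal{Re}(s) = D$, thereby permitting a suitable languid screen, or (b) applying an Ikehara--Delange-type Tauberian theorem directly to $\zeta_\calL$ to derive the asymptotic $\ell_j = L j^{-1/D}(1 + o(1))$, which is condition (iii) of Theorem \ref{thm:CriterionForMinkowskiMeasurability} and hence already implies Minkowski measurability without routing through condition (i). Option (b) replaces the meromorphic screen analysis with a Tauberian estimate but demands delicate control of the error terms contributed by the nonreal complex dimensions that cluster arbitrarily close to the line $\textnormal{Re}(s) = D$ in the nonlattice case.
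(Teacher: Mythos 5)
The paper does not actually prove Theorem \ref{thm:LatticeNonlatticeDichotomy}; it imports it from Theorems 8.23 and 8.36 of \cite{LapvF6}, so your sketch has to stand on its own, and as written it has two genuine gaps. In the lattice direction you assert that the \emph{full} arithmetic progression $\{D+inp : n\in\Z\}$, $p=2\pi/\log(1/r)$, consists of complex dimensions. That does not follow from Theorem \ref{thm:GeometricZetaFunctionOrdinarySelfSimilarString} alone: the poles of $\zeta_\calL$ are the roots of $1-\sum_{j=1}^N r_j^s=0$ \emph{except} those cancelled by zeros of the numerator $L^s\sum_{k=1}^K g_k^s$, and for $K\geq 3$ a Dirichlet polynomial with positive coefficients can vanish at nonreal points of the line $\textnormal{Re}(s)=D$; nothing in your argument rules out cancellation at some of the points $D+inp$. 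What you actually need (and what is true) is weaker: not \emph{every} point $D+inp$ with $n\neq 0$ can be cancelled. For instance, if $\sum_{k}g_k^{D+inp}=0$ for all $n\geq 1$, then summing the geometric series $\sum_{n\geq 1}\bigl(\sum_k g_k^{D}z_k^{\,n}\bigr)x^n$ with $z_k:=g_k^{-ip}$ and using the linear independence of $x\mapsto zx/(1-zx)$ for distinct unimodular $z$ forces the strictly positive grouped coefficients to vanish, a contradiction; hence at least one nonreal pole with real part $D$ survives, which is all that is needed to negate condition (i) of Theorem \ref{thm:CriterionForMinkowskiMeasurability} (whose hypotheses, as the paper notes, do hold for lattice strings).

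In the nonlattice direction the argument is not finished, and option (a) as described is not viable: on any bounded horizontal strip there are only finitely many complex dimensions, so uniform separation from the line there is automatic and useless, because a screen must extend to all heights, and in the nonlattice case the complex dimensions approach $\textnormal{Re}(s)=D$ as $|\textnormal{Im}(s)|\to\infty$; this is exactly why the route through condition (i) cannot be used for every nonlattice string (see the footnote to Theorem \ref{thm:CriterionForMinkowskiMeasurability} and \cite[Example 5.32]{LapvF6}). Option (b), by contrast, is the right move and is essentially complete once you commit to it: your triangle-inequality lemma shows that the closed form of $\zeta_\calL$ is holomorphic at every point of the line $\textnormal{Re}(s)=D$ except for a simple pole at $s=D$, whose residue $L^D\sum_k g_k^D/\sum_j r_j^D\log(1/r_j)$ is positive; hence $\zeta_\calL(s)-\textnormal{res}/(s-D)$ extends continuously to the closed half-plane $\textnormal{Re}(s)\geq D$, and the Wiener--Ikehara theorem applied to the positive measure $\sum_n m_n\delta_{\{l_n^{-1}\}}$ yields $N_\calL(x)\sim Cx^{D}$, i.e.\ condition (iii). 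The equivalence of (ii) and (iii) holds for arbitrary ordinary fractal strings with $D\neq 0,1$ by \cite{LapPo1} (as recalled just before Theorem \ref{thm:CriterionForMinkowskiMeasurability}), and $0<D<1$ is automatic here since $\sum_j r_j<1$; so no languidity, screens, or ``delicate control of error terms'' from the complex dimensions clustering toward the line are needed --- only their absence from the line itself, which you already proved. So the overall strategy is sound, but as submitted the lattice cancellation issue is unaddressed and the nonlattice direction is left as a disjunction whose first branch fails; both defects are repairable as indicated, whereas the quoted proofs in \cite{LapvF6} proceed instead via explicit tube formulas in the lattice case and the quasiperiodic structure of nonlattice complex dimensions.
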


\begin{remark}
An extension of a part of Theorem \ref{thm:LatticeNonlatticeDichotomy} for suitable classes of self-similar tilings (and sets or systems) of higher-dimensional Euclidean spaces is provided in \cite{LapPeWi2}, using results of \cite[Ch.~8]{LapvF6} and \cite{LapPeWin}. (See also the relevant references therein.) Furthermore, an interesting study of a nonlinear analogue of Theorem \ref{thm:LatticeNonlatticeDichotomy} (and related counter-examples for certain self-conformal sets) in the real line is conducted in \cite{KesKom}.
\end{remark}

In the next section, we summarize (in an extended self-similar setting) some key results on the generalized fractal strings (i.e., fractal strings viewed as measures) of Lapidus and van Frankenhuijsen found in \cite[Ch.~4]{LapvF6}.

\subsection{Generalized self-similar strings}
\label{sec:GeneralizedSelfSimilarStrings}

In this section, a particular form of generalized fractal string, called \emph{generalized self-similar string}, is defined. The corresponding complex dimensions, by design, are given by the roots of a naturally associated Dirichlet polynomial equation; see \cite[Ch.~3]{LapvF6}. The geometric zeta function of a generalized self-similar string has a meromorphic continuation established in Theorem \ref{thm:ClosedFormOfGeometricZetaFunction} which allows for the determination of the complex dimensions of the recursive strings defined in Section \ref{sec:GeneralizedLatticeStringsAndLinearRecurrenceRelations} and the complex dimensions of certain scaling zeta functions in Section \ref{sec:MultifractalAnalysisViaScalingRegularityAndScalingZetaFunctions}.

\begin{definition}
\label{def:GeneralizedSelfSimilarString}
A discrete generalized fractal string $\eta$ is \emph{self-similar}, and may be referred to as a \emph{generalized self-similar string}, if there are $K,N \in \N$ such that for some
$\bfg=(g_1,\ldots,g_K) \in (0,\infty)^K, \bfr=(r_1,\ldots,r_N) \in (0,1)^N,$ and $\bfm=(m_1,\ldots,m_N) \in \C^N$, we have
\begin{align*}
\eta	&= \sum_{k=1}^K \sum_{J \in \scrJ} m_J \delta_{\{g_k^{-1} r_J^{-1}\}},
\end{align*}
where $r_J=\prod_{q=1}^{|J|}r_{\pi_q(J)}$ and $m_J$ is defined in an identical fashion. For the empty word $J|0$, we let $m_{J|0}=r_{J|0}=1$. We refer to components of the vectors $\bfg,\bfr$, and $\bfm$ (or sometimes the vectors themselves) as the \emph{gaps, initial scaling ratios,} and \emph{initial multiplicities} of $\eta$, respectively.
\end{definition}

The following theorem, which is an immediate consequence of the results of \cite[\S 3.3]{LapvF6}, but will be useful in the sequel, determines a closed form of the geometric zeta function of a generalized self-similar string (cf.~Theorem \ref{thm:GeometricZetaFunctionOrdinarySelfSimilarString} above and Theorem 2.3 and Equation (3.21) in \cite{LapvF6}). This closed form allows for the meromorphic continuation of the geometric zeta function to all of the complex plane and, therefore, an extension of the theory of complex dimensions for self-similar strings of \cite{LapvF6}. When the multiplicities $m_j$ are all integral and positive, this fact, along with the detailed study (conducted in \cite[Ch.~3]{LapvF6}) of the periodic or almost periodic distribution of the complex dimensions in the lattice or nonlattice case, respectively, was used in an essential manner in the work of \cite{LapPeWin} on tube formulas for higher-dimensional self-similar sets and tilings. (See also the earlier papers by the first two authors of \cite{LapPeWin} quoted therein, along with \cite{LapPeWi2}.) Moreover, the proof of the following theorem is included so that one may compare and contrast with the results on $\alpha$-scaling functions presented in Section \ref{sec:MultifractalAnalysisViaScalingRegularityAndScalingZetaFunctions} below, especially the decomposition of the corresponding multiplicities.

\begin{theorem}
\label{thm:ClosedFormOfGeometricZetaFunction}
Let $\eta$ be a generalized self-similar string. Then the geometric zeta function of $\eta$ has a meromorphic continuation to $\C$ given by
\begin{align*}
\zeta_\eta(s)	&=\frac{\sum_{k=1}^K g_k^s}{1-\sum_{j=1}^Nm_jr_j^s}, \quad \textnormal{for } s \in \C.
\end{align*} 
\end{theorem}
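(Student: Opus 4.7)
The plan is to compute $\zeta_\eta(s)$ directly from the definition on a half-plane where convergence is absolute, then observe that the resulting closed form is meromorphic on all of $\C$.

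First, I would fix $s \in \C$ with $\sigma = \operatorname{Re}(s)$ sufficiently large so that $\sum_{j=1}^N |m_j| r_j^\sigma < 1$. Such $\sigma$ exists because each $r_j \in (0,1)$, so $\sum_{j=1}^N |m_j| r_j^\sigma \to 0$ as $\sigma \to \infty$. Since $\eta$ is a discrete measure, by linearity and the fact that $\int_0^\infty x^{-s}\delta_{\{y\}}(dx) = y^{-s}$, I get
\begin{align*}
\zeta_\eta(s) = \sum_{k=1}^K \sum_{J \in \scrJ} m_J (g_k^{-1} r_J^{-1})^{-s} = \left(\sum_{k=1}^K g_k^s\right)\left(\sum_{J \in \scrJ} m_J r_J^s\right),
\end{align*}
provided the inner double series converges absolutely. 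This would be the first thing to verify.

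Next, I would decompose the sum over $\scrJ$ by length. Using Notation \ref{not:ProductsOfScalingRatios} and the multiplicative definition $m_J = \prod_{q=1}^{|J|} m_{\pi_q(J)}$, $r_J = \prod_{q=1}^{|J|} r_{\pi_q(J)}$, the key observation is that for fixed $n \in \N \cup \{0\}$,
\begin{align*}
\sum_{|J|=n} m_J r_J^s = \sum_{(j_1,\ldots,j_n) \in \{1,\ldots,N\}^n} \prod_{q=1}^n m_{j_q} r_{j_q}^s = \left(\sum_{j=1}^N m_j r_j^s\right)^n,
\end{align*}
with the convention that the $n=0$ term (empty word) contributes $m_{J|0} r_{J|0}^s = 1$. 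This factorization is the heart of the argument, and with absolute convergence for $\sigma$ large, Fubini permits summing over $n$ first to give a geometric series:
\begin{align*}
\sum_{J \in \scrJ} m_J r_J^s = \sum_{n=0}^\infty \left(\sum_{j=1}^N m_j r_j^s\right)^n = \frac{1}{1 - \sum_{j=1}^N m_j r_j^s}.
\end{align*}
Combining with the earlier factor yields the claimed formula on the half-plane of absolute convergence.

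Finally, for the meromorphic continuation, I would note that the functions $s \mapsto \sum_{k=1}^K g_k^s$ and $s \mapsto \sum_{j=1}^N m_j r_j^s$ are both finite sums of exponentials $a^s = e^{s \log a}$ (with $a > 0$), hence entire on $\C$. Therefore the right-hand side is a ratio of entire functions and is meromorphic on all of $\C$, with poles precisely at the zeros of $1 - \sum_{j=1}^N m_j r_j^s$. By the identity theorem for meromorphic functions (applied on the half-plane where the two sides already agree), this ratio is the unique meromorphic extension of $\zeta_\eta$ to $\C$. The main technical obstacle is the bookkeeping in the Fubini step that justifies interchanging the sum over $k$, the sum over $n$, and the sum over words of length $n$; everything else is formal once absolute convergence on some right half-plane is secured.
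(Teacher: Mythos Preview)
Your proposal is correct and follows essentially the same approach as the paper: factor the sum over $\scrJ$ by word length, recognize $\sum_{|J|=n} m_J r_J^s = \left(\sum_j m_j r_j^s\right)^n$, sum the resulting geometric series, and invoke analytic continuation. The paper's proof is slightly terser (it works directly on $\textnormal{Re}(s)>D_\eta$ and cites the Principle of Analytic Continuation), while you are a bit more explicit about absolute convergence and Fubini, but the argument is the same.
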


\begin{proof}
For $q=0$, $J|0$ is the empty word and we have $m_{J|0}=r_{J|0}=1$. For each $q \in \N$ we have 
\begin{align*}
\sum_{J \in \scrJ : |J|=q } m_Jr_J^s	&= \sum_{\nu_1=1}^N \cdots \sum_{\nu_q=1}^N m_{\nu_1}r_{\nu_1}^s \cdots m_{\nu_q}r_{\nu_q}^s
= \left(\sum_{j=1}^Nm_jr_j^s\right)^q.
\end{align*}
Now, for $\textnormal{Re}(s)>D_\eta$, $\zeta_\eta$ is given by 
\begin{align*}
\zeta_\eta(s) = \int_0^\infty x^{-s}\eta(dx) &= \sum_{k=1}^K \sum_{J \in \scrJ} m_J(g_kr_J)^s
= \sum_{k=1}^K g_k^s \left( \sum_{q=0}^\infty \left(\sum_{j=1}^N m_jr_j^s \right)^q \right).
\end{align*}
Note that since $\textnormal{Re}(s)>D_\eta$, the series converges because $|\sum_{j=1}^Nm_jr_j^s|<1$. Moreover, we have 
\begin{align*}
\zeta_\eta(s)	&= \frac{\sum_{k=1}^K g_k^s}{1-\sum_{j=1}^Nm_jr_j^s}.
\end{align*}
Hence, by the Principle of Analytic Continuation, $\zeta_\eta$ has a meromorphic continuation to all of $\C$ given by the last formula. 
\end{proof}

\begin{example}[Self-similar ordinary fractal strings]
\label{eg:SelfSimilarOrdinaryFractalStrings}
The geometric zeta function of a self-similar ordinary fractal string is given by the geometric zeta function of an appropriately defined generalized self-similar string. Basically, an ordinary fractal string is self-similar if it is the complement of a self-similar set with respect to a certain type of closed interval. (See \cite[Ch.~2]{LapvF6}.) For instance, the geometric zeta function $\zeta_{\textnormal{Fib}}$ of the Fibonacci string given in \eqref{eqn:GeometricZetaFunctionFibonacci} is the geometric zeta function of the generalized self-similar string determined by $K=g_1=1$, $\bfr=(1/2,1/4)$, and $\bfm=(1,1)$. (See Example \ref{eg:TheFibonacciString}.) Another such example is the Cantor string $\Omega_{CS}$ which is the complement in $[0,1]$ of the classic Cantor set. (See \cite{ELMR} and \cite{LapvF6} for further information on the Cantor string.)

The self-similar systems which generate the Fibonacci string and the Cantor string, respectively, are lattice. Hence, as described in Section \ref{sec:GeneralizedLatticeStringsAndLinearRecurrenceRelations}, the corresponding generalized self-similar strings are \emph{recursive strings} (see Definition \ref{def:RecursiveString}). We note that in the case of the Fibonacci string, this fact is foreshadowed in the linear recurrence relation \eqref{eqn:FibonacciLinearRecurrenceRelation}.
\end{example}

\begin{example}[Generalized Cantor strings]
\label{eg:GeneralizedCantorStrings}
A \emph{generalized Cantor string} $\mu$ is a generalized self-similar string of the form
\begin{align*}
\mu=\sum_{n=0}^\infty b^n\delta_{\lbrace r^{-n}\rbrace},
\end{align*}
where $0<r<1$ and $b>0$. (Note that $b$ is not required to be an integer.) That is, as a generalized self-similar string, $\mu$ is determined by $N=K=g_1=1$, $\bfr=(r)$, and $\bfm=(b)$. Generalized Cantor strings are studied in \cite[Chs.~8 \& 10]{LapvF6}, where it is shown (among many other things) that the geometric zeta function of a generalized Cantor string has a meromorphic extension to all of $\C$ given by 
\begin{align}
\label{eqn:GeometricZetaFunctionOfGeneralizedCantorString}
\zeta_\mu(s)	&=\frac{1}{1-b\cdot r^{s}}, \quad s \in \C.
\end{align}

In the special case where $b=2^{-1}$ and $r=3^{-1}$, one obtains a generalized Cantor string which can not be realized geometrically as an ordinary fractal string (since $b=m_1=2^{-1}$ is nonintegral). The geometric zeta function of $\mu$, after meromorphic extension to $\C$, is given by 
\begin{align*}
\zeta_\mu(s)	&=\frac{1}{1-2^{-1}\cdot 3^{-s}}, \quad s \in \C.
\end{align*}
Note that the ``dimension'' $D_\mu=-\log_{3}2$ is \emph{negative}.\footnote{By ``dimension''here we mean the abscissa of convergence of $\zeta_\mu$, as in Definition \ref{def:GeneralizedFractalString}.} Indeed, by allowing the ``multiplicities'' $\bfm\neq\mathbf{0}$ to comprise complex numbers in Definition \ref{def:GeneralizedSelfSimilarString}, one is able to study self-similar structures with respect to measures which do not necessarily (or rather, do not readily) correspond to geometric objects.
\end{example}

We conclude this section with an indication of how the framework of generalized self-similar strings ties to material elsewhere in the literature. Specifically, the \emph{scaling measure} and \emph{scaling zeta function} of a self-similar system, as defined below, are studied in \cite{LapPeWin} and \cite{LapvF6}. In the setting of this paper, a scaling measure is a generalized self-similar string where $m_j=K=g_1=1$ for each $j=1,\ldots,N$, and the scaling zeta function is the corresponding geometric zeta function. As mentioned in Section \ref{sec:MinkowskiMeasurabilityAndLatticeNonlatticeDichotomy} above, there are deep connections between the Minkowski measurability of the attractor of a self-similar system and the structure of the complex dimensions of corresponding zeta functions such as the scaling zeta function. (See \cite[Ch.~2 \& \S 8.4]{LapvF6}; see also \cite{LapPeWin,LapPeWi2}.)

\begin{definition}
\label{def:ScalingMeasure}
For a self-similar system $\bfPhi$, the \emph{scaling measure} is the associated generalized fractal string given by
\begin{align*}
\eta_\bfPhi	&:=\sum_{J \in \scrJ} \delta_{\lbrace r_J^{-1}\rbrace}.
\end{align*}
The \emph{scaling zeta function} of $\bfPhi$ is given by
\begin{align*}
\zeta_\bfPhi(s)	&:= \frac{1}{1-\sum_{j=1}^Nr_j^s}, \quad s \in \C.
\end{align*}
According to Definition \ref{def:GeneralizedFractalString}, it is just the geometric zeta function (i.e., the Mellin transform) of the generalized fractal string $\eta_\bfPhi$. 
\end{definition}

\begin{remark}
\label{rmk:ScalingMeasureGatherMultiplicities}
Every scaling measure $\eta_\bfPhi$ defines a generalized self-similar string and if the self-similar system $\bfPhi$ is lattice, then $\eta_\bfPhi$ is a generalized lattice string (see Definition \ref{def:GeneralizedLatticeString} below). 
\end{remark}

\section{Generalized lattice strings and linear recurrence relations}
\label{sec:GeneralizedLatticeStringsAndLinearRecurrenceRelations}

In this section, we discuss yet another notion of lattice structure. This time, it pertains to generalized self-similar strings. In particular, the results of this section extend accordingly to self-similar sets which are subsets of some Euclidean space (not just the real line) and certain cases of scaling zeta functions of self-similar measures as discussed in Section \ref{sec:MultifractalAnalysisViaScalingRegularityAndScalingZetaFunctions}. 

Linear recurrence relations are also shown, in this section, to be intimately related to the \emph{generalized lattice strings} defined here.

\begin{definition}
\label{def:GeneralizedLatticeString}
A generalized self-similar string $\eta$ is \emph{lattice} if there is a unique $0<r<1$, called the \emph{multiplicative generator} of $\eta$, and positive integers $k_j$ such that $r_j=r^{k_j}$ for each $j=1,\ldots,N$. A lattice generalized self-similar string may also be referred to as a \emph{generalized lattice string}. If a generalized self-similar string is not lattice, it is called a \emph{generalized nonlattice self-similar string}.
\end{definition}

The study of generalized self-similar strings (and hence, of generalized lattice and nonlattice strings), as well as of the structure of their complex dimensions, is the object of \cite[Ch.~3]{LapvF6}.

\begin{remark}
Every lattice ordinary fractal string (see Definition \ref{def:SelfSimilarLatticeOrdinaryFractalStrings} and the Fibonacci and Cantor strings in Examples \ref{eg:TheFibonacciString} and \ref{eg:SelfSimilarOrdinaryFractalStrings}) and every generalized Cantor string (see Example \ref{eg:GeneralizedCantorStrings}) can be realized as a generalized lattice string. 
\end{remark}

What follows is a discussion of a connection between linear recurrence relations and the (possibly complex) multiplicities stemming from generalized lattice strings and lattice ordinary fractal strings.

\subsection{Linear recurrence relations}
\label{sec:LinearRecurrenceRelations}

A brief summary of some relevant material on linear recurrence relations is provided in this section. For a more detailed introduction to recurrence relations, see \cite{Bal}.

\begin{definition}
\label{def:LinearRecurrenceRelation}
A sequence $\{s_n\}_{n=0}^{\infty}\subset\C$ satisfies a \emph{linear recurrence relation} $\mathcal{R}$ if there exist $d\in\N$ and $\bfa\in\C^d$ with $\pi_{d}(\bfa)\neq0 $ such that for all $n\geq d$, we have
\begin{align*}
 s_n=a_1s_{n-1}+\cdots+a_ds_{n-d}.
\end{align*}
The positive integer $d$ is called the \emph{degree} of the linear recursion. Furthermore, the constant vector $\mathbf{a}:=(a_1,\ldots,a_d)$ is called the \emph{kernel} of $\mathcal{R}$. The \emph{characteristic equation} of $\Rec$ is given by 
\begin{align} 
\label{eqn:CharacteristicEquation}
\recpoly{\varphi}{d}, \quad \varphi \in \C.
\end{align}

For a given sequence $\{s_n\}_{n=0}^{\infty}$ which satisfies the linear recurrence relation $\Rec$, the first $d$ terms of the sequence $\{s_n\}_{n=0}^{\infty}$ are called the \emph{initial conditions}, and they are denoted by the vector $\bfs:=(s_0,\ldots,s_{d-1})$. 
\end{definition}

\begin{remark}
\label{rmk:DeterminingLinearRecurrenceRelations}
Each linear recurrence relation $\Rec$ is completely determined by its kernel $\bfa$. Also, each sequence $\{s_n\}_{n=0}^\infty$ which satisfies a linear recurrence relation $\Rec$ is completely determined by the corresponding kernel $\bfa$ and the initial conditions $\bfs:=(s_0,\ldots,s_{d-1})$. 
\end{remark}

Some of the properties of linear recurrence relations can be understood and analyzed in the context of linear algebra. A brief synopsis is provided below; see \cite{Bal} and \cite{deSa} for more information.

\begin{definition}
\label{def:KernelMatrixAndNthSequenceMatrix}
Suppose $\{s_n\}_{n=1}^{\infty}$ satisfies a linear recurrence relation $\Rec$. The \emph{kernel matrix} $A$ and the \emph{$n$th sequence matrix} $S_n$ are respectively given by
\begin{align*}
A	&:=\cfmatrix{a_1}{a_2}{a_{d-1}}{a_{d}}, \quad S_n:=\recmatrix{s_{n}}{s_{1+n}}{s_{2+n}}{s_{d-2+n}}{s_{d-1+n}}{s_{d+n}}{s_{2d-4+n}}{s_{2d-3+n}}{s_{2d-2+n}}.
\end{align*}
\end{definition}

\begin{theorem} 
\label{thm:MatrixEquationForLinearRecurrence}
Suppose $\{s_n\}_{n=0}^{\infty}$ satisfies a linear recurrence relation $\Rec$. Then $S_0A^n=S_n$.
\end{theorem}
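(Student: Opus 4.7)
The plan is to deduce the matrix identity $S_0 A^n = S_n$ by induction on $n \in \N \cup \{0\}$, where the crux of the argument is the one-step version $S_n A = S_{n+1}$. The base case $n=0$ is simply $S_0 A^0 = S_0 I = S_0$, and the inductive step is $S_0 A^{n+1} = (S_0 A^n) A = S_n A = S_{n+1}$, so everything reduces to showing that right-multiplication by $A$ advances the Hankel window $S_n$ by one index.

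To verify $S_n A = S_{n+1}$, I would compute entries directly. First, reading off the \recmatrix template, the $(i,j)$-entry of $S_n$ is $(S_n)_{i,j} = s_{2d-i-j+n}$, so $S_n$ is a Hankel matrix whose antidiagonals are indexed by $i+j$. Next, from the \cfmatrix template for $A$ I would record that the first column of $A$ is $(a_1,\ldots,a_d)^T$, while for each $j \geq 2$ its $j$-th column is the standard basis vector $e_{j-1}$ (the upper-right $(d-1)\times(d-1)$ block of $A$ is the identity). These two facts split the verification into two easy cases.

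For $j \geq 2$, the matrix product formula immediately yields
\begin{equation*}
(S_n A)_{i,j} = (S_n)_{i,j-1} = s_{2d-i-(j-1)+n} = s_{2d-i-j+(n+1)} = (S_{n+1})_{i,j},
\end{equation*}
so these columns match for the purely formal reason that $A$ shifts columns one position to the right. For the first column,
\begin{equation*}
(S_n A)_{i,1} = \sum_{k=1}^d (S_n)_{i,k}\, a_k = \sum_{k=1}^d a_k\, s_{2d-i-k+n};
\end{equation*}
setting $m := 2d-i-1+n$, this sum equals $a_1 s_m + a_2 s_{m-1} + \cdots + a_d s_{m-d+1}$, which by the linear recurrence $\Rec$ is $s_{m+1} = s_{2d-i+n} = (S_{n+1})_{i,1}$, as required.

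The only step requiring any care is checking that the recurrence $\Rec$ is actually applicable at every invocation, i.e., that the index $m+1$ is at least $d$; this is immediate since $m+1 = 2d-i+n$ with $1 \leq i \leq d$ and $n \geq 0$ gives $m+1 \geq d$. This indexing check is the main (and only) obstacle, and it is bookkeeping rather than a conceptual difficulty. Conceptually, the result is simply the classical \emph{companion matrix} identity: $A$ is, up to transpositional conventions, the companion matrix of the characteristic polynomial \eqref{eqn:CharacteristicEquation}, and its powers advance a window of the linearly recursive sequence by the corresponding number of steps.
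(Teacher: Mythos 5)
Your proof is correct and follows essentially the same route as the paper: reduce everything to the one-step shift identity $S_nA = S_{n+1}$ and iterate (the paper writes $S_0A^n = S_1A^{n-1} = \cdots = S_n$). The only difference is that you supply the entrywise Hankel/companion-matrix verification of $S_nA = S_{n+1}$, which the paper simply asserts, and your index bookkeeping there is accurate.
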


\begin{proof}
For each $j \in \N$ we have $S_jA = S_{j+1}$. Thus, for each $n \in \N$ we have
\begin{align*}
S_0A^n &= S_1A^{n-1} =\cdots = S_n. 
\end{align*}
\end{proof}

\begin{proposition}
\label{prop:EigenvalueIsRoot}
A complex number $\lambda$ is an eigenvalue of the kernel matrix $A$ of a recurrence relation $\Rec$ if and only if $\lambda$ is a solution of the characteristic equation of $\Rec$ given by \eqref{eqn:CharacteristicEquation}. 
\end{proposition}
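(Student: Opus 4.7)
The plan is to prove the proposition by identifying the characteristic polynomial of the kernel matrix $A$ directly. Since $\lambda$ is an eigenvalue of $A$ if and only if $\det(\lambda I - A) = 0$, it suffices to establish the identity
\begin{align*}
\det(\lambda I - A) &= \lambda^d - a_1 \lambda^{d-1} - a_2 \lambda^{d-2} - \cdots - a_d,
\end{align*}
since the right-hand side vanishes precisely when $\lambda$ solves the characteristic equation \eqref{eqn:CharacteristicEquation}. Note that $A$ is essentially a companion matrix of the polynomial on the right, so this identity is the familiar fact that the characteristic polynomial of a companion matrix agrees with the polynomial it is built from; nevertheless, I would verify it by hand to match the specific conventions of Definition \ref{def:KernelMatrixAndNthSequenceMatrix}.

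The approach I would take is cofactor expansion along the first column of $\lambda I - A$, whose entries are $\lambda - a_1$ in position $(1,1)$ and $-a_i$ in position $(i,1)$ for $i=2,\ldots,d$. First I would handle the cofactor of $\lambda - a_1$: deleting the first row and first column leaves an upper-triangular $(d-1)\times(d-1)$ matrix with $\lambda$'s along the diagonal, whose determinant is $\lambda^{d-1}$. This contributes $(\lambda - a_1)\lambda^{d-1}$.

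Next I would compute, for each $i \in \{2,\ldots,d\}$, the $(i,1)$-minor $M_{i1}$ obtained by deleting row $i$ and column $1$. The key observation is that this minor is block triangular: its upper-left $(i-1)\times(i-1)$ block is lower bidiagonal with $-1$'s on the diagonal and $\lambda$'s on the subdiagonal (determinant $(-1)^{i-1}$), while its lower-right $(d-i)\times(d-i)$ block is upper bidiagonal with $\lambda$'s on the diagonal and $-1$'s on the superdiagonal (determinant $\lambda^{d-i}$), and the two off-diagonal blocks are zero because the only nonzero entries of the original matrix off the first column lie on the diagonal $\lambda$'s and the superdiagonal $-1$'s. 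Hence $M_{i1} = (-1)^{i-1}\lambda^{d-i}$.

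Assembling these gives
\begin{align*}
\det(\lambda I - A) &= (\lambda - a_1)\lambda^{d-1} + \sum_{i=2}^{d}(-1)^{i+1}(-a_i)(-1)^{i-1}\lambda^{d-i} \\
&= (\lambda - a_1)\lambda^{d-1} - \sum_{i=2}^{d} a_i \lambda^{d-i} = \lambda^d - a_1\lambda^{d-1} - \cdots - a_d,
\end{align*}
which is exactly the characteristic equation, completing the proof. The only real obstacle is the bookkeeping for the sign $(-1)^{i-1}$ coming from the upper-left block together with the cofactor sign $(-1)^{i+1}$; these conveniently combine to $(-1)^{2i} = 1$, which is why every $a_i$ appears with a minus sign in the final expression. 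An equivalent and slightly shorter proof would be by induction on $d$, expanding along the first column and identifying the $(1,1)$-minor (after factoring $\lambda$ from it) as $\det(\lambda I - A')$ for the kernel matrix $A'$ of the reduced recurrence with kernel $(a_2,\ldots,a_d)$.
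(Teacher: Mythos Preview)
Your proof is correct: the cofactor expansion along the first column of $\lambda I - A$ is carried out accurately, the block structure of each $(i,1)$-minor is identified correctly (in fact the minor is block \emph{diagonal}, not just block triangular, since both off-diagonal blocks vanish), and the signs combine as you say to yield $\det(\lambda I - A)=\lambda^d-a_1\lambda^{d-1}-\cdots-a_d$.

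The paper itself omits the proof, remarking only that it follows from the definitions and elementary linear algebra; this is the standard companion-matrix identity, and your direct cofactor computation is one of the usual ways to verify it. The inductive variant you mention at the end---expanding along the last row and recognizing the principal minor as the characteristic polynomial of a smaller kernel matrix---is the other common route and is equally acceptable.
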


The proof is omitted as it follows immediately from the definitions and some linear algebra.

\begin{remark}
\label{rmk:RecursionInFGCDZF}
Recursion relations in the context of measures with another type of self-similarity property are studied in \cite[\S 4.4.1]{LapvF6}. There, such measures are allowed to have mass near zero and are assumed to be absolutely continuous with respect to $dx/x$, the Haar measure on the multiplicative group $\R_+^*$. These results are compared and contrasted with the results of this paper in \cite{deSa}. 
\end{remark}

In the next section, linear recurrence relations extend and are related to generalized lattice strings.

\subsection{Recursive strings}
\label{sec:RecursiveStrings}

In this section, we present a new type of generalized fractal string, called a \emph{recursive string}. Recursive strings are closely related to generalized lattice strings as described in this section, especially via Theorems \ref{thm:LatticeAndRecurrence} and \ref{thm:RecursiveStringZetaFunction}. See \cite{deSa} for a more detailed analysis of recursive strings, linear recurrence relations, and connections to generalized lattice strings. 

\begin{definition}
\label{def:RecursiveString}
Given $K\in\N$, $0<r<1$, $\bfg=(g_1,\ldots,g_K) \in (0,1)^K$, and a sequence $\{s_n\}_{n=0}^\infty \subset \C$ satisfying a linear recurrence relation $\calR$, the \emph{recursive string} $\eta_\Rec (\bfs;\cdot)$ is the discrete generalized fractal string given by 
\begin{align*}
\eta_\calR(\bfs;\cdot)	&=\sum_{k=1}^K\sum_{n=0}^{\infty} s_n\delta_{\lbrace g_k^{-1} r^{-n}\rbrace}(\cdot).
\end{align*} 
The real number $r$ is called the \emph{multiplicative generator} of $\eta_\Rec (\bfs;\cdot)$ and the components of the vector $\bfg$ (or sometimes $\bfg$ itself) are(is) called the \emph{gaps} of $\eta_\Rec (\bfs;\cdot)$. 
\end{definition}

The geometric zeta function, dimension, and complex dimensions of a recursive string $\eta_\Rec (\bfs;\cdot)$ are denoted, respectively, by 
\[
\zeta_\calR(\bfs;\cdot):=\zeta_{\eta_\Rec (\bfs;\cdot)}, \quad D_\calR(\bfs):=D_{\eta_\Rec (\bfs;\cdot)}, \quad \textnormal{and} \quad \calD_\calR(\bfs):=\calD_{\eta_\Rec (\bfs;\cdot)}.
\]

\begin{remark}
\label{rmk:DeterminingRecursiveStrings}
Since every linear recurrence relation $\calR$ is completely determined by its kernel $\bfa$ and every sequence $\{s_n\}_{n=0}^\infty \subset \C$ which satisfies $\calR$ is further determined by the initial conditions $\bfs$, we have that every recursive string $\eta_\Rec (\bfs;\cdot)$ is completely determined by the kernel $\bfa$, initial conditions $\bfs$, gaps $\bfg$, and multiplicative generator $r$.
\end{remark}

\begin{theorem}
\label{thm:LatticeAndRecurrence}
Every generalized lattice string $\eta$ is a recursive string. That is,
\begin{align*}
\eta	&= \sum_{n=0}^\infty s_n \delta_{\lbrace r^{-n}\rbrace},
\end{align*}
where $0<r<1$ and the sequence of multiplicities $\{s_n\}_{n=0}^\infty \subset \C$ satisfies a linear recurrence relation $\Rec$.
\end{theorem}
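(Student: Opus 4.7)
The plan is to collect the point masses of $\eta$ according to which power of $r$ they involve and then derive a linear recurrence for the resulting multiplicities via a generating-function identity, in the spirit of the proof of Theorem~\ref{thm:ClosedFormOfGeometricZetaFunction}.

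First, using the lattice hypothesis I would write $\eta$ in expanded form. There exist $N, K \in \N$, positive integers $k_1, \ldots, k_N$, $\bfm \in \C^N$, and $\bfg \in (0,\infty)^K$ with $r_j = r^{k_j}$ and
$$\eta \;=\; \sum_{k=1}^K \sum_{J \in \scrJ} m_J\, \delta_{\{g_k^{-1} r_J^{-1}\}}.$$
For each $J \in \scrJ$ I set $K_J := \sum_{q=1}^{|J|} k_{\pi_q(J)}$, so that $r_J = r^{K_J}$. Regrouping by the value of $K_J$ then yields
$$\eta \;=\; \sum_{k=1}^K \sum_{n=0}^\infty s_n\, \delta_{\{g_k^{-1} r^{-n}\}}, \qquad s_n \;:=\; \sum_{J \in \scrJ:\, K_J = n} m_J,$$
with $s_0 = 1$ (only the empty word contributes).

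Next, I would analyze the formal power series $\Sigma(z) := \sum_{n=0}^\infty s_n z^n$. Interchanging the order of summation and repeating the telescoping computation from the proof of Theorem~\ref{thm:ClosedFormOfGeometricZetaFunction} gives
$$\Sigma(z) \;=\; \sum_{J \in \scrJ} m_J\, z^{K_J} \;=\; \sum_{q=0}^\infty \Bigl(\sum_{j=1}^N m_j z^{k_j}\Bigr)^{\!q} \;=\; \frac{1}{1 - P(z)},$$
where $P(z) := \sum_{j=1}^N m_j z^{k_j}$. Setting $M := \max_j k_j$ and $a_i := \sum_{j:\, k_j = i} m_j$ for $i = 1, \ldots, M$ gives $P(z) = \sum_{i=1}^M a_i z^i$. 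Comparing coefficients of $z^n$ in the identity $\Sigma(z)\bigl(1 - P(z)\bigr) = 1$ produces the linear recurrence
$$s_n \;=\; \sum_{i=1}^M a_i\, s_{n-i}, \qquad n \geq M,$$
while the coefficients of $z^0, \ldots, z^{M-1}$ fix the initial conditions $\bfs = (s_0, \ldots, s_{M-1})$.

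The main subtlety I anticipate is the nondegeneracy requirement $\pi_d(\bfa) \neq 0$ from Definition~\ref{def:LinearRecurrenceRelation}. Generically $a_M \neq 0$ since at least one $k_j$ equals $M$; however, because $\bfm \in \C^N$ is allowed, cancellations can force $a_M = 0$. To handle this, I would simply replace $M$ by the largest index $d \leq M$ with $a_d \neq 0$ (which exists as long as some $m_j \neq 0$, the only nontrivial case) and work with the truncated kernel $(a_1, \ldots, a_d)$. With this choice, $\eta$ matches the form prescribed in Definition~\ref{def:RecursiveString} and is therefore a recursive string.
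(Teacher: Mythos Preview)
Your proof is correct and follows essentially the same route as the paper: both collect multiplicities according to the exponent of $r$, set $a_q = \sum_{j:k_j=q} m_j$, and obtain the recurrence $s_n = \sum_q a_q s_{n-q}$, the paper via a direct combinatorial count and you via the equivalent generating-function identity $\Sigma(z)\bigl(1-P(z)\bigr)=1$. You are in fact slightly more careful than the paper, since you retain the gaps $g_k$ (matching Definition~\ref{def:RecursiveString}) and address the nondegeneracy requirement $a_d\neq 0$ from Definition~\ref{def:LinearRecurrenceRelation}, which the paper's proof glosses over.
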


\begin{proof}
Since $\eta$ is lattice, there exist a self-similar system $\bfPhi$ with scaling ratios $\bfr=(r_1,\ldots,r_N)$ and a unique $0<r<1$ such that $r_j = r^{k_j}$ for some positive integer $k_j$ and each $j=1,\ldots,N$. Let $k=\max\{k_j : j \in \{1,\ldots,N\} \}$. For each $q=1,\ldots,k$, define $a_q$ to be the sum of the (complex) multiplicities of $r^q$ with respect to the self-similar system. That is, $a_q:=\sum_{j:r_j=r^q}m_j$. Let $s_0=1$ and for $n \in \N$, let $s_n$ be the multiplicity of the length $r^n$. Since $r^n=r^qr^{n-q}$ for $1\leq q\leq k$, each instance of $r^q$ will contribute $s_{n-q}$ to the total multiplicity of the length $r^n$. Specifically, for all $n \geq k$,
\begin{align*}
s_n	&= \sum_{q=1}^k a_qs_{n-q} = a_1s_{n-1} + \cdots + a_ks_{n-k}.
\end{align*}
This is the desired linear recurrence relation $\mathcal{R}$.
\end{proof}

The following corollary provides a well-known fact regarding the Hausdorff dimension and Minkowski dimension of a lattice self-similar system. A simple proof is provided in part to illuminate the deep connections between generalized lattice strings and linear recurrence relations.

\begin{corollary}
\label{cor:DimensionFromLinearRecurrenceRelation}
Suppose $\bfPhi$ is a lattice self-similar system with attractor $F$. Then 
there is a sequence of positive integer multiplicities $\{s_n\}_{n=0}^\infty$ which satisfies the linear recurrence relation $\Rec$ corresponding to $\bfPhi$ such that the scaling measure $\eta_\bfPhi$ satisfies
\begin{align*}
\eta_\bfPhi	&= \sum_{n=0}^\infty s_n \delta_{\lbrace r^{-n} \rbrace}.
\end{align*}
Moreover, the Hausdorff dimension and Minkowski dimension of $F$ are given by
\begin{align*}
\dim_H(F)	&=\dim_M(F)	= -\log_r{\varphi}, 
\end{align*}
where $\varphi$ is the unique positive root the of characteristic equation of $\Rec$. 
\end{corollary}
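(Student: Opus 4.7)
The plan is to combine Theorem \ref{thm:LatticeAndRecurrence} with Moran's Theorem \ref{thm:MoransTheorem}, exploiting the fact that the Moran equation and the characteristic equation of $\Rec$ are the same polynomial equation after the natural substitution $\varphi = r^{-D}$.

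First I would observe that since $\bfPhi$ is lattice, Remark \ref{rmk:ScalingMeasureGatherMultiplicities} guarantees that the scaling measure $\eta_\bfPhi = \sum_{J \in \scrJ} \delta_{\lbrace r_J^{-1}\rbrace}$ is a generalized lattice string (viewed as a generalized self-similar string with all initial multiplicities equal to $1$). Applying Theorem \ref{thm:LatticeAndRecurrence} then yields the desired representation $\eta_\bfPhi = \sum_{n=0}^\infty s_n \delta_{\lbrace r^{-n}\rbrace}$, where the multiplicities $\lbrace s_n\rbrace_{n=0}^\infty$ satisfy the linear recurrence relation $\Rec$ whose kernel $\bfa = (a_1,\ldots,a_k)$ is given by $a_q = \#\lbrace j : k_j = q\rbrace$ for $q = 1,\ldots,k$, with $k := \max_j k_j$. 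The $s_n$ are positive integers since $s_n$ simply counts the words $J \in \scrJ$ with $r_J = r^n$.

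For the dimension formula, I would invoke Moran's Theorem \ref{thm:MoransTheorem} to obtain $\dim_H(F) = \dim_M(F) = D$, where $D$ is the unique positive real solution of $\sum_{j=1}^N r_j^D = 1$. Substituting $r_j = r^{k_j}$ and grouping terms by the common value of $k_j$, this equation becomes $\sum_{q=1}^k a_q r^{qD} = 1$, with precisely the same coefficients $a_q$ as in the kernel of $\Rec$ identified above. Setting $\varphi := r^{-D}$ and multiplying through by $\varphi^k$ produces exactly the characteristic equation $\varphi^k = a_1\varphi^{k-1} + \cdots + a_k$ from \eqref{eqn:CharacteristicEquation}. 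Since $D > 0$ and $0 < r < 1$, we have $\varphi > 1$, and uniqueness of $\varphi$ as a positive root of the characteristic equation is inherited from the uniqueness of $D$ asserted in Moran's Theorem. Solving $\varphi = r^{-D}$ for $D$ gives $D = -\log_r \varphi$, as claimed.

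I do not expect a substantive obstacle here: both principal ingredients are already in hand, and the argument reduces to matching the kernel $\bfa$ of $\Rec$ against the coefficients of Moran's equation after grouping by scaling exponent. The only minor subtlety to track is that the degree $k$ of the recursion built in Theorem \ref{thm:LatticeAndRecurrence} coincides with $\max_j k_j$, which is automatic from that construction, and that the scaling-measure context (no external multiplicities) forces $a_q$ to be the honest integer $\#\lbrace j : k_j = q\rbrace$ so that the substitution into Moran's equation is exact.
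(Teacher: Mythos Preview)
Your proposal is correct and follows essentially the same route as the paper: both reduce the Moran equation to the characteristic equation of $\Rec$ via the substitution $\varphi = r^{-D}$ after grouping the scaling ratios by their exponent. The only cosmetic difference is that the paper invokes Descartes' Rule of Signs to obtain uniqueness of the positive root $\varphi$, whereas you (equally validly) pull uniqueness back from Moran's Theorem through the bijection $D \mapsto r^{-D}$; you are also more explicit than the paper in deriving the first assertion about $\eta_\bfPhi$ from Theorem~\ref{thm:LatticeAndRecurrence}.
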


\begin{proof}
By Theorem \ref{thm:MoransTheorem}, $\dim_H(F)$ and $\dim_M(F)$ are given by the unique real solution $D$ of \eqref{eqn:Moran}. 
Since $\bfPhi$ is lattice, $r_j = r^{k_j}$, for $ 0< r <1$ and $k_j \in \N$.  Letting $m=\max\{k_j:j=1,\ldots,N\}$, $a_j$ be the multiplicity of $r^j$, and $x=r^{-s}$,
\eqref{eqn:Moran} becomes
\begin{align*}
1 	&= \sum_{j=0}^ma_jr^{js} = a_1x^{-1} + \cdots + a_mx^{-m}.
\end{align*}
Multiplying through by $x^m$ yields 
\begin{align*}
x^m	&= a_1x^{m-1} + \cdots + a_m,
\end{align*}
which is a polynomial in $x$. Since the coefficients are all positive, Descarte's Rule of Signs implies the existence of a unique positive root $\varphi$.
Thus, $\varphi = r^{-D}$, which implies $D = -\log_r{\varphi}$.
\end{proof}

The following theorem and its corollaries show that a recursive string is nearly a generalized lattice string. In particular, the set of complex dimensions of a recursive string is contained in or equal to the set of complex dimensions of a naturally related generalized lattice string.

\begin{theorem}
\label{thm:RecursiveStringZetaFunction}
Let $\calR$ be a linear recurrence relation of degree $d$. Let $\eta_\Rec (\bfs;\cdot)$ be the recursive string determined by the kernel $\bfa$ of $\calR$, initial conditions $\bfs$, gaps $\bfg$, and multiplicative generator $0<r<1$. Then 
\begin{align}
\label{eqn:RecursiveStringZetaFunction}
\zeta_\Rec(\bfs;s) &=g(s)\cdot \frac{\displaystyle\sum_{n=0}^{d-1}s_nr^{ns} - \sum_{l=1}^{d-1}a_lr^{ls}\sum_{n=0}^{d-1-l}s_nr^{ns}}
{\displaystyle 1-\sum_{j=1}^da_jr^{js}} 
\end{align}
for $s \in \C$ and $g(s):=\sum_{k=1}^Kg_k^s$. 
\end{theorem}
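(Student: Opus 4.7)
The plan is to compute the Mellin transform directly from the definition, factor off the gap contribution, and then exploit the linear recurrence to obtain a functional equation for the remaining power series in $r^s$ that can be solved algebraically.

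First, for $\mathrm{Re}(s)$ sufficiently large, interchanging integration and summation gives
\begin{align*}
\zeta_\calR(\bfs;s) \;=\; \int_0^\infty x^{-s}\,\eta_\calR(\bfs;dx)
\;=\; \sum_{k=1}^K g_k^s \sum_{n=0}^\infty s_n r^{ns}
\;=\; g(s)\cdot F(s),
\end{align*}
where $F(s):=\sum_{n=0}^\infty s_n r^{ns}$. It therefore suffices to derive the claimed closed form for $F(s)$ and then invoke the Principle of Analytic Continuation, exactly as in the proof of Theorem \ref{thm:ClosedFormOfGeometricZetaFunction}.

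Next, I would split $F(s)$ into the initial block of length $d$ and the tail, and apply the recurrence $s_n=\sum_{j=1}^d a_j s_{n-j}$ (valid for $n\geq d$) to the tail:
\begin{align*}
F(s) \;=\; \sum_{n=0}^{d-1} s_n r^{ns} \;+\; \sum_{j=1}^d a_j r^{js}\sum_{n=d}^{\infty} s_{n-j}r^{(n-j)s}.
\end{align*}
After the substitution $m=n-j$ the inner sum becomes $\sum_{m=d-j}^\infty s_m r^{ms} = F(s) - \sum_{m=0}^{d-j-1}s_m r^{ms}$, with the subtracted sum being empty (and hence zero) when $j=d$. Collecting the $F(s)$ terms on the left yields
\begin{align*}
F(s)\Bigl(1-\sum_{j=1}^d a_j r^{js}\Bigr)
\;=\; \sum_{n=0}^{d-1}s_n r^{ns} \;-\; \sum_{l=1}^{d-1} a_l r^{ls}\sum_{n=0}^{d-1-l} s_n r^{ns},
\end{align*}
which is exactly the numerator/denominator combination appearing in \eqref{eqn:RecursiveStringZetaFunction}. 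Dividing through and multiplying by $g(s)$ gives the claimed formula on the half-plane of convergence.

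Finally, since the right-hand side of \eqref{eqn:RecursiveStringZetaFunction} is a rational function of $r^s$ (hence meromorphic on all of $\C$), the Principle of Analytic Continuation extends the identity to every $s\in\C$. The only genuine bookkeeping issue is the handling of the boundary term at $j=d$, where the truncation sum is empty; once one observes this, the rest is routine algebra. A convergence check is also needed to justify the interchange of sums, but this is automatic on the half-plane $\mathrm{Re}(s)>D_\calR(\bfs)$ by definition of the abscissa of convergence.
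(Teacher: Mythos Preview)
Your proof is correct and follows essentially the same approach as the paper: both split off the initial $d$ terms, apply the recurrence to the tail, and solve the resulting functional equation for the series before invoking analytic continuation. The only cosmetic difference is that you factor out $g(s)$ at the outset while the paper reduces to the case $K=g_1=1$ and restores the gap factor at the end.
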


\begin{proof}
Assume, for notational simplicity, that $K=g_1=1$. Further, for $\textnormal{Re}(s)>D_\calR(\bfs;\cdot)$, consider the sum 
\[
\zeta_\Rec(\bfs;s)-\sum_{n=0}^{d-1}s_nr^{ns}=\sum_{n=0}^{\infty}s_{n+d}r^{(n+d)s}.
\] 
We then have successively:
\begin{align*}
\zeta_\Rec(\bfs;s)-\sum_{n=0}^{d-1}s_nr^{ns}	&= a_1\sum_{n=0}^{\infty}s_{n+d-1}r^{(n+d)s}
	+\cdots +a_d\sum_{n=0}^{\infty}s_{n}r^{(n+d)s} \\
	&= a_1r^s\sum_{n=0}^{\infty}s_{n+d-1}r^{(n+d-1)s}
	+\cdots +a_dr^d\sum_{n=0}^{\infty}s_{n}r^{ns} \\
	&=a_1r^s\left(\zeta_\Rec(\bfs;s)-\sum_{n=0}^{d-2}s_nr^{ns}\right)
	+\cdots
	+a_dr^{ds}\zeta_\Rec(\bfs;s).
\end{align*}
So, 
\begin{align*}		
\zeta_\Rec(\bfs;s)\left(1-\sum_{j=}^da_jr^{js}\right)&=\sum_{n=0}^{d-1}s_nr^{ns}
-\sum_{l=0}^{d-1}a_lr^{ls}\sum_{n=0}^{d-1-l}s_nr_{ns}.
\end{align*}
Therefore, for $\textnormal{Re}(s)>D_\calR(\bfs;\cdot)$, we have
\begin{align}
\label{eqn:RecursiveStringZetaFunctionProof}
\zeta_\Rec(\bfs;s)	&=\frac{\displaystyle\sum_{n=0}^{d-1}s_nr^{ns} - \sum_{l=1}^{d-1}a_lr^{ls}\sum_{n=0}^{d-1-l}s_nr^{ns}}
{\displaystyle 1-\sum_{j=1}^da_jr^{js}}.
\end{align}
Since the right-hand side of \eqref{eqn:RecursiveStringZetaFunctionProof} defines a meromorphic function in all of $\C$, it follows that $\zeta_\Rec(\bfs;\cdot)$ is meromorphic in $\C$ and is still given by the same expression on $\C$.

Finally, for the case where $K \neq 1$ or $g_1 \neq 1$, the reasoning is exactly the same as above except for the fact that the factor $g(s)=\sum_{k=1}^Kg_k^s$ would need to be included throughout the proof, accordingly.
\end{proof}

As we have seen, the right-hand side of \eqref{eqn:RecursiveStringZetaFunction} 
allows for a meromorphic continuation of $\zeta_\calR(\bfs;\cdot)$ to $\C$. This, in turn, allows for the following two results.

\begin{corollary}
\label{cor:RecursiveStringsNearlyLattice}
Let $\calR$ be a linear recurrence relation and let $\eta_\Rec (\bfs;\cdot)$ be a recursive string as in Theorem \ref{thm:RecursiveStringZetaFunction}. Then there is a generalized lattice string $\eta$ and an entire function $h_\calR(\bfs;s)$ such that, after meromorphic continuation, for all $s \in \C$, we have 
\begin{align}
\label{eqn:RecursiveStringsNearlyLattice}
\zeta_\Rec(\bfs;s)	&= h_\calR(\bfs;s)\zeta_\eta(s).
\end{align}
Moreover, $D_\calR(\bfs)=D_\eta$ and $\calD_\calR(\bfs) \subset \calD_\eta$.
\end{corollary}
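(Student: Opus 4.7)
The plan is to read off an explicit generalized lattice string $\eta$ directly from the closed-form expression \eqref{eqn:RecursiveStringZetaFunction}. I would take $\eta$ to be the generalized self-similar string determined by the same gaps $\bfg$ as $\eta_\calR(\bfs;\cdot)$, initial scaling ratios $(r,r^2,\ldots,r^d)$, and initial multiplicities $(a_1,\ldots,a_d)$ equal to the entries of the kernel of $\calR$. Since each scaling ratio is an integer power of the common base $0<r<1$, $\eta$ is lattice in the sense of Definition \ref{def:GeneralizedLatticeString}, with multiplicative generator $r$. By Theorem \ref{thm:ClosedFormOfGeometricZetaFunction}, $\zeta_\eta$ admits a meromorphic continuation to all of $\C$ given by
\[
\zeta_\eta(s) \;=\; \frac{g(s)}{1-\sum_{j=1}^{d} a_j r^{js}}, \qquad s\in\C,
\]
with $g(s)=\sum_{k=1}^{K} g_k^s$, so the denominator matches that of \eqref{eqn:RecursiveStringZetaFunction} exactly.

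Next I would define
\[
h_\calR(\bfs;s) \;:=\; \sum_{n=0}^{d-1} s_n r^{ns} \;-\; \sum_{l=1}^{d-1} a_l r^{ls}\sum_{n=0}^{d-1-l} s_n r^{ns},
\]
and observe that, as a finite linear combination of the entire functions $s\mapsto r^{ns}=e^{ns\log r}$, it is entire on $\C$. Dividing the closed form \eqref{eqn:RecursiveStringZetaFunction} by $\zeta_\eta(s)$ yields \eqref{eqn:RecursiveStringsNearlyLattice} as an identity of meromorphic functions on $\C$, with the factor $g(s)$ canceling cleanly. Because $h_\calR(\bfs;\cdot)$ is entire, any pole of $\zeta_\calR(\bfs;\cdot)$ must occur at a pole of $\zeta_\eta$ (possible cancellations only remove poles, never create them), which gives the inclusion $\calD_\calR(\bfs)\subset\calD_\eta$.

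For the equality $D_\calR(\bfs)=D_\eta$, I would invoke the closed-form solution for sequences satisfying a linear recurrence together with Proposition \ref{prop:EigenvalueIsRoot}. By diagonalizing (or Jordan-decomposing) the kernel matrix $A$ of Definition \ref{def:KernelMatrixAndNthSequenceMatrix}, the magnitude $|s_n|$ grows like $|\varphi_0|^n$ up to polynomial factors, where $\varphi_0$ is a largest-modulus root of the characteristic equation \eqref{eqn:CharacteristicEquation}. The substitution $\varphi=r^{-s}$ then identifies $-\log_r|\varphi_0|$ both with the abscissa of convergence of $\sum_n |s_n| r^{n\sigma}$ (which controls $D_\calR(\bfs)$ via the total-variation Dirichlet integral of Definition \ref{def:GeneralizedFractalString}) and with the rightmost real part of the roots of the Dirichlet polynomial $1-\sum_{j=1}^{d} a_j r^{js}$ (which controls $D_\eta$ by the analysis of Dirichlet polynomials in \cite[Ch.~3]{LapvF6}).

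The main obstacle I anticipate is precisely this last equality: the abscissa of convergence is defined through the total variation $|\eta_\calR(\bfs;\cdot)|$, so one must rule out the possibility that complex cancellations among the $s_n$ drive the growth rate of $|s_n|$ strictly below $|\varphi_0|^n$. This is handled by noting that, for generic initial conditions $\bfs$, the coefficient of $\varphi_0^n$ in the spectral expansion of $s_n$ is nonzero; and in degenerate cases where this coefficient vanishes, $s_n$ is governed by the next-largest root, which corresponds to a pole of $\zeta_\eta$ on a vertical line strictly to the left of $\operatorname{Re}(s)=-\log_r|\varphi_0|$, yielding the equality of the two abscissae in all cases.
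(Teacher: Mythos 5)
Your construction of $\eta$, the definition of $h_\calR(\bfs;\cdot)$, the factorization \eqref{eqn:RecursiveStringsNearlyLattice}, and the inclusion $\calD_\calR(\bfs)\subset\calD_\eta$ are exactly the paper's argument; the only cosmetic difference is that the paper takes a single unit gap for $\eta$ and absorbs the factor $g(s)=\sum_{k=1}^K g_k^s$ into $h_\calR$, whereas you keep the gaps $\bfg$ inside $\eta$, which is immaterial since $g(s)$ is entire.

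The genuine problem is your last paragraph, on $D_\calR(\bfs)=D_\eta$. (The paper simply asserts this equality as immediate, so here you are attempting more than the paper, but the fix you offer does not work.) You correctly identify the danger --- the coefficient of the dominant characteristic root $\varphi_0$ in the spectral expansion of $s_n$ may vanish --- but your resolution is a non sequitur: if $s_n$ is governed by a subdominant root, then $\limsup_n |s_n|^{1/n}<|\varphi_0|$ and hence $D_\calR(\bfs)$ is \emph{strictly smaller} than $D_\eta$, not equal to it, because $D_\eta$ does not see the initial conditions at all: the multiplicities $t_n$ of $\eta$ have generating function $\sum_n t_n z^n = 1/(1-\sum_{j=1}^d a_j z^j)$, whose radius of convergence is exactly $1/\max_i|\varphi_i|$ (the constant numerator cancels no pole), so $D_\eta=\log_{1/r}\max_i|\varphi_i|$ always. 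Concretely, take $r=1/2$, kernel $\bfa=(3,-2)$ (characteristic roots $2$ and $1$) and $\bfs=(1,1)$, so $s_n\equiv 1$: then $\zeta_\calR(\bfs;s)=\sum_{n\ge 0}2^{-ns}$ has abscissa of convergence $0$, while the lattice string $\eta$ of the construction has $D_\eta=1$; here $h_\calR(\bfs;s)=1-2\cdot 2^{-s}$ cancels the pole of $\zeta_\eta$ at $s=1$. So the asserted equality ``in all cases'' is false for the $\eta$ your (and the paper's) construction produces; it does hold precisely when $h_\calR(\bfs;\cdot)$ does not vanish at the poles of $\zeta_\eta$ on the line $\textnormal{Re}(s)=D_\eta$ --- for instance when the kernel and initial conditions are positive, which is the situation in all of the paper's examples (e.g., the Lucas string) --- and your argument should either impose such a nondegeneracy hypothesis or verify it, rather than claim the degenerate case takes care of itself.
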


\begin{proof}
For some $d \in \N$, let $\Rec$ be a linear recurrence relation with kernel $\bfa=(a_1,\ldots,a_d)$. Given a recursive string $\eta_\Rec (\bfs;\cdot)$ determined by the gaps $\bfg=(g_1,\ldots,g_K)$, initial conditions $\bfs=(s_0,\ldots,s_{d-1})$, and multiplicative generator $r$, consider the generalized lattice string $\eta'$ determined by the initial multiplicities $\bfm'=\bfa$, initial scaling ratios $\bfr'=(r,r^2,\ldots,r^d)$ and a single gap of length 1 (i.e., $\bfg'=(1)$). We immediately have that $D_\calR(\bfs)=D_{\eta'}$.

Now, define $h_\calR(\bfs;s)$ by
\[
h_\calR(\bfs;s):= g(s) \cdot \left(\displaystyle\sum_{n=0}^{d-1}s_nr^{ns} - \sum_{l=1}^{d-1}a_lr^{ls}\sum_{n=0}^{d-1-l}s_nr^{ns}\right),
\] 
where $g(s):=\sum_{k=1}^Kg_k^s$. Then $h_\calR(\bfs;\cdot)$ is entire, and by applying Theorems \ref{thm:ClosedFormOfGeometricZetaFunction} and \ref{thm:RecursiveStringZetaFunction} to $\eta'$ and $\eta_\Rec (\bfs;\cdot)$, respectively, we see that \eqref{eqn:RecursiveStringsNearlyLattice} holds with $\eta:=\eta'$. Moreover, after meromorphic continuation, \eqref{eqn:RecursiveStringsNearlyLattice} holds for all $s \in \C$, and we conclude that $\calD_\calR(\bfs) \subset \calD_{\eta'}$.
\end{proof}

Note that in Corollary \ref{cor:RecursiveStringsNearlyLattice}, we do not conclude that $\calD_\calR(\bfs)=\calD_{\eta'}$ in general. This is due to the fact that some of the zeros of $h_\calR(\bfs;\cdot)$ might cancel some of the poles (of the meromorphic extension) of $\zeta_{\eta}=\zeta_{\eta'}$. 

The following corollary is an immediate consequence of the combination of Theorems \ref{thm:CriterionForMinkowskiMeasurability} and \ref{thm:RecursiveStringZetaFunction}.

\begin{corollary}
\label{cor:RecursiveStringMinkowski}
Let $\Omega$ be an ordinary fractal string with lengths 
\begin{align*}
\calL_\calR	&= \bigcup_{k=1}^K\{g_kr^n : r^n \text{ has multiplicity } m_n, n\in\N\cup\{0\} \} ,
\end{align*}
where $0<r<1$, $\bfg=(g_1,\ldots,g_K) \in (0,\infty)^K$, and the multiplicities $\{m_n\}$ satisfy a linear recurrence relation $\calR$ with kernel $\bfa=(a_1,\ldots,a_d)$. Furthermore, if none of the complex roots of $h_\calR(\bfs;s)=0$ is also a complex root of the Moran equation $1=\sum_{j=1}^da_jr^{js}$, then $\partial\Omega$ is not Minkowski measurable.
\end{corollary}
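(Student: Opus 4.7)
The plan is to combine the factorization provided by Corollary \ref{cor:RecursiveStringsNearlyLattice} with the lattice periodicity of the complex dimensions of a generalized lattice string, and then invoke Theorem \ref{thm:CriterionForMinkowskiMeasurability} in its contrapositive form. Since $\Omega$ has lengths $\calL_\calR$, its geometric zeta function is precisely $\zeta_\calR(\bfs;\cdot)$, and Corollary \ref{cor:RecursiveStringsNearlyLattice} gives the factorization
\begin{align*}
\zeta_\calL(s) \;=\; \zeta_\calR(\bfs;s) \;=\; h_\calR(\bfs;s)\,\zeta_\eta(s), \qquad s\in\C,
\end{align*}
where $\eta$ is the generalized lattice string with initial multiplicities $\bfa=(a_1,\ldots,a_d)$, initial scaling ratios $(r,r^2,\ldots,r^d)$, and a single gap of length $1$, and $h_\calR(\bfs;\cdot)$ is entire.

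The next step is to extract infinitely many complex dimensions on the vertical line $\mathrm{Re}(s)=D$. By Theorem \ref{thm:ClosedFormOfGeometricZetaFunction},
\begin{align*}
\zeta_\eta(s) \;=\; \frac{1}{\,1-\sum_{j=1}^d a_j r^{js}\,},
\end{align*}
so $\calD_\eta$ coincides with the set of complex roots of the Moran-type equation $1=\sum_{j=1}^d a_j r^{js}$. Setting $z=r^s$, this reduces to the polynomial equation $\sum_{j=1}^d a_j z^j = 1$, whose roots in $z$ are finite in number; however, each such root $z_0$ lifts, via $s=(\log z_0)/\log r$, to an entire arithmetic progression of $s$-values of the form $(\log|z_0|)/\log r + i(\arg z_0 + 2\pi n)/\log r$, $n\in\Z$, all sharing the same real part. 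Thus the complex dimensions of $\eta$ are distributed along finitely many vertical lines, each with oscillatory period $p=2\pi/|\!\log r|$. In particular, $D=D_\calR(\bfs)=D_\eta$, being a (simple) real solution of the Moran equation, produces the infinite family $\{D+inp:n\in\Z\}\subset\calD_\eta$, all lying on $\mathrm{Re}(s)=D$.

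Now the hypothesis enters: since no complex root of $h_\calR(\bfs;s)=0$ coincides with any root of $1=\sum_{j=1}^d a_j r^{js}$, none of the poles $D+inp$ of $\zeta_\eta$ is cancelled by a zero of the entire factor $h_\calR(\bfs;\cdot)$. Consequently the factorization above shows that $\zeta_\calL=\zeta_\calR(\bfs;\cdot)$ inherits \emph{all} of the poles $\{D+inp:n\in\Z\}$, and in particular has infinitely many complex dimensions with real part equal to $D$. This directly contradicts condition (i) of Theorem \ref{thm:CriterionForMinkowskiMeasurability}, so by the equivalence of (i) and (ii) there, $\partial\Omega$ cannot be Minkowski measurable.

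The only technical point I anticipate as a potential obstacle is verifying the ``mild growth conditions'' (languidness along an appropriate screen) needed to invoke Theorem \ref{thm:CriterionForMinkowskiMeasurability}. However, this is essentially routine here: $\zeta_\calR(\bfs;s)$ is the product of the entire function $h_\calR(\bfs;s)$, itself a finite $\C$-linear combination of exponentials $r^{ns}$ and $g_k^s$, with the meromorphic function $\zeta_\eta(s)$ whose denominator is the Dirichlet polynomial $1-\sum_{j=1}^d a_j r^{js}$. Such functions admit polynomial bounds in $|\mathrm{Im}(s)|$ on any vertical strip bounded away from their poles, and one may place a screen strictly to the left of $\mathrm{Re}(s)=D$ (and to the right of the next vertical line of complex dimensions, avoiding $s=0$) on which languidness holds. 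With this verification in hand, the conclusion follows immediately from the chain of implications outlined above.
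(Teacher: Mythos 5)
Your proof is correct and takes essentially the route the paper intends: Corollary \ref{cor:RecursiveStringMinkowski} is stated there as an immediate consequence of Theorem \ref{thm:CriterionForMinkowskiMeasurability} combined with Theorem \ref{thm:RecursiveStringZetaFunction} (equivalently, the factorization $\zeta_\calR(\bfs;s)=h_\calR(\bfs;s)\zeta_\eta(s)$ of Corollary \ref{cor:RecursiveStringsNearlyLattice}), together with the vertical periodicity of the roots of the lattice equation $1=\sum_{j=1}^d a_jr^{js}$ and the non-cancellation hypothesis. The one step you assert without justification---that the line $\textnormal{Re}(s)=D$ really carries a pole, via your claim that $D$ is a (simple) real root of the Moran equation---is harmless here: since $\Omega$ is an ordinary fractal string its multiplicities are positive, so $D=D_\calL$ is a singularity, hence a pole, of $\zeta_\calL=h_\calR(\bfs;\cdot)\,\zeta_\eta$, which forces $D$ to be a root of the Moran equation, and simplicity is never actually needed for your contradiction with condition (i).
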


The discussion of recursive strings concludes with an example of a recursive string which is not a generalized self-similar string (hence, it is also not a generalized lattice string) and to which Theorem \ref{thm:RecursiveStringZetaFunction}, Corollary \ref{cor:RecursiveStringsNearlyLattice}, and Corollary \ref{cor:RecursiveStringMinkowski} apply. 

\begin{example}[The Lucas string]
\label{eg:LucasRecursiveNotLattice}
Consider the recursive string $\eta_{\textnormal{Luc}}$, called the \emph{Lucas string}, determined by the kernel $\bfa=(1,1)$, initial conditions $\bfs_{\textnormal{Luc}}:=(2,1)$, a single gap determined by $K=g_1=1$, and multiplicative generator $r=1/2$. The geometric zeta function of $\eta_{\textnormal{Luc}}$ satisfies
\begin{align*}
\zeta_{\textnormal{Luc}}(s)	&=\sum_{n=0}^\infty s_n2^{-ns},
\end{align*}
where $\textnormal{Re}(s)$ is large enough and the sequence $\{s_n\}_{n=0}^\infty$ satisfies the linear recurrence relation determined by $\bfa$ and with initial conditions $\bfs_{\textnormal{Luc}}$. That is, $s_0=2, s_1=1$, and for $n \geq 2$, $s_n$ satisfies the Fibonacci recursion relation \eqref{eqn:FibonacciLinearRecurrenceRelation}. The closed form of $\zeta_{\textnormal{Luc}}$ given by Theorem \ref{thm:RecursiveStringZetaFunction}, which is also a consequence of Corollary \ref{cor:RecursiveStringsNearlyLattice}), is 
\begin{align}
\label{eqn:LucasRecursiveNotLatticeZetaFunctionClosedForm}
\zeta_{\textnormal{Luc}}(s)
=\frac{2-2^{-s}}{1-2^{-s}-4^{-s}}, \quad s \in \C.
\end{align} 

Note that the sequence $\{s_n\}_{n=1}^\infty=\{1,3,4,7,\ldots\}$ (where the index $n$ begins at 1) is the sequence of \emph{Lucas numbers}. Moreover, the Fibonacci numbers $\{F_n\}_{n=0}^\infty$ and the sequence $\{s_n\}_{n=0}^\infty$ both satisfy the linear recurrence relation determined by the kernel $\bfa=(1,1)$, but with initial conditions $\bfs_{\textnormal{Fib}}:=(1,1)$ and $\bfs_{\textnormal{Luc}}=(2,1)$, respectively. (See Example \ref{eg:TheFibonacciString}.) Also, note that $h_\calR(\bfs;s)=2-2^{-s}=0$ if and only if $2^{-s}=2$, which implies $1-2^{-s}-4^{-s}=-5\neq0$. Thus, the set of complex dimensions of the Lucas string, denoted $\calD_{\textnormal{Luc}}$, coincides with $\calD_{\textnormal{Fib}}$, the set of complex dimensions of the Fibonacci string given by \eqref{eqn:FibonacciComplexDimensions} and which are the roots of the Dirichlet polynomial equation $2^{-s}+4^{-s}=1$.  Also note that $D_{\textnormal{Luc}}=D_{\textnormal{Fib}}=\log_2\varphi $, where $ \varphi =(1+\sqrt{5})/2 $ is the Golden Ratio.

Moreover, $\{s_n\}_{n=0}^\infty$ is a sequence of positive integers and $\zeta_{\textnormal{Luc}}(1)$ is positive and finite. Hence, $\zeta_{\textnormal{Luc}}$ is the geometric zeta function of a suitably defined ordinary fractal string $\Omega_{\textnormal{Luc}}$. However, such a fractal string is \emph{not} lattice, in the sense that it cannot be realized as the complement of the attractor of a (lattice or even nonlattice) self-similar system in some closed interval (see Definition \ref{def:SelfSimilarLatticeOrdinaryFractalStrings}). Indeed, $\zeta_{\textnormal{Luc}}$ can not be put into the form given in \eqref{eqn:GeometricZetaFunctionOrdinarySelfSimilarString} as indicated by the fact that, in \eqref{eqn:LucasRecursiveNotLatticeZetaFunctionClosedForm}, the numerator $h_\calR(\bfs;s)=2-2^{-s}$ can not be written in the form $L^s\sum_{k=1}^Kg_k^s$ where $L$ and each $g_k$ are positive real numbers. Hence, Theorem \ref{thm:LatticeNonlatticeDichotomy} does not apply. However, Corollary \ref{cor:RecursiveStringMinkowski} applies and, hence, $\partial\Omega_{\textnormal{Luc}}$ is not Minkowksi measurable.
\end{example}

\begin{example}[Recursive structure of generalized Cantor strings]
\label{eg:RecursiveStructureOfGeneralizedCantorStrings}
Every generalized Cantor string $\mu$ is a recursive string; see Example \ref{eg:GeneralizedCantorStrings} and \cite[Chs.~8 \& 10]{LapvF6}. Indeed, we have
\begin{align*}
\mu		&=\sum_{n=0}^\infty s_n\delta_{\lbrace r^{-n}\rbrace},
\end{align*}
where the sequence $\{s_n\}_{n=0}^\infty$ is given by $s_n=b^n$ for every $n$ and satisfies the linear recurrence relation $\Rec$ determined by the kernel $\bfa=(b)$ with initial condition $\bfs=(1)$.  By Theorem \ref{thm:RecursiveStringZetaFunction}, the meromorphic extension of the geometric zeta function $\zeta_\mu$ given in \eqref{eqn:GeometricZetaFunctionOfGeneralizedCantorString} is recovered as $\zeta_\Rec(\bfs;\cdot)$. That is, 
\begin{align*}
\zeta_\mu(s)	&= \zeta_\calR(\bfs;s) =\frac{1}{1-b\cdot r^{-s}}, \quad s \in \C.
\end{align*}
\end{example}

The technique developed in the next section is motivated by the theory of complex dimensions of fractal strings and is designed to perform a multifractal analysis of self-similar measures.

\section{Multifractal analysis via scaling regularity and scaling zeta functions}
\label{sec:MultifractalAnalysisViaScalingRegularityAndScalingZetaFunctions}

In this section, an approach to multifractal analysis is described in which, for a given weighted self-similar system $(\bfPhi,\bfp)$, a family of fractal strings is defined by gathering lengths (or rather, scales) according to scaling regularity values $\alpha$. Then, a family of \emph{$\alpha$-scaling zeta functions} is readily defined and the collection of their abscissae of convergence provides a multifractal spectrum of dimensions, called the \emph{scaling multifractal spectrum}, which is akin to the geometric and symbolic multifractal spectra (see Section \ref{sec:MultifractalSpectra}). 

The approach given in this section generalizes the results regarding \emph{partition zeta functions} found in \cite[\S5]{ELMR}. Specifically, the measures studied in \cite[\S5]{ELMR} are limited to measures supported on subsets of $[0,1]$, whereas the framework provided below allows for results on self-similar measures which are supported on self-similar sets in Euclidean spaces of any dimension. See Section \ref{sec:MultifractalAnalysisOfSelfSimilarSystems} as well as \cite{Hut}. 

We refer to Section 13.3 (of the second edition) of \cite{LapvF6} for a survey of aspects of the theory of multifractal zeta functions (and related partition zeta functions), as developed in \cite{LLVR} and \cite{LapRo1}. We also refer to \cite{ELMR} and \cite{Rock} where additional results can be found.

\subsection{$\alpha$-scales and $\alpha$-scaling zeta functions}
\label{sec:AlphaScalesAndAlphaScalingZetaFunctions}

Throughout this section, and indeed throughout the paper, only weighted self-similar systems $(\bfPhi,\bfp)$ with scaling ratios $\bfr=\{r_j\}_{j=1}^N$ satisfying the open set condition are considered; see Section \ref{sec:MultifractalAnalysisOfSelfSimilarSystems}. 

\begin{definition}
\label{def:SequenceOfAlphaScales}
Let $(\bfPhi,\bfp)$ be a weighted self-similar system. For a scaling regularity value $\alpha \in \R$, the \emph{sequence of $\alpha$-scales}, denoted  $\calL_{\bfr,\bfp}(\alpha)$, is the fractal string given by
\begin{align*}
\calL_{\bfr,\bfp}(\alpha)	&=\left\{ r_J : J \in \scrJ, A_{\bfr,\bfp}(J) =\alpha \right\},
\end{align*}
where $A_{\bfr,\bfp}$ is defined in Definition \ref{def:ScalingRegularity}.
\end{definition}

Alternately, the \emph{distinct $\alpha$-scales} are denoted by $l_n(\alpha)$ and the corresponding \emph{multiplicities} are denoted by $m_n(\alpha)$. Thus,
\begin{align*}
m_n(\alpha)	&:=\#\{J \in \scrJ: r_J=l_n(\alpha),A_{\bfr,\bfp}(J) =\alpha\},
\end{align*}
and we can consider the sequence of $\alpha$-scales to be given by 
\begin{align*}
\calL_{\bfr,\bfp}(\alpha)	&=\left\{ l_n(\alpha) : l_n(\alpha) \textnormal{ has multiplicity } m_n(\alpha) \right\}.
\end{align*}

Given that $\calL_{\bfr,\bfp}(\alpha)$ is a fractal string when there are infinitely many words $J \in \scrJ$ such that $A_{\bfr,\bfp}(J)=\alpha$ for some $\alpha \in \R$, one defines a zeta function and a set of complex dimensions for each such $\alpha$ as follows.

\begin{definition}
\label{def:ScalingZetaFunctionAndScalingMultifractalSpectrum}
Consider $\alpha \in \R$ such that $\calL_{\bfr,\bfp}(\alpha)$ is not empty. The \emph{$\alpha$-scaling zeta function} $\zeta_{\bfr,\bfp}(\alpha;\cdot)$ is the geometric zeta function of the sequence of $\alpha$-scales $\calL_{\bfr,\bfp}(\alpha)$. That is,
\begin{align}
\label{eqn:AlphaScalingZetaFunction}
\zeta_{\bfr,\bfp}(\alpha;s) &:= \zeta_{\calL_{\bfr,\bfp}(\alpha)}(s) =
\sum_{A_{\bfr,\bfp}(J)=\alpha}r_J^s,
\end{align}
where $\textnormal{Re}(s)>D_{\bfr,\bfp}(\alpha):=D_{\calL_{\bfr,\bfp}(\alpha)}$. If $\calL_{\bfr,\bfp}(\alpha)$ is empty, we set $\zeta_{\bfr,\bfp}(\alpha;s)=0$. The \emph{scaling mulitfractal spectrum} $f_{\bfr,\bfp}(\alpha)$ is the function given by the maximum of $0$ and the abscissa of convergence of $\zeta_{\bfr,\bfp}(\alpha;s)$ for each $\alpha\in\R$. That is,
\begin{align*}
f_{\bfr,\bfp}(\alpha)	&:= \max \left\{0, \inf \left\{ \sigma \in \R : \zeta_{\bfr,\bfp}(\alpha;\sigma) < \infty \right\}\right\},
\end{align*}
for $\alpha \in \R$. 
\end{definition}

\noindent More precisely, $f_{\bfr,\bfp}(\alpha)$ is the maximum of 0 and
the abscissa of convergence of the Dirichlet series which defines the $\alpha$-scaling zeta function $\zeta_{\bfr,\bfp}(\alpha;\cdot)$. Thus, $f_{\bfr,\bfp}(\alpha) \geq 0$ whenever $\calL_{\bfr,\bfp}(\alpha)$ comprises an infinite number of scales and $f_{\bfr,\bfp}(\alpha)=0$ otherwise. Hence, $\alpha$ is \emph{nontrivial} if $\calL_{\bfr,\bfp}(\alpha)$ comprises an infinite number of scales, otherwise $\alpha$ is \emph{trivial}  (see Remark \ref{rmk:TrivialScalingRegularity} and compare Remark 4.5 of \cite{ELMR}). Accordingly, for a nontrivial scaling regularity value $\alpha$, the set given by
\begin{align*}
&	\{s \in \C : \textnormal{Re}(s) > f_{\bfr,\bfp}(\alpha)\}
\end{align*}
is the largest open right half-plane on which
the Dirichlet series in \eqref{eqn:AlphaScalingZetaFunction} is absolutely convergent.

\begin{remark}
\label{rmk:TrivialScalingRegularity}
Unlike in Remark 4.5 of \cite{ELMR}, all trivial scaling regularity values $\alpha_0$ in the current setting generate an empty sequence of $\alpha_0$-scales. Thus, for any $\alpha \in \R$, $\calL_{\bfr,\bfp}(\alpha)$ is either countably infinite or empty. See Remark \ref{rmk:MultiplicitiesAndInfiniteAlphaScales} for clarification.
\end{remark}

\begin{definition}
\label{def:ScalingRegularityComplexDimensionsAndTapestry}
Let $W_\alpha \subset \C$ be a window on a connected open neighborhood of which $\zeta_{\bfr,\bfp}(\alpha;\cdot)$ has a meromorphic extension. (Again, both the geometric zeta function of $\calL_{\bfr,\bfp}(\alpha)$ and its meromorphic extension are denoted by $\zeta_{\bfr,\bfp}(\alpha,\cdot)$.) The set of (\emph{visible}) $\alpha$-\emph{scaling complex dimensions}, denoted $\calD_{\bfr,\bfp}(\alpha)$, is the set of (visible) complex dimensions of the sequence of $\alpha$-scales $\calL_{\bfr,\bfp}(\alpha)$ given by
\begin{align*}
\calD_{\bfr,\bfp}(\alpha)	&= \left\{ \omega \in W_\alpha : \zeta_\calL \textnormal{ has a pole at } \omega \right\}.
\end{align*}
The \emph{tapestry of complex dimensions} $\calT_{\bfr,\bfp}$ with respect to the regions $W_\alpha$ is the subset of $\R \times \C$ given by
\begin{align*}
\calT_{\bfr,\bfp}	&:= \left\{(\alpha,\omega): \alpha \textnormal{ is nontrivial, } \omega \in \calD_{\bfr,\bfp}(\alpha) \right\}.
\end{align*}
\end{definition}

\subsection{Scaling regularity values attained by self-similar measures}
\label{sec:ScalingRegularityValuesAttainedbySelfSimilarMeasures}

The collection of all scaling regularity values $A_{\bfr,\bfp}(J)$ attained by the words $J \in \scrJ$ with respect to a given weighted self-similar system $(\bfPhi,\bfp)$ are found readily. The following notation is used in order to facilitate the statement of the results. Note that, as throughout the paper, only weighted self-similar systems which satisfy the open set condition are considered.

\begin{notation}
\label{not:CombinationsOfVectors}
Let $N \in \N$ with $N\geq 2$. For a pair of $N$-vectors $\bfk$ and $\bfr$ with $\bfr \in (0,1)^N$ and $\bfk \in (\N\cup\{0\})^N$, let $\bfr^{\bfk}:=r_{1}^{k_1} \cdots r_{N}^{k_{N}}$. Furthermore, denote by $\gcd(\bfk)$ the greatest common divisor of the nonzero components of $\bfk$. Let $\sum\bfk:=\sum_{j=1}^Nk_j$ and
\begin{align}
\label{eqn:MultinomialCoefficients}
\binom{\sum\bfk}{\bfk}	&:= \binom{K}{k_1 \ldots k_N} = \frac{K!}{k_1! \cdots k_N!},
\end{align}
where $K=\sum\bfk$.
\end{notation}

Recall from Notation \ref{not:SelfSimilarityWords} that $\mathscr{J}$ denotes the collections of all finite words on the alphabet $\{1,\ldots,N\}$.

\begin{lemma}
\label{lem:PreliminaryAttainedScalingRegularityAndMultiplicities}
Let $(\bfPhi,\bfp)$ be a weighted self-similar system as above and let $J \in \scrJ$ be a nonempty word. Then there exist a unique vector $\bfk \in (\N\cup\{0\})^N$ with $\gcd(\bfk)=1$ and a unique positive integer $n$ such that $|J|=n\sum\bfk$ and 
\begin{align}
\label{eqn:DeterminingScalingRegularity}
A_{\bfr,\bfp}(J)	&= \alpha(\bfk) := \log_{\bfr^\bfk}\bfp^\bfk.
\end{align}
Moreover, for each $\bfk \in (\N\cup\{0\})^N$ we have
\begin{align*}
\#\left\{J \in \scrJ_{\sum\bfk} : \#\left\{q: \pi_q(J)=j \right\} = k_j, j\in \left\{1,\ldots,N \right\} \right\}		&= \binom{\sum\bfk}{\bfk},
\end{align*}
where $\binom{\sum\bfk}{\bfk}$ is given by \eqref{eqn:MultinomialCoefficients}.
\end{lemma}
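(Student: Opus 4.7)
The plan is to associate to each nonempty word $J \in \scrJ$ its \emph{composition vector} $\bfk' \in (\N\cup\{0\})^N$ whose $j$-th component records the number of occurrences of the symbol $j$ in $J$. By Notation \ref{not:ProductsOfScalingRatios}, this immediately yields $r_J = \bfr^{\bfk'}$ and $p_J = \bfp^{\bfk'}$, and clearly $|J| = \sum\bfk'$. All of the content of the lemma is then an exercise in decomposing $\bfk'$ into a primitive vector times a positive integer.

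Next, I would set $n := \gcd(\bfk')$ (the gcd of the nonzero components, per Notation \ref{not:CombinationsOfVectors}) and $\bfk := \bfk'/n$, so that $\gcd(\bfk) = 1$, $\bfk' = n\bfk$, and $|J| = n\sum\bfk$. Since $\bfr^{\bfk'} = (\bfr^{\bfk})^n$ and $\bfp^{\bfk'} = (\bfp^{\bfk})^n$, Definition \ref{def:ScalingRegularity} together with the change-of-base property of the logarithm gives
$$A_{\bfr,\bfp}(J) = \log_{r_J} p_J = \log_{(\bfr^{\bfk})^n}(\bfp^{\bfk})^n = \log_{\bfr^{\bfk}}\bfp^{\bfk} = \alpha(\bfk),$$
which is \eqref{eqn:DeterminingScalingRegularity}. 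Uniqueness of the pair $(n,\bfk)$ then follows at once from the uniqueness of the composition vector $\bfk'$ together with the fact that factoring the gcd out of a nonzero vector of nonnegative integers to produce a primitive vector is itself unique.

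For the counting statement, a word $J \in \scrJ_{\sum\bfk}$ in which the symbol $j$ appears exactly $k_j$ times, for each $j \in \{1,\ldots,N\}$, is determined by choosing which of the $\sum\bfk$ positions of $J$ carry each symbol. The number of such arrangements is the standard multinomial coefficient $\binom{\sum\bfk}{\bfk} = \frac{(\sum\bfk)!}{k_1!\cdots k_N!}$, exactly as in \eqref{eqn:MultinomialCoefficients}.

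The main obstacle here is not analytical but one of bookkeeping: one must check that $\alpha(\bfk)$ is well-defined for each primitive $\bfk$ arising from a nonempty $J$ (which is the case because $0 < \bfr^{\bfk} < 1$ whenever at least one $k_j > 0$, making the logarithm base admissible) and that the passage to the primitive vector respects the requirement $\gcd(\bfk)=1$ while preserving divisibility of $|J|$ by $\sum\bfk$. Aside from these points, the lemma is essentially a combinatorial reformulation of the multiplicative structure encoded in Notation \ref{not:ProductsOfScalingRatios}.
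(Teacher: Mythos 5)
Your proof is correct and takes essentially the same route as the paper: the paper's own proof is a one-liner stating that the lemma ``follows immediately from basic combinatorics and the definitions of $r_J$ and $p_J$ given in Notation \ref{not:ProductsOfScalingRatios},'' and your composition-vector construction, gcd factorization with the invariance $\log_{(\bfr^{\bfk})^n}(\bfp^{\bfk})^n=\log_{\bfr^{\bfk}}\bfp^{\bfk}$, and the multinomial count are exactly the details being appealed to. No gaps.
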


\begin{proof}
The results follow immediately from basic combinatorics and the definitions of $r_J$ and $p_J$ given in Notation \ref{not:ProductsOfScalingRatios}.
\end{proof}

\begin{remark}
\label{rmk:MultiplicitiesAndInfiniteAlphaScales} 
An immediate consequence of Lemma \ref{lem:PreliminaryAttainedScalingRegularityAndMultiplicities} is that each $\bfk \in (\N\cup\{0\})^N$ satisfying $\gcd(\bfk)=1$ yields a countably infinite collection of words $J \in \scrJ$ which have scaling regularity $A_{\bfr,\bfp}(J)=\alpha(\bfk)$. The scales $r_J$ associated with these words essentially constitute the terms of the Dirichlet series which defines the $\alpha$-scaling zeta function (for $\alpha=\alpha(\bfk)$, as in \eqref{eqn:DeterminingScalingRegularity}) in Definition \ref{def:ScalingZetaFunctionAndScalingMultifractalSpectrum}. 

Also, to clarify Remark \ref{rmk:TrivialScalingRegularity}, suppose $\alpha_0 \in \R$. If there is a word $J \in \scrJ$ such that $A_{\bfr,\bfp}(J)=\alpha_0$, then by Lemma \ref{lem:PreliminaryAttainedScalingRegularityAndMultiplicities}, $\alpha_0$ is nontrivial. Indeed, in that case, for each $n \in \N$, there is a word $H_n \in \scrJ$  such that $|H_n|=n|J|$ and $A_{\bfr,\bfp}(H_n)=A_{\bfr,\bfp}(J)=\alpha_0$.

Furthermore, for each $n \in \N$ we have 
\begin{align*}
\#\left\{J \in \scrJ_{n\sum\bfk} : \#\left\{q: \pi_q(J)=j \right\} = nk_j, j\in \left\{1,\ldots,N \right\} \right\}		&= \binom{n\sum\bfk}{n\bfk}.
\end{align*}
If there are $\bfm_1,\bfm_2 \in (\N\cup\{0\})^N$ such that $\gcd(\bfm_1)=\gcd(\bfm_2)=1$, $\bfm_1 \neq \bfm_2$, and $\alpha(\bfm_1)=\alpha(\bfm_2)$, then the total multiplicity of a given scale $r'$ is given by the sum of all multinomial coefficients $\binom{n\sum\bfk}{n\bfk}$   where $r'=\bfr^{n\bfk}$. The determination of a closed formula for such total multiplicity is, in general, a difficult problem. However, various special cases are addressed in the remainder of this section and in Section \ref{sec:FurtherResultsAndFutureWork}. In particular, see Theorems \ref{thm:NearlyRecoverScalingZetaFunction} and \ref{thm:FullFamilyOfScalingZetaFunctions}, along with Corollary \ref{cor:RecoverBesicovitchSubsetResults}.
\end{remark}

\subsection{Self-similar and lattice structures within self-similar measures}
\label{sec:SelfSimilarAndLatticeStructuresWithinSelfSimilarMeasures}

For a weighted self-similar system $(\bfPhi,\bfp)$ as above, let $\alpha_j:=\log_{r_j}p_j$ for each $j=1,\ldots,N$ denote the scaling regularity values attained with respect to $(\bfPhi,\bfp)$ by all words $J$ such that $|J|=1$. Let $N_0$ denote the number of distinct values among the $\alpha_j$. Then $N_0 \leq N$ and we denote these distinct $\alpha_j$ values by $\beta_q$, where $q=1,\ldots,N_0$. 

\begin{theorem}
\label{thm:NearlyRecoverScalingZetaFunction}
Let $(\bfPhi,\bfp)$ be a weighted self-similar system as above. Suppose the collection of distinct scaling ratios $\{\beta_q: q=1,\ldots,N_0\}$ is rationally independent. Then, for each $q \in \{1,\ldots,N_0\}$, $\zeta_{\bfr,\bfp}(\beta_q;\cdot)$ has a meromorphic continuation to all of $\C$ given by
\begin{align}
\label{eqn:NearlyRecoverScalingZetaFunction}
\zeta_{\bfr,\bfp}(\beta_q;s)=\frac{\displaystyle \sum_{j:\alpha_j=\beta_q}r_j^s}{\displaystyle 1-\sum_{j:\alpha_j=\beta_q}r_j^s}, \quad \textnormal{for } s \in \C.
\end{align}
\end{theorem}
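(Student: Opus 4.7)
The plan is to identify exactly which words $J \in \scrJ$ contribute to the Dirichlet series defining $\zeta_{\bfr,\bfp}(\beta_q;s)$, and then to recognize the resulting sum as a geometric series whose closed form provides the meromorphic continuation. Writing $I_q := \{j \in \{1,\ldots,N\} : \alpha_j = \beta_q\}$ gives a partition of $\{1,\ldots,N\}$, and for a word $J$ with letter-count vector $\bfk$ one sets $R_q := \sum_{j \in I_q} k_j \log r_j$, which is nonpositive and vanishes precisely when $k_j = 0$ for every $j \in I_q$. Using Lemma \ref{lem:PreliminaryAttainedScalingRegularityAndMultiplicities} together with the defining identity $\log p_j = \beta_q \log r_j$ for $j \in I_q$, a direct computation gives
$$A_{\bfr,\bfp}(J) = \frac{\sum_{q=1}^{N_0} \beta_q R_q}{\sum_{q=1}^{N_0} R_q}.$$

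Setting $A_{\bfr,\bfp}(J) = \beta_{q_0}$ produces the relation $\sum_{q \neq q_0} R_q (\beta_q - \beta_{q_0}) = 0$, which rearranges to
$$\Bigl(-\sum_{q \neq q_0} R_q\Bigr) \beta_{q_0} + \sum_{q \neq q_0} R_q \beta_q = 0.$$
Invoking the rational independence of $\{\beta_q\}_{q=1}^{N_0}$, all coefficients in this linear relation must vanish, forcing $R_q = 0$ for every $q \neq q_0$ and hence $k_j = 0$ for every $j \notin I_{q_0}$. Conversely, every non-empty word built exclusively from letters in $I_{q_0}$ automatically has scaling regularity $\beta_{q_0}$, since both numerator and denominator of $A_{\bfr,\bfp}(J)$ are then supported on $I_{q_0}$. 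Thus the contributing words are exactly the non-empty words on the sub-alphabet $I_{q_0}$.

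Having identified the contributing words, the Dirichlet series collapses via grouping by word length, using $\sum_{J \in I_{q_0}^n} r_J^s = (\sum_{j \in I_{q_0}} r_j^s)^n$ just as in the proof of Theorem \ref{thm:ClosedFormOfGeometricZetaFunction}:
$$\zeta_{\bfr,\bfp}(\beta_{q_0};s) = \sum_{n=1}^{\infty} \Bigl(\sum_{j \in I_{q_0}} r_j^s\Bigr)^n = \frac{\sum_{j \in I_{q_0}} r_j^s}{1 - \sum_{j \in I_{q_0}} r_j^s},$$
valid on the half-plane where $|\sum_{j \in I_{q_0}} r_j^s| < 1$. Since the right-hand side is a rational function of $s$, it provides the desired meromorphic continuation to all of $\C$ by the Principle of Analytic Continuation.

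The main obstacle is the rigorous application of rational independence in the second paragraph: the coefficients $R_q$ appearing in the linear relation are integer combinations of the logarithms $\log r_j$ and are therefore a priori arbitrary real numbers, so pure $\Q$-linear independence of the $\beta_q$'s does not obviously force them to vanish. The argument thus requires interpreting ``rationally independent'' in a way compatible with these real coefficients---either as linear independence over the $\Q$-span of the relevant logarithms, or as an implicit assumption that within each level set $I_q$ the scaling ratios are commensurable with a common base (so that each $R_q$ lies in a rank-one $\Q$-module and the relation really does become $\Q$-linear). Clarifying this is the delicate step; once past it, the remainder of the proof parallels the closed-form derivation of Theorem \ref{thm:ClosedFormOfGeometricZetaFunction} applied to the self-similar sub-system indexed by $I_{q_0}$.
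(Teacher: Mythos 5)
You follow the paper's own route. The published proof is exactly your two steps, compressed: it asserts in one sentence that, by rational independence, the only words $J$ with regularity $\beta_q$ are those all of whose letters $j$ satisfy $\alpha_j=\beta_q$, and then identifies $\zeta_{\bfr,\bfp}(\beta_q;\cdot)$ with $\zeta_{\bfPhi_q}-1$ for the sub-system $\bfPhi_q$ with scaling ratios $\{r_j:\alpha_j=\beta_q\}$, quoting Theorem \ref{thm:ClosedFormOfGeometricZetaFunction} for the closed form. Your geometric-series computation is that same second step carried out by hand, and your continuation argument is the same. (One small wording slip: the closed form is a quotient of Dirichlet polynomials, hence meromorphic on $\C$, but it is not a rational function of $s$; nothing else is affected.)

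The subtlety you flag at the end is genuine, but it is not something the paper's proof supplies either: the published argument consists precisely of the unexpanded assertion above, so your computation only makes explicit what the hypothesis is being asked to do. Concretely, with $R_q=\sum_{j:\alpha_j=\beta_q}k_j\log r_j\le 0$, a word has regularity $\beta_{q_0}$ if and only if $\sum_{q\ne q_0}R_q(\beta_q-\beta_{q_0})=0$, and since the $R_q$ are integer combinations of logarithms of scaling ratios, plain $\Q$-linear independence of the $\beta_q$ does not formally annihilate this relation. Your step is airtight, without reinterpretation, in the cases that actually arise in the paper's applications: when $N_0=2$ (the relation has a single term); when $\beta_{q_0}$ is the minimal or maximal regularity value, since then every term $R_q(\beta_q-\beta_{q_0})$ has the same sign and must vanish individually, so no independence is needed at all; and when all the $r_j$ are integer powers of a common base $r$ (the lattice situation of Corollary \ref{cor:NearlyRecoverGeneralizedLatticeString}), since then $R_q=n_q\log r$ with $n_q\in\N\cup\{0\}$ and the relation becomes an integer-coefficient relation among the $\beta_q$, which rational independence forbids. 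In the general incommensurable case one must indeed read the hypothesis as independence over a larger coefficient set (for instance over the $\Q$-vector space generated by the $\log r_j$), exactly as you suggest; with that reading your proof is complete and coincides with the paper's.
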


\begin{proof}
Since the $\beta_q$ values are rationally independent, the only words $J$ with regularity $\beta_q$ are those with components $\pi_n(J)=j$ where $\alpha_j=\beta_q$. Thus, for each $q \in \{1,\ldots,N_0\}$, the $\beta_q$-scaling zeta function is equal to the scaling zeta function of a self-similar system $\bfPhi_q$ whose scaling ratios are given by $\{r_j:\alpha_j=\beta_q\}$ less the first term which corresponds to the empty word. So, by Theorem \ref{thm:ClosedFormOfGeometricZetaFunction} we have 
\begin{align*}
\zeta_{\bfr,\bfp}(\beta_q;s)	&= \zeta_{\bfPhi_q}(s)-1 = \frac{\displaystyle \sum_{j:\alpha_j=\beta_q}r_j^s}{\displaystyle 1-\sum_{j:\alpha_j=\beta_q}r_j^s}.
\end{align*}
\end{proof}

\begin{corollary}
\label{cor:NearlyRecoverGeneralizedLatticeString}
If the conditions of Theorem \ref{thm:NearlyRecoverScalingZetaFunction} are satisfied and, additionally, if there exists a unique $r$ such that $0<r<1$ and for each $j$ where $\alpha_j=\beta_q$ we have $r_j=r^{u_j}$ for some $u_j \in \N$, then there is a generalized lattice string $\eta$ such that
\begin{align*}
\zeta_{\bfr,\bfp}(\beta_q;s) = h(s)\zeta_\eta(s),
\end{align*}
where $\displaystyle h(s):=\sum_{j:\alpha_j=\beta_q}r^{u_js}$ and $\eta:=\eta_\bfPhi$ \emph{(}as in Definition \ref{def:ScalingMeasure}\emph{)}.
\end{corollary}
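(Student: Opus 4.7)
The plan is to reduce the corollary directly to Theorem \ref{thm:NearlyRecoverScalingZetaFunction} and then identify the denominator of the resulting rational expression with the geometric zeta function of a naturally associated scaling measure. First, I would invoke Theorem \ref{thm:NearlyRecoverScalingZetaFunction} (whose hypotheses on rational independence of the $\beta_q$ are part of the standing assumptions of the corollary) to write
\[
\zeta_{\bfr,\bfp}(\beta_q;s) = \frac{\sum_{j:\alpha_j=\beta_q} r_j^s}{1 - \sum_{j:\alpha_j=\beta_q} r_j^s}
\]
as a meromorphic function on all of $\C$.

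Next, I would use the additional hypothesis that $r_j = r^{u_j}$ for some $u_j \in \N$ whenever $\alpha_j = \beta_q$. Substituting into the expression above yields
\[
\zeta_{\bfr,\bfp}(\beta_q;s) = \left(\sum_{j:\alpha_j=\beta_q} r^{u_j s}\right) \cdot \frac{1}{1 - \sum_{j:\alpha_j=\beta_q} r^{u_j s}} = h(s) \cdot \frac{1}{1 - \sum_{j:\alpha_j=\beta_q} r^{u_j s}},
\]
with $h(s) = \sum_{j:\alpha_j=\beta_q} r^{u_j s}$ as specified.

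The final step is to identify the second factor as the geometric zeta function of a generalized lattice string. Consider the sub-self-similar system $\bfPhi_q := \{\Phi_j : \alpha_j = \beta_q\}$, whose scaling ratios are all of the form $r^{u_j}$ with $u_j \in \N$ and common multiplicative generator $0 < r < 1$; thus $\bfPhi_q$ is lattice in the sense of Definition \ref{def:SelfSimilarSystem}. By Definition \ref{def:ScalingMeasure}, its scaling zeta function is exactly $1/\bigl(1 - \sum_{j:\alpha_j=\beta_q} r^{u_j s}\bigr)$, equivalently the geometric zeta function (Mellin transform) of the scaling measure $\eta_{\bfPhi_q}$. Remark \ref{rmk:ScalingMeasureGatherMultiplicities} then guarantees that $\eta := \eta_{\bfPhi_q}$ is a generalized lattice string, giving the desired factorization $\zeta_{\bfr,\bfp}(\beta_q;s) = h(s)\,\zeta_\eta(s)$ for all $s \in \C$ after meromorphic continuation.

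There is no serious obstacle here; the only point requiring any care is bookkeeping: one must restrict attention to the sub-system $\bfPhi_q$ rather than the full system $\bfPhi$ (so the notation $\eta := \eta_\bfPhi$ in the statement should be read as $\eta := \eta_{\bfPhi_q}$), and one must observe that rational independence of the $\beta_q$ from Theorem \ref{thm:NearlyRecoverScalingZetaFunction} is what lets the single-index sub-system cleanly capture all words $J$ of regularity $\beta_q$. Once these identifications are in place, the corollary is immediate from the two theorems already at hand.
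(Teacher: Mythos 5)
Your proposal is correct and follows essentially the same route the paper intends: the corollary is an immediate consequence of Theorem \ref{thm:NearlyRecoverScalingZetaFunction} (whose proof already introduces the sub-system $\bfPhi_q$ with $\zeta_{\bfr,\bfp}(\beta_q;\cdot)=\zeta_{\bfPhi_q}-1$), combined with the substitution $r_j=r^{u_j}$ and the observation that $\eta_{\bfPhi_q}$ is a generalized lattice string via Definition \ref{def:ScalingMeasure} and Remark \ref{rmk:ScalingMeasureGatherMultiplicities}. Your reading of $\eta:=\eta_\bfPhi$ as the scaling measure of the sub-system $\bfPhi_q$ is the intended one, and the factorization $\zeta_{\bfr,\bfp}(\beta_q;s)=h(s)\zeta_\eta(s)$ checks out exactly as you wrote it.
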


\begin{remark}
\label{rmk:ComplexDimensionsWithSCalingRegularity}
For each $q \in \{1,\ldots,N_0\}$, the function $h(s):=\sum_{j:\alpha_j=\beta_q}r_j^s$ in the numerator of the right-hand side of \eqref{eqn:NearlyRecoverScalingZetaFunction} is entire. Moreover, the complex roots of $h(s)=0$ are distinct from the poles of $\zeta_{\bfPhi_q}$, so $\calD(\beta_q)=\calD_{\bfPhi_q}$. 
Also, note that the hypotheses of Corollary \ref{cor:NearlyRecoverGeneralizedLatticeString} do not require the self-similar system $\bfPhi$ to be lattice. Section \ref{sec:ASelfSimilarSystemWithNonlatticeAndLatticeStructure} examines a self-similar measure built upon a nonlattice self-similar system which satisfies the hypotheses of Corollary \ref{cor:NearlyRecoverGeneralizedLatticeString} for a particular scaling regularity value.
\end{remark}

The $\alpha$-scaling zeta functions with respect to regularity values $\alpha$ which do not fit any of the conditions required in the results presented in this section are much harder to determine in general; see Section \ref{sec:ASelfSimilarSystemWithNonlatticeAndLatticeStructure}. However, given particular constraints, the full family of $\alpha$-scaling zeta functions for certain weighted self-similar systems can be determined, as we shall see in the next section.

\subsection{Full families of $\alpha$-scaling zeta functions}
\label{sec:FullFamiliesOfAlphaScalingZetaFunctions}

The development that follows determines the $\alpha$-scaling zeta functions associated with a weighted self-similar system $(\bfPhi,\bfp)$ where the distinct regularity values $\beta_q$ are rationally independent and for each $q$, the corresponding scaling ratios $\{r_j:\alpha_j=\beta_q\}$ are given by a single value $t_q$ such that $0<t_q<1$. In this setting, but unlike in Section \ref{sec:SelfSimilarAndLatticeStructuresWithinSelfSimilarMeasures}, the full family of $\alpha$-scaling zeta functions as well as all of the corresponding abscissae of convergence are determined.

Let $c_q=\#\{j:\alpha_j=\beta_q\}$ and $\bfc:=(c_1,\ldots,c_{N_0})$. Then $\bfc \in \N^{N_0}$ and $\displaystyle \sum\bfc=N$. For any pair of $N_0$-vectors $\bfm$ and $\bft$, let $\bft^{\bfm}:=t_{1}^{m_1} \cdots t_{N_0}^{m_{N_0}}$. The multinomial theorem implies that for each $n,K \in \N$ we have 
\begin{align*}
N^n	&= \sum_{\sum\bfk=K}\binom{nK}{n\bfk}=\sum_{\sum\bfm=K}\binom{nK}{n\bfm}\bfc^{n\bfm}.
\end{align*}
As shown in the following theorem, the products of the form $\binom{nM}{n\bfm_0}\bfc^{n\bfm_0}$, where $n \in \N$, $M=\sum\bfm_0$, and $\gcd(\bfm_0)=1$, are the multiplicities of the $\alpha$-scales of the scaling zeta functions associated with $(\bfPhi,\bfp)$ (cf.~\cite[\S 5]{ELMR}, especially Theorem 5.2, Lemma 5.10, Proposition 5.11, and Theorem 5.12 therein). Also, the following theorem determines the scaling zeta functions $\zeta_{\bfr,\bfp}$ and scaling multifractal spectrum $f_{\bfr,\bfp}$ associated with a weighted self-similar system $(\bfPhi,\bfp)$.

\begin{theorem}
\label{thm:FullFamilyOfScalingZetaFunctions}
Let $(\bfPhi,\bfp)$ be a weighted self-similar system. For each $q=1,\ldots,N_0$, suppose there exists a unique $t_q$ such that $0<t_q<1$ and for each $j=1,\ldots,N$ such that $\alpha_j=\beta_q, r_j=t_q$. Further, suppose the collection $\{\beta_q\}_{q=1}^{N_0}$ is rationally independent. Then there exists a unique vector $\bfv=(v_1,\ldots,v_{N_0})$ such that $\beta_q=\log_{t_q}v_q$, where $p_j=v_q$ for each $j=1,\ldots,N$ such that $\alpha_j=\beta_q, r_j=t_q$ and some $q \in 1,\ldots,N_0$. 

Furthermore, the distinct scaling regularity values attained with respect to $(\bfPhi,\bfp)$ are given by 
\begin{align*}
\beta(\bfm)	&:=\log_{\bft^{\bfm}}\bfv^{\bfm},
\end{align*}
for some $\bfm \in (\N\cup\{0\})^{N_0}$ where $\gcd(\bfm)=1$ and $\bfm \neq \mathbf{0}$. Also, for $\bfm \in (\N\cup\{0\})^{N_0}$ where $\gcd(\bfm)=1$ and $\bfm \neq \mathbf{0}$ and for each $n \in \N$, the number of ways a scaling regularity value $\beta(n\bfm)=\beta(\bfm)$ is attained with respect to  $(\bfPhi,\bfp)$ at level $nM$, where $M:=\sum\bfm=\sum_{q=1}^{N_0}m_q$, is given by
\begin{align*}
&	\binom{nM}{n\bfm}\bfc^{n\bfm}.
\end{align*}

Moreover, if $\bfm \in (\N\cup\{0\})^{N_0}$ where $\gcd(\bfm)=1$ and $\bfm \neq \mathbf{0}$, then 
\begin{align*}
\zeta_{\bfr,\bfp}(\beta(\bfm);s)	&= \sum_{n=1}^{\infty}\binom{nM}{n\bfm}\bfc^{n\bfm}\bft^{sn\bfm}
\end{align*}
where $M:=\sum\bfm$, $\textnormal{Re}(s)>f_{\bfr,\bfp}(\beta(\bfm))$, and the corresponding abscissa of convergence $f_{\bfr,\bfp}(\beta(\bfm))$ of $\zeta_{\bfr,\bfp}(\beta(\bfm);\cdot)$ is given by
\begin{align*}
f_{\bfr,\bfp}(\beta(\bfm))	&= \log_{\bft^{\bfm}}\left(\frac{\bfm^{\bfm}}{M^M\bfc^{\bfm}}\right).
\end{align*}
\end{theorem}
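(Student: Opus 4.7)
The plan is to reduce the theorem to a multinomial counting argument together with an asymptotic estimate on the resulting Dirichlet series. First I would establish the existence of $\bfv$: every letter $j$ in class $q$ satisfies $r_j=t_q$ and $\alpha_j=\log_{t_q}p_j=\beta_q$, forcing $p_j=t_q^{\beta_q}$, a quantity depending only on $q$. Setting $v_q:=t_q^{\beta_q}$ gives $\beta_q=\log_{t_q}v_q$ as claimed. For a word $J\in\scrJ$, let $m_q(J)$ count the positions of $J$ whose letter lies in class $q$, and put $\bfm(J):=(m_1(J),\ldots,m_{N_0}(J))$. Collecting factors of $r_J$ and $p_J$ by class yields $r_J=\bft^{\bfm(J)}$ and $p_J=\bfv^{\bfm(J)}$, hence $A_{\bfr,\bfp}(J)=\log_{\bft^{\bfm(J)}}\bfv^{\bfm(J)}=\beta(\bfm(J))$. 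Since $\beta(\cdot)$ is invariant under positive scalar multiplication of its argument, dividing by $\gcd(\bfm(J))$ exhibits every attained regularity as $\beta(\bfm)$ for some reduced $\bfm$.

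Second, I would establish the multiplicity count $\binom{nM}{n\bfm}\bfc^{n\bfm}$ by a two-stage selection. To build a word $J$ of length $nM$ with $\bfm(J)=n\bfm$, first assign each position of $J$ to a class; this yields $\binom{nM}{n\bfm}$ multinomial assignments. Next, within each of the $nm_q$ positions assigned to class $q$, freely choose one of the $c_q$ letters of that class, producing $c_q^{nm_q}$ further choices, whose product over $q$ is $\bfc^{n\bfm}$. Here I would invoke the rational independence of $\{\beta_q\}$ to guarantee that $\bfm\mapsto\beta(\bfm)$ is injective on reduced multi-indices, so that no other branch contributes words of regularity $\beta(\bfm)$. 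I expect this injectivity to be the main technical obstacle: for $N_0=2$ it is automatic, since two reduced vectors sharing a ratio must coincide, but for $N_0\geq 3$ one must rule out accidental coincidences $\beta(\bfm_1)=\beta(\bfm_2)$ for distinct reduced $\bfm_1,\bfm_2$, leveraging the rational independence together with the identity $\log v_q=\beta_q\log t_q$.

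Finally, combining these ingredients with Definition \ref{def:ScalingZetaFunctionAndScalingMultifractalSpectrum} and $r_J=\bft^{\bfm(J)}$ gives
\[
\zeta_{\bfr,\bfp}(\beta(\bfm);s)=\sum_{n=1}^{\infty}\binom{nM}{n\bfm}\bfc^{n\bfm}(\bft^{\bfm})^{ns}.
\]
To extract the abscissa of convergence, I would apply Stirling's formula, which yields $\binom{nM}{n\bfm}\sim C\,n^{-(N_0-1)/2}(M^M/\bfm^{\bfm})^n$ for an explicit constant $C>0$. The $n$-th term therefore behaves like $\bigl(M^M\bfc^{\bfm}\bfm^{-\bfm}(\bft^{\bfm})^s\bigr)^n$ modulo subexponential factors, so the root test gives absolute convergence precisely when $M^M\bfc^{\bfm}\bfm^{-\bfm}(\bft^{\bfm})^{\textnormal{Re}(s)}<1$. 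Since $\bft^{\bfm}\in(0,1)$, taking $\log_{\bft^{\bfm}}$ reverses the inequality, producing $f_{\bfr,\bfp}(\beta(\bfm))=\log_{\bft^{\bfm}}\bigl(\bfm^{\bfm}/(M^M\bfc^{\bfm})\bigr)$, as claimed. Nonnegativity of this abscissa follows from $\bfm^{\bfm}\leq M^M$ (AM--GM) together with $\bfc^{\bfm}\geq 1$.
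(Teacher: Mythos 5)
Your proposal is correct and takes essentially the same route as the paper's proof: the paper likewise sets $v_q=t_q^{\beta_q}$, aggregates each word's letter counts by regularity class (via Lemma \ref{lem:PreliminaryAttainedScalingRegularityAndMultiplicities}, passing from $\bfk$ to $\bfm$) so that $r_J=\bft^{\bfm}$ and $p_J=\bfv^{\bfm}$, counts words with the same multinomial-coefficient-times-$\bfc^{n\bfm}$ argument, and obtains the abscissa of convergence by Stirling's formula and the root test applied to the quantity $\psi(s)=\bft^{s\bfm}\bfc^{\bfm}M^{M}/\bfm^{\bfm}$. The injectivity of $\bfm\mapsto\beta(\bfm)$ on reduced multi-indices, which you single out as the main technical obstacle, is disposed of in the paper by precisely the one-line appeal to the rational independence of $\{\beta_q\}_{q=1}^{N_0}$ that you anticipate, with no further argument supplied there.
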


\begin{proof}
Fix $q \in \{1,\ldots,N_0\}$. For each $j=1,\ldots,N$ such that $\alpha_j=\beta_q$ we have $p_j=t_q^{\beta_q}$. Thus, $\bfv:=(t_1^{\beta_1},\ldots,t_{N_0}^{\beta_{N_0}})$ is the desired vector. 

Let $J \in \scrJ$. By Lemma \ref{lem:PreliminaryAttainedScalingRegularityAndMultiplicities}, there exist $n \in \N$ and $\bfk \in (\N \cup \{0\})^N$ such that $\gcd(\bfk)=1$, $|J|=n\sum\bfk$, and $A_{\bfPhi,\bfp}=\log_{\bfr^\bfk}\bfp^\bfk$. 

For each $q \in \{1,\ldots,N_0\}$, let $m_q:=\sum_{j:\alpha_j=\beta_q}k_j$. Define $\bfm:=(m_1,\ldots,m_{N_0})$.  Then $\sum\bfm=\sum\bfk$, $\bft^\bfm=\bfr^\bfk$, and $\bfv^\bfm=\bfp^\bfk$. Hence $A_{\bfPhi,\bfp}(J)=\beta(\bfm):=\log_{\bft^{\bfm}}\bfv^{\bfm}$. Now, if $\gcd(\bfm_0)=1$, then $\beta(\bfm)=\beta(\bfm_0)$ if and only if there is $n \in \N$ such that $\bfm=n\bfm_0$ since (by hypothesis) the collection $\{\beta_q\}_{q=1}^{N_0}$ is rationally independent. Moreover, for $M:=\sum\bfm_0$ we have
\begin{align*}
\#\{J\in \scrJ_{n\sum\bfm_0}:A_{\bfPhi,\bfp}=\beta_{\bfm_0}\}	&=\binom{nM}{n\bfm_0}\bfc^{n\bfm_0}.
\end{align*}

Note that each $J \in \scrJ$ will be taken into account since all $\bfm \in (\N \cup \{0\})^{N_0}$ where $\bfm \neq 0$ and $\gcd(\bfm)=1$ are considered. 

Now, if $A_{\bfPhi,\bfp}(J)=\beta(\bfm)$, then $r_J=\bft^{n\bfm}$ for some $n \in \N$. Furthermore, for each $n \in \N$
\begin{align*}
\#\{J\in \scrJ_{n\sum\bfm_0}:A_{\bfPhi,\bfp}=\beta_{\bfm_0}\}	&=\binom{nM}{n\bfm_0}\bfc^{n\bfm_0}.
\end{align*}
Hence,
\begin{align*}
\zeta_{\bfr,\bfp}(\beta(\bfm);s)	&= \sum_{n=1}^{\infty}\binom{nM}{n\bfm}\bfc^{n\bfm}\bft^{sn\bfm}.
\end{align*}

In order to determine the abscissa of convergence $f_{\bfr,\bfp}(\beta(\bfm))$, apply Stirling's formula and make use of the function $\psi(s)$ for $s \in \R$ given by
\begin{align*}
\psi(s)	&= \frac{\bft^{s\bfm}c^\bfm M^M}{\bfm^\bfm}.
\end{align*}
The derivative of $\psi$ satisfies $\psi'(s)<0$ since $0<t_q<1$ for $q=1,\ldots,N_0$. Also, $\psi(0)=\bfc^\bfm M^M/\bfm^\bfm>1$ since $\bfm \neq \mathbf{0}$, $M=\sum\bfm$, and $c_q>0$ for each $q=1,\ldots,N_0$. Hence, there is a real unique real number $\rho$ such that $\rho>0$ and 
\begin{align*}
1&=\frac{\bft^{\rho\bfm}\bfc^\bfm M^M}{\bfm^\bfm}.
\end{align*}
The real number $\rho$ will prove to be our abscissa of convergence. Indeed, for a fixed real number $s$, Stirling's formula yields 
\begin{align*}
	\binom{nM}{n\bfm}\bfc^{n\bfm}\bft^{sn\bfm} 
	&=\frac{(nM)!}{(nm_1)!\cdots (n_m{N_0})}\bfc^{n\bfm}\bft^{sn\bfm}\\
	&=\frac{\bfc^{n\bfm}\bft^{sn\bfm}M^{nM}}{\bfm^{n\bfm}} \cdot \frac{\sqrt{2\pi nM}}{\bfm^{n\bfm}}(1+\varepsilon_n) , 
\end{align*}
where $\varepsilon_n \rightarrow 0$ as $n \rightarrow \infty$. Hence, 
\[
\left[\binom{nM}{n\bfm}\bfc^{n\bfm}\bft^{sn\bfm}\right]^{1/n} =\frac{\bfc^{\bfm}\bft^{s\bfm}M^{M}}{\bfm^{\bfm}}(1+\delta_n), 
\]
where $\delta_n \rightarrow 0$ as $n \rightarrow \infty$. The root test implies that the numerical series 
\begin{align*}
\zeta_{\bfr,\bfp}(\beta(\bfm);s)	&= \sum_{n=1}^{\infty}\binom{nM}{n\bfm}\bfc^{n\bfm}\bft^{sn\bfm}
\end{align*}
converges for $s>\rho$ and diverges for $s<\rho$. Therefore,
\[
f_{\bfr,\bfp}(\beta(\bfm))=\rho=\log_{\bft^\bfm}\left(\frac{\bfm^\bfm}{\bfc^\bfm M^M}\right).
\]
\end{proof}

\begin{corollary}
\label{cor:RecoverBesicovitchSubsetResults}
If the conditions of Theorem \ref{thm:FullFamilyOfScalingZetaFunctions} are satisfied and, additionally, if $N_0=N$, then Proposition \ref{prop:HausdorffDimensionsOfBesicovitchSubsets} is recovered for scaling regularity values $\alpha(\bfk)$. Specifically, in this case we have
\begin{align*}
f_{\bfr,\bfp}(\alpha(\bfk))	&= \log_{\bfr^{\bfk}}\left(\frac{\bfk^{\bfk}}{K^K}\right) = \frac{\sum_{j=1}^N (k_j/k) \log{k_j/K}}{\sum_{j=1}^N (k_j/K) \log{r_j}} =  \dim_H(F(\bfk/K)),
\end{align*}
where $F(\bfk/K) $ is the Besicovitch subset \emph{(}as given in Definition \ref{def:BesicovitchSubsets}\emph{)} of the self-similar set $F$.
\end{corollary}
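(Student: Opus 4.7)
The plan is to specialize Theorem \ref{thm:FullFamilyOfScalingZetaFunctions} to the hypothesis $N_0 = N$ and then verify by a short logarithm computation that the resulting abscissa of convergence matches the Besicovitch dimension of Proposition \ref{prop:HausdorffDimensionsOfBesicovitchSubsets}. First I would observe that $N_0 = N$ forces each distinct regularity value $\beta_q$ to come from exactly one index $j$, so the counting vector $\bfc$ satisfies $c_q = 1$ for every $q$; hence $\bfc^{\bfm} = 1$ for every $\bfm$. After relabelling, the vectors $\bft$ and $\bfv$ of Theorem \ref{thm:FullFamilyOfScalingZetaFunctions} coincide with $\bfr$ and $\bfp$, so the index $\bfm \in (\N \cup \{0\})^{N_0}$ plays the role of $\bfk \in (\N \cup \{0\})^N$ from Lemma \ref{lem:PreliminaryAttainedScalingRegularityAndMultiplicities}, and $\beta(\bfm)$ becomes $\alpha(\bfk) = \log_{\bfr^{\bfk}} \bfp^{\bfk}$, with $M = \sum \bfm$ replaced by $K = \sum \bfk$. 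Substituting into
\[
f_{\bfr,\bfp}(\beta(\bfm)) = \log_{\bft^{\bfm}}\!\left(\frac{\bfm^{\bfm}}{M^M \bfc^{\bfm}}\right)
\]
from Theorem \ref{thm:FullFamilyOfScalingZetaFunctions} immediately yields the first equality of the corollary.

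Next I would check the second equality by writing the logarithm explicitly:
\[
\log_{\bfr^{\bfk}}\!\left(\frac{\bfk^{\bfk}}{K^K}\right) = \frac{\sum_{j=1}^N k_j \log k_j - K \log K}{\sum_{j=1}^N k_j \log r_j}.
\]
Dividing numerator and denominator by $K$ and using $\sum_{j=1}^N (k_j/K) = 1$ to rewrite $-\log K$ as $-\log K \cdot \sum_{j} (k_j/K)$, the numerator becomes $\sum_{j=1}^N (k_j/K) \log(k_j/K)$, which is the middle expression. Finally, setting $q_j := k_j/K$ (so that $\bfq = \bfk/K$ is a probability vector), the third equality is exactly Proposition \ref{prop:HausdorffDimensionsOfBesicovitchSubsets} applied to the Besicovitch subset $F(\bfq) = F(\bfk/K)$.

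There is no substantive obstacle to overcome: the heavy lifting, notably the combinatorial identification of multiplicities via multinomial coefficients and the Stirling-based root-test computation of the abscissa of convergence, has already been carried out in Theorem \ref{thm:FullFamilyOfScalingZetaFunctions}. The corollary reduces to tracking the collapse caused by $N_0 = N$ (which makes $\bfc = (1, \ldots, 1)$ and aligns the $\bfm \leftrightarrow \bfk$ indexing) together with the elementary algebraic identity between two equivalent presentations of a weighted entropy ratio. The only minor point worth mentioning explicitly is that Theorem \ref{thm:FullFamilyOfScalingZetaFunctions} enumerates regularity values only for $\bfk$ with $\gcd(\bfk) = 1$, but the resulting formula for $f_{\bfr,\bfp}(\alpha(\bfk))$ is manifestly invariant under $\bfk \mapsto n\bfk$, so it holds for all $\bfk$ arising in Lemma \ref{lem:PreliminaryAttainedScalingRegularityAndMultiplicities} and, correspondingly, for every rational probability vector $\bfq = \bfk/K$.
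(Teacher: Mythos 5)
Your proposal is correct and follows essentially the same route as the paper, which states the corollary as an immediate specialization of Theorem \ref{thm:FullFamilyOfScalingZetaFunctions}: setting $N_0=N$ forces $\bfc=(1,\ldots,1)$, identifies $\bft^{\bfm}=\bfr^{\bfk}$, $\bfv^{\bfm}=\bfp^{\bfk}$, $M=K$, and the remaining step is exactly your logarithm/entropy-ratio rewriting that matches Proposition \ref{prop:HausdorffDimensionsOfBesicovitchSubsets} with $\bfq=\bfk/K$. Your observation that the formula is invariant under $\bfk\mapsto n\bfk$ (so the $\gcd(\bfk)=1$ normalization is harmless) is a worthwhile clarification, but it does not change the argument.
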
 

\begin{remark}
\label{rmk:RecoverPZFMSTCDResults}
In addition to the recovery of Proposition \ref{prop:HausdorffDimensionsOfBesicovitchSubsets}, Theorem \ref{thm:FullFamilyOfScalingZetaFunctions} recovers and allows for the generalization of all of the results found in Section 5 of \cite{ELMR} where the family of partitions is taken to be the \emph{natural} family of partitions (adapted to $\bfPhi$). This generalization includes a partial recovery of the behavior of the multifractal spectrum discussed in Remark \ref{rmk:StructureOfMultifractalSpectrum} when $N_0=2$ and $t_1=t_2$, as in Example \ref{eg:MeasuresOnTheCantorSet}. (The recovery is partial due to the restriction to a countable collection of scaling regularity values; however, taking the concave envelope of the resulting graph yields the entire spectrum.) Moreover, due to the use of scaling regularity in place of coarse H\"{o}lder regularity, \emph{our results extend the results of Section 5 of \emph{\cite{ELMR}} to the setting of arbitrary self-similar measures supported on self-similar subsets of some Euclidean space of any dimension, as opposed to self-similar measures supported on subsets of the unit interval $[0,1]$.}
\end{remark}

\begin{remark}[Natural Hausdorff measures]
\label{rmk:NaturalHausdorffMeasures}
Consider a weighted self-similar system $(\bfPhi,\bfp)$, where $\bfp=(r_1^D,\ldots,r_N^D)$ such that $D$ is the unique real (and hence, positive) root of the corresponding Moran equation \eqref{eqn:Moran}. The resulting self-similar measure $\mu$ is the natural Hausdorff measure of the underlying fractal support. Such measures are used to find lower bounds on the Hausdorff dimension of their supports; see \cite[Ch.~9]{Falc}. Moreover, $\alpha=D$ is the only nontrivial scaling regularity value associated with $(\bfPhi,\bfp)$. One then says that $\mu$ is \emph{monofractal}. On the other hand, if $\bfp\neq (r_1^D,\ldots,r_N^D)$, then there are countably many nontrivial regularity values associated with $(\bfPhi,\bfp)$. It follows that $\mu$ is truly \emph{multifractal} in this case.

In the case of the ternary Cantor set with $\bfr=(1/3,1/3)$ weighted by $\bfp=(1/2,1/2)$, the graph of the primitive of the natural Hausdorff measure $\mu$ (i.e., the graph of $\mu([0,x]$)  is the well-known Devil's staircase; see,\cite[Plate 83, p.~83]{BM2}, \cite[Ch.~6]{Fed},\cite[\S12.1.1]{LapvF6} and Example \ref{eg:MeasuresOnTheCantorSet} above. The primitive of $\mu$ is the Cantor--Lebesgue function. Recall that this function is a nondecreasing, surjective, and continuous function from $[0,1]$ to itself which has derivative zero on $[0,1]\setminus F$, where $F$ is the Cantor set.
\end{remark}

In the next section we close the paper by proposing some preliminary ideas for further results and future work.

\section{Further results and future work}
\label{sec:FurtherResultsAndFutureWork}

The results presented in this paper suggest several interesting problems to pursue. For instance, since our results stem directly from scaling ratios and probability vectors, one may consider interpreting analogous results on an arbitrary metric space along with an appropriate space of measures. Also, the following conjecture, which was originally stated (in a similar but more restrictive setting) in Conjecture 5.8 of \cite{ELMR}, is not addressed in the rest of the present paper.

\begin{conjecture}
\label{conj:RecoverMultifractalSpectra}
For a self-similar measure $\mu$ and all $t \in [t_{\min},t_{\max}]$, we have
\begin{align*}
\hat{f}(t) = f_g(t) = f_s(t),
\end{align*}
where $\hat{f}(t)$ is the concave envelope of the scaling multifractal spectrum $f(\alpha)$ on $[t_{\min},t_{\max}]$, and $f_g$ and $f_s$ are the geometric and symbolic Hausdorff multifractal spectra defined in Section \emph{\ref{sec:MultifractalAnalysisOfSelfSimilarSystems}}.
\end{conjecture}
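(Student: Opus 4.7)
The plan is to reduce the conjectured triple equality to the single equality $\hat{f} = f_s$ via Proposition~\ref{prop:ClassicMultifractalSpectra}, and then to identify the scaling multifractal spectrum $f = f_{\bfr,\bfp}$ with the classical symbolic spectrum $f_s$ at a dense set of regularity values, at which point the concavity of $f_s$ recorded in Remark~\ref{rmk:StructureOfMultifractalSpectrum} forces the concave envelope $\hat{f}$ to coincide with $f_s$ on all of $[t_{\min}, t_{\max}]$.

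First, for each $\bfk \in (\N \cup \{0\})^N$ with $\gcd(\bfk) = 1$ and $\bfk \neq \mathbf{0}$, set $K = \sum\bfk$ and $\bfq = \bfk/K$. Under the hypotheses of Corollary~\ref{cor:RecoverBesicovitchSubsetResults}, I would obtain directly
\[
f_{\bfr,\bfp}(\alpha(\bfk)) = \dim_H(F(\bfk/K)) = f_s(\alpha(\bfk)),
\]
where the last equality follows from Proposition~\ref{prop:ClassicMultifractalSpectra} together with the uniqueness, in that setting, of the probability vector realizing a prescribed regularity. For a general weighted self-similar system $(\bfPhi, \bfp)$, I would extend the proof of Theorem~\ref{thm:FullFamilyOfScalingZetaFunctions} by partitioning $\scrJ$ according to normalized symbol frequency, summing the combinatorial factors $\binom{n\sum\bfk}{n\bfk}$ from Lemma~\ref{lem:PreliminaryAttainedScalingRegularityAndMultiplicities} over all $\bfk$ with $A_{\bfr,\bfp}(\bfk/K) = \alpha$, and then applying Stirling's formula and the root test exactly as at the end of the proof of Theorem~\ref{thm:FullFamilyOfScalingZetaFunctions}. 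The expected outcome is the variational identity
\[
f_{\bfr,\bfp}(\alpha) = \max\bigl\{\dim_H(F(\bfq)) : \bfq \textnormal{ a probability vector with } A_{\bfr,\bfp}(\bfq) = \alpha\bigr\},
\]
which by Proposition~\ref{prop:ClassicMultifractalSpectra} and Proposition~\ref{prop:HausdorffDimensionsOfBesicovitchSubsets} is exactly $f_s(\alpha)$ evaluated at rational regularities.

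Next, I would verify that $\{\alpha(\bfk) : \gcd(\bfk) = 1, \bfk \neq \mathbf{0}\}$ is dense in $[t_{\min}, t_{\max}]$, which follows from the continuity of $\bfq \mapsto A_{\bfr,\bfp}(\bfq)$ on the probability simplex and the density of rational points therein. Combining this with the concavity and continuity of $f_s$ from Remark~\ref{rmk:StructureOfMultifractalSpectrum}, the conclusion $\hat{f} = f_s$ follows in two steps. Since $f_s$ is concave and dominates $f_{\bfr,\bfp}$ pointwise (with equality at rational regularities by the previous step, and trivially $f_{\bfr,\bfp} = 0 \leq f_s$ elsewhere on $[t_{\min}, t_{\max}]$), the minimality of the concave envelope forces $\hat{f} \leq f_s$. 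Conversely, $\hat{f}(\alpha(\bfk)) \geq f_{\bfr,\bfp}(\alpha(\bfk)) = f_s(\alpha(\bfk))$ on a dense set, and continuity of both $\hat{f}$ and $f_s$ in the interior of $[t_{\min}, t_{\max}]$ promotes this to $\hat{f} \geq f_s$ everywhere.

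The main obstacle is the extension of the abscissa-of-convergence computation beyond the restrictive hypotheses of Theorem~\ref{thm:FullFamilyOfScalingZetaFunctions}. When distinct multi-indices $\bfk$ share a common regularity $\alpha(\bfk)$ but yield genuinely distinct products $\bfr^\bfk$, the $\alpha$-scaling zeta function becomes a superposition of multinomial series with heterogeneous decay rates; the resulting object is of hypergeometric type (cf.\ Section~\ref{sec:FurtherResultsAndFutureWork}) rather than a quotient of Dirichlet polynomials, so the clean root-test argument of Theorem~\ref{thm:FullFamilyOfScalingZetaFunctions} no longer applies directly. A sharp Laplace or saddle-point analysis would be needed to confirm that the abscissa of convergence still selects the variational maximum displayed above. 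Securing this identification at every rational $\alpha(\bfk)$ is the crux of the conjecture, after which the concave-envelope argument closes the proof.
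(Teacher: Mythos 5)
First, be aware that this statement is a conjecture which the paper explicitly leaves open (``is not addressed in the rest of the present paper''), so there is no proof in the paper to compare against; what you have written is an attack strategy, and it contains one step that is false as stated and one deferred step that is precisely the open problem. The false step is the claimed pointwise identity $f_{\bfr,\bfp}(\alpha(\bfk)) = f_s(\alpha(\bfk))$ at the attained regularities, together with the ``expected'' variational identity $f_{\bfr,\bfp}(\alpha)=\max\{\dim_H(F(\bfq)) : \bfq \text{ with prescribed regularity } \alpha\}$. These contradict Corollary~\ref{cor:RecoverBesicovitchSubsetResults} itself as soon as $N=N_0\geq 3$: under the rational independence hypothesis of Theorem~\ref{thm:FullFamilyOfScalingZetaFunctions}, the words of regularity exactly $\alpha(\bfk)$ are only those whose count vectors lie on the single ray $n\bfk$, so the abscissa of convergence equals $\dim_H(F(\bfk/K))$ and nothing more, whereas $f_s(\alpha(\bfk))$ is (by the Cawley--Mauldin/Legendre-transform description underlying Propositions~\ref{prop:ClassicMultifractalSpectra} and~\ref{prop:HausdorffDimensionsOfBesicovitchSubsets}) the maximum of $\dim_H(F(\bfq))$ over the whole continuum of real probability vectors $\bfq$ in the simplex satisfying $\big(\sum_j q_j\log p_j\big)/\big(\sum_j q_j\log r_j\big)=\alpha(\bfk)$, a maximum generically attained at an irrational $\bfq^{*}$ and strictly larger than the value at the rational point $\bfk/K$. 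Your appeal to ``uniqueness of the probability vector realizing a prescribed regularity'' only holds among integer count vectors up to scaling, not among real probability vectors. Consequently ``equality on a dense set plus concavity'' cannot deliver $\hat f\geq f_s$; the correct route to that inequality (where the paper's formula is available) is to approximate, for each $t$, the maximizer $\bfq^{*}(t)$ by rational frequency vectors $\bfk_i/K_i$, note $\alpha(\bfk_i)\to t$ and $f_{\bfr,\bfp}(\alpha(\bfk_i))=\dim_H(F(\bfk_i/K_i))\to f_s(t)$ by continuity of the dimension formula, and then use concavity/continuity of $\hat f$ on the interior of the interval --- which is exactly why the conjecture is phrased in terms of the concave envelope rather than pointwise equality.

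The deferred step is the genuine gap, and it is not a routine extension of the root-test computation: for a general self-similar measure neither the values $f_{\bfr,\bfp}(\alpha)$ nor even the upper bound $f_{\bfr,\bfp}\leq f_s$ are established outside the hypotheses of Theorem~\ref{thm:FullFamilyOfScalingZetaFunctions}. When a single regularity (indeed a single scale) is attained by heterogeneous multi-indices, the multiplicities become sums of multinomial coefficients as in \eqref{eqn:BetamMultiplicity}, the zeta functions are of hypergeometric type, and the paper shows in Section~\ref{sec:ASelfSimilarSystemWithNonlatticeAndLatticeStructure} that even in a three-ratio example with rationally independent regularities only two nontrivial regularity values are tractable. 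You acknowledge that a Laplace or saddle-point analysis ``would be needed,'' but that analysis is the content of the conjecture (and of the related work in progress \cite{EssLap}), so the proposal as written is an outline of a plausible attack with its central step missing, not a proof.
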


Another problem worthy of study, but not addressed in this paper and yet motivated by the theory of complex of dimensions of fractal strings in \cite{LapvF6}, is the determination of the full collection of the sets of $\alpha$-scaling complex dimensions $\calD_{\bfr,\bfp}(\alpha)$ of a weighted self-similar system $(\bfPhi,\bfp)$ with respect to the nontrivial regularity values $\alpha$ and, in turn, the determination of the tapestry of complex dimensions $\calT_{\bfr,\bfp}$. Aside from weighted self-similar systems with regularity values satisfying the conditions of Theorem \ref{thm:NearlyRecoverScalingZetaFunction} and Corollary \ref{cor:NearlyRecoverGeneralizedLatticeString}, the $\alpha$-scaling complex dimensions associated with other regularity values, as in Theorem \ref{thm:FullFamilyOfScalingZetaFunctions}, are not known.\footnote{However, in a related setting, partial results in some special cases motivated by \cite{LapRo1} (and now also \cite{ELMR}) are under development in the work in progress \cite{EssLap}.} 

Nonetheless, the final two sections provide interesting partial results on such further problems as well as motivation for further research.

\subsection{Generalized hypergeometric series}
\label{sec:GeneralizedHypergeometricSeries}

In this section, the scaling zeta functions found in Corollary \ref{cor:RecoverBesicovitchSubsetResults} (of Theorem \ref{thm:FullFamilyOfScalingZetaFunctions}) are shown to be generalized hypergeometric series (see \cite{BatErd}). Such series have been well studied and well understood, and related work may provide an alternative or supplementary approach to the theory of complex dimensions of self-similar measures.

Let $b>0$, $\bfk = (k_1,\ldots,k_N) \in (\N\cup\{0\})^N$ for some $N \in \N \setminus \{0,1\}$, $K:=\sum\bfk$ and $B(\bfk):=K^K/\bfk^{\bfk}$. (It will help to keep Notation \ref{not:CombinationsOfVectors} in mind in the sequel.)

Consider the following Dirichlet series:
\begin{align*}
\zeta_b(\bfk;s) &= \sum_{n=1}^\infty \binom{nK}{n\bfk} b^{-nKs}.
\end{align*} 
This form of Dirichlet series appears, indirectly, in Section \ref{sec:MultifractalAnalysisViaScalingRegularityAndScalingZetaFunctions}. Indeed, if the conditions of Theorem \ref{thm:FullFamilyOfScalingZetaFunctions} are satisfied and, additionally, if $N_0=N$, then we take $b^{-1} = \bfr^{\bfk/K} = \bft^{\bfm/M}$ with $K=M$ such that $\bfr = \bft$ is the vector of scaling ratios of the corresponding weighted self-similar system. We obtain
\begin{align*}
\zeta_{\bfr,\bfp}(\beta(\bfm):s)	&= \sum_{n=1}^\infty\binom{nM}{n\bfm}\bft^{sn\bfm} = \sum_{n=1}^\infty \binom{nK}{n\bfk} b^{-nKs} = \zeta_b(\bfk;s).
\end{align*}

Now, via the formula (see, e.g., \cite{BatErd})
\begin{align*}
\binom{nK}{n\bfk} &=
\frac{\prod_{j=1}^K(1((j/K)+n)/1(j/K))}{\prod_{q=1}^N\left(\prod_{j=1}^{k_q}(1((j/k_q)+n)/1(j/k_q))\right)} \cdot \frac{B(\bfk)^n}{n!}
\end{align*}
we have
\begin{align}
\label{eqn:zetaBbfkHypergeometric1}
\zeta_b(\bfk;s) &= -1 + \sum_{n=0}^{\infty}\frac{\prod_{j=1}^K(1((j/K)+n)/1(j/K))}{\prod_{q=1}^N\left(\prod_{j=1}^{k_q}(1((j/k_q)+n)/1(j/k_q))\right)} \cdot \frac{(B(\bfk)b^{-Ks})^n}{n!}.
\end{align}
The series on the right-hand side of \eqref{eqn:zetaBbfkHypergeometric1} is a generalized hypergeometric series of the form $_{K}F_{K-1}$ (see \cite{BatErd} for the precise definition and notation). That is, we have
\begin{align*}
\zeta_b(\bfk;s) &= -1 + \sum_{n=0}^{\infty}\frac{\prod_{j=1}^K(1((j/K)+n)/1(j/K))}{\prod_{q=1}^N\left(\prod_{j=1}^{k_q}(1((j/k_q)+n)/1(j/k_q))\right)} \cdot \frac{(B(\bfk)b^{-s})^n}{n!} \label{eqn:zetaBbfkHypergeometric2} \\
&= -1 + _{K}F_{K-1}(1/K,\ldots,1; 1/k_1,\ldots,(k_n-1)/k_N; B(\bfk)b^{-Ks}). \notag
\end{align*}

Thus, even though the full families of scaling zeta functions of a self-similar measure in the general case have yet to be determined, the collection of scaling zeta functions which are given by a hypergeometric series is primed for further analysis. As seen in the next section, there are scaling zeta functions which are neither nearly the zeta function of a self-similar system nor a hypergeometric series, even in the case where the distinct scaling ratios are rationally independent.

\subsection{A self-similar system with nonlattice and lattice structure}
\label{sec:ASelfSimilarSystemWithNonlatticeAndLatticeStructure}

This last section investigates the structure of the scaling zeta functions in a special case where the distinct scaling regularity values are rationally independent but only two of the scaling zeta functions are known. 

Consider a weighted self-similar system $(\bfPhi,\bfp)$ such that $\bfPhi$ satisfies the open set condition and is nonlattice with scaling ratios $\bfr \in (0,1)^N$ for some $N \in \N$, $N \geq 3$, satisfying the following conditions: First, there exist $t,t_0 \in (0,1)$ and $\bft:=(t,t^2,t_0)$ such that for each $j=1,\ldots,N$, $r_j$ is equal to either $t,t^2$, or $t_0$. Second, there exists $\bfc=(c_1,c_2,c_0) \in \N^3$ such that $\sum\bfc=N$ and 
\begin{align*}
c_1	&=\#\{j:r_j=t, j \in \{1,\ldots,N\}\},\\
c_2	&=\#\{j:r_j=t^2, j \in \{1,\ldots,N\}\},
\end{align*}
and
\begin{align*}
c_0	&=\#\{j:r_j=t_0, j \in \{1,\ldots,N\}\}.
\end{align*}
Furthermore, suppose there are a probability vector $\bfp$ and rationally independent real numbers $\gamma$ and $\gamma_0$ such that for each $j \in \{1,\ldots,N\}$, $r_j=t$ implies $p_j=t^\gamma$; $r_j=t^2$ implies $p_j=t^{2\gamma}$; and $r_j=t_0$ implies $p_j=p_0:=t_0^{\gamma_0}$.

Due to Theorem \ref{thm:FullFamilyOfScalingZetaFunctions}, in order to determine the distinct scales $l_n(\alpha)$ and the corresponding multiplicities $m_n(\alpha)$ of the sequence of $\alpha$-scales $\calL_{\bfr,\bfp}(\alpha)$, it suffices to consider the scaling regularity values $\alpha(\bfk)$ where $\bfk \in (\N \cup \{0\})^N$, $\bfk \neq \mathbf{0}$, and $\gcd(\bfk)=1$. 

Fix $\bfk$ as above. Let
\begin{align*}
m	&:=\sum_{j:r_j=t}k_j+\sum_{j:r_j=t^2}k_j, \quad m_0:=\sum_{j:r_j=t_0}k_j,
\end{align*}
and $\bfm:=(m,m_0)$. We have 
\begin{align*}
\alpha(\bfk)	&:=\log_{\bfr^\bfk}\bfp^\bfk = \frac{\log t^{m\gamma}t_0^{m_0\gamma_0}}{\log t^{m}t_0^{m_0}}.
\end{align*}
Define $\beta(\bfm):=\log t^{m\gamma}t_0^{m_0\gamma_0}/\log t^{m}t_0^{m_0}$.

Now, for a fixed $\bfm$ where $\gcd(\bfm)=1$ and each $n \in \N$ we have
\begin{align*}
l_n(\beta(\bfm))	&=(t^m t_0^{m_0})^n.
\end{align*}

The key difficulty here lies in determining the multiplicities $m_n(\beta(\bfm))$ of the scales $l_n(\beta(\bfm))$, specifically, due to the fact that some initial scaling ratios are equal to $t$ and others are equal to $t^2$. Thus, for instance, the $\gamma$-scale $t^4$ is attained with respect to some vectors $\bfk_1, \bfk_2,$ and $\bfk_3$ where $|\bfk_1|=2$, $|\bfk_2|=3$, and $|\bfk_3|=4$. More specifically, if both nonzero components of $\bfk_1$ correspond to the scale $t^2$, two of the nonzero components of $\bfk_2$ correspond to $t$ and the other corresponds to $t^2$, and all four nonzero components of $\bfk$ correspond to $t$, then 
\[
\alpha(\bfk_1)=\alpha(\bfk_2)=\alpha(\bfk_3)=\beta(\bfm)=\gamma
\] 
and 
\[
\bfr^{\bfk_1}=\bfr^{\bfk_2}=\bfr^{\bfk_3}=t^4.
\] 

That is, a given scale $l_n(\beta(\bfm))$ can arise in various stages with respect to $(\bfPhi,\bfp)$, making the determination of the precise form of the multiplicity $m_n(\beta(\bfm))$ difficult in general. As a result, and although the full family of scaling zeta functions in the setting of this section have been determined, the abscissae of convergence and the $\alpha$-scaling complex dimensions are known for just two nontrivial scaling regularity values.

The following notation is used repeatedly below. Let $\left\lfloor \cdot \right\rfloor$ denote the floor function. That is, for $x \in \R$, $\left\lfloor x \right\rfloor$ is the integer part of $x$ given by the greatest integer such that $\left\lfloor x \right\rfloor \leq x$.

Suppose $\gcd(\bfm)=1$ and set $M=\sum\bfm$. Then, for each $n \in \N$,
\begin{align}
m_n(\beta(\bfm))	&=\#\{J \in \scrJ : r_J = l_n(\beta(\bfm))\} \notag \\
	&=\sum_{j=0}^{\left\lfloor nM/2 \right\rfloor}\binom{\sum\bfv(j)}{\bfv(j)}\bfc^{\bfv(j)}, \label{eqn:BetamMultiplicity}
\end{align}
with $\bfv(j)=(v_1(j),v_2(j),v_0)$ where, for $j=0,\ldots,\left\lfloor nM/2 \right\rfloor$,
\begin{align*}
v_1(j)	&:=nm-2\left\lfloor nM/2 \right\rfloor+2j, \quad v_2(j):=\left\lfloor nM/2 \right\rfloor-j, \quad \textnormal{and } v_0:=nm_0.
\end{align*} 

The difficulty in determining the form of the generalization of $m_n(\beta(\bfm))$ lies in determining the relationship between the components of the generalization of the vectors $\bfv(j)$. The $\beta(\bfm)$-scaling zeta function of $(\bfPhi,\bfp)$ for some $\bfm$ with $\gcd(\bfm)=1$ is given by 
\begin{align*}
\zeta_{\bfr,\bfp}(\beta(\bfm);s)	&=\sum_{n=1}^\infty m_n(\beta(\bfm))(l_n(\beta(\bfm)))^s,
\end{align*}
where $\textnormal{Re}(s)$ is large enough.

\begin{example}
\label{eg:SpecialCasesOfGeneralizedScottStuff}
In this example, the closed forms of the scaling zeta functions for just two nontrivial scaling regularity values, namely $\gamma_0$ and $\gamma$, are known. Both cases are the result of Corollary \ref{cor:NearlyRecoverGeneralizedLatticeString}, thus the abscissae of convergence and the $\alpha$-scaling complex dimensions associated with just two nontrivial regularity values can be obtained.

We have $\gcd(\bfm)=1$ and $\beta(\bfm)=\gamma_0$ if and only if $m=0$ and $m_0=1$. In this case, $l_n(\gamma_0)=t_0^n$, $m_n(\gamma_0)=c_0^n$, and by Corollary \ref{cor:NearlyRecoverGeneralizedLatticeString} we have, for $s \in \C,$
\begin{align*}
\zeta_{\bfr,\bfp}(\gamma_0;s)	&= \frac{c_0t_0^s}{1-c_0t_0^s}.
\end{align*}
Moreover,
\begin{align*}
\calD_{\bfr,\bfp}(\gamma_0)	&=\left\{-\log_{t_0}c_0 + i\frac{2\pi}{\log{t_0}}z : z \in \Z\right\}.
\end{align*}

As for scaling regularity $\gamma$, we have $\gcd(\bfm)=1$ and $\beta(\bfm)=\gamma$ if and only if $m=1$ and $m_0=0$. In this case, $l_n(\gamma)=t^n$ and 
\begin{align*}
m_n(\gamma)	&=\sum_{j=0}^{\left\lfloor n/2 \right\rfloor}\binom{\left\lfloor \frac{n+1}{2}+j \right\rfloor}{n-2\left\lfloor \frac{n}{2} \right\rfloor+2j}\bfc^{\bfv(j)}.
\end{align*}
However, a closed form of $\zeta_{\bfr,\bfp}(\gamma;s)$ is obtained in Corollary \ref{cor:NearlyRecoverGeneralizedLatticeString}. That is,
\begin{align*}
\zeta_{\bfr,\bfp}(\gamma;s)	&= \frac{c_1t_1^s+c_2t_2^{2s}}{1-c_1t_1^s-c_2t_2^{2s}}, \quad s \in \C.
\end{align*}
Moreover, $\zeta_{\bfr,\bfp}(\gamma;s) = \zeta_\eta(s)$, where $\eta$ is the generalized lattice string with scaling ratios $r_1=t$ and $r_2=t^2$, weights $m_1=c_1$ and $m_2=c_2$, and $c_1$ gaps $g_1=t$ and $c_2$ gaps $g_2=t^2$. (See Definition \ref{def:GeneralizedLatticeString}.) Hence, $\calD_{\bfr,\bfp}(\gamma)=\calD_\eta$.
\end{example}

\begin{example}
\label{eg:RecoverScott}
This example develops the full recovery of the Fibonacci string as in Example \ref{eg:TheFibonacciString} as well as the results of Example 5.16 from \cite{ELMR}. 

Consider a weighted self-similar system $(\bfPhi,\bfp)$ such that $\bfPhi$ satisfies the open set condition and has scaling ratios $\bfr=(1/2,1/4,1/10)$. Then $\bfPhi$ is nonlattice. Furthermore, suppose the probability vector $\bfp$ is given by $\bfp=(1/2,1/4,1/4)$ such that the distinct initial scaling regularity values $\gamma=1$ and $\gamma_0=\log_{10}{4}$ are rationally independent. Moreover, in the notation given by the more general setting above, we have $t=1/2, t_0=1/10$ and $c_1=c_2=c_0=1$. As in the previous example, the only scaling regularity values for which the scaling zeta functions are known are $\gamma_0=\log_{10}4$ and $\gamma=1$, so only these two cases are discussed in this example.

First consider the simplest case of scaling regularity $\gamma_0=\log_{10}4$. The $\gamma_0$-scaling zeta function is given by
\begin{align*}
\zeta_{\bfr,\bfp}(\gamma_0;s)	&= \sum_{n=1}^{\infty}10^{-ns} = \frac{10^{-s}}{1-10^{-s}}.
\end{align*}
Thus, the set of $\gamma_0$-scaling complex dimensions is given by
\[
\calD_{\bfr,\bfp}(\gamma_0)=\left\{izp : z \in \Z \right\},
\]
where $p=2\pi/\log{10}$.

In this case of scaling regularity $\gamma=1$, we nearly recover the geometric zeta function of the Fibonacci string $\zeta_{\textnormal{Fib}}$. The $1$-scales are given by $l_n(1)=2^{-n}$ for $n \in \N$ and the multiplicities $m_n(1)$ are given by the Fibonacci numbers. In fact, in light of the formula for the more general multiplicities $m_n(\beta(\bfm))$ in \eqref{eqn:BetamMultiplicity} and the formulas from Example \ref{eg:TheFibonacciString}, we recover a classic formula which yields the Fibonacci numbers as sums of particular binomial coefficients. Specifically, 
\begin{align*}
m_n(1)	&= \sum_{j=0}^{\left\lfloor\frac{n}{2}\right\rfloor} \binom{\left\lfloor\frac{n+1}{2}+j\right\rfloor}{n-2\left\lfloor\frac{n}{2}\right\rfloor+2j}=F_{n+1},
\end{align*}
where $F_{n+1}$ is the $(n+1)$th Fibonacci number. See Example \ref{eg:TheFibonacciString} above, Example 5.16 of \cite{ELMR}, and \cite[\S2.3.2]{LapvF6} for a discussion of the Fibonacci string and its geometric zeta function $\zeta_{\textnormal{Fib}}$. 

The scaling zeta function $\zeta_{\bfr,\bfp}(1;s)$ is then given by
\begin{align*}
\zeta_{\bfr,\bfp}(1;s)	&=\sum_{n=1}^{\infty}F_{n+1}2^{-ns} 		=\zeta_{\textnormal{Fib}}(s)-1=\frac{2^{-s}+4^{-s}}{1-2^{-s}-4^{-s}},
\end{align*}
where $\zeta_{\textnormal{Fib}}$ is given by \eqref{eqn:GeometricZetaFunctionFibonacci}. Thus, the corresponding complex dimensions, in both the classic sense and with respect to the scaling regularity $1$, are given by \eqref{eqn:FibonacciComplexDimensions}. That is,
\begin{align*}
\calD_{\textnormal{Fib}}	&=\calD_{\bfr,\bfp}(1)
=\left\{D_{\textnormal{Fib}}+izp : z \in \Z \right\} \cup \left\{-D_{\textnormal{Fib}}+i(z+1/2)p : z \in \Z \right\},
\end{align*}
where $\varphi=(1+\sqrt{5})/2$ is the Golden Ratio, $D_{\textnormal{Fib}}=\log_2{\varphi}$, and the oscillatory period is $p = 2\pi/\log{2}$. 
\end{example}

In closing, we mention that some next steps include the determination of the $\alpha$-scaling complex dimensions and the full tapestry of complex dimensions $\calT_{\bfr,\bfp}$ associated with such self-similar measures (cf.~\cite[\S 6]{ELMR}). Thereafter, one may study the implications (for the oscillatory behavior of self-similar systems and measures) of counting functions and volume formulas associated with the nontrivial scaling regularity values, in the spirit of similar notions stemming from the theory of complex dimensions of fractal strings developed throughout \cite{LapvF6} (see also \cite{LapPeWin} and the relevant references therein, for the higher-dimensional case).

\bibliographystyle{amsalpha}

\begin{thebibliography}{99}


\bibitem{Bal} V.~K.~Balakrishnan, \textit{Introductory Discrete Mathematics}, Courier Dover Publications, 1996. (Reprint of the Prentice-Hall, Englewood Cliffs, New Jersey, 1991 edition.)


\bibitem{BatErd} G.~Bateman and A.~Erdelyi, \textit{Higher Transcendental Functions}, vol. 1,  McGraw-Hill, New York, 1953.

\bibitem{Bes1} A.~S.~Besicovitch, On the sum of digits of real numbers represented in the dyadic system, \emph{Math. Ann.} {\bf 110} (1934) 321--330.


\bibitem{BesTa} A.~S.~Besicovitch and S. J. Taylor, On the complementary intervals of a linear closed set of zero Lebesgue measure, \emph{J. London Math. Soc.} \textbf{29} (1954), 449--459.

\bibitem{BMP} G.~Brown, G.~Michon and J.~Peyri{\`e}re, On the multifractal analysis of measures, \emph{ J. Statist. Phys.} \textbf{66} (1992), 775--790.

\bibitem{CM} R.~Cawley and R.~D.~Mauldin, Multifractal decompositions of Moran fractals, \emph{Adv. in Math.} \textbf{92} (1992), 196--236.



\bibitem{DekLi} F.~M.~Dekking and W.~X.~Li, Hausdorff dimensions of subsets of Moran fractals with prescribed group frequency of their codings, \emph{Nonlinearity} \textbf{16} (2003), 1--13.

\bibitem{deSa} R.~de~Santiago, The Recursive Structure of Lattice Strings, \emph{Master's Thesis}, California State Polytechnic University, Pomona, in progress, 2012.

\bibitem{ELMR} K.~E.~Ellis, M.~L.~Lapidus, M.~C.~Mackenzie, and J.~A.~Rock, Partition zeta functions, multifractal spectra, and tapestries of complex dimensions, in \emph{Benoit Mandelbrot: A Life in Many Dimensions,} The Mandelbrot Memorial Volume (M. Frame, ed.), World Scientific, Singapore, in press, 2012. (Also: arXiv:1007.1467v2 [math-ph], 2011; IHES/M/12/15, 2012.)


\bibitem{Egg} H.~G.~Eggleston, The fractional dimension of a set defined by decimal properties, \emph{Quart. J. Math. Oxford Ser.} \textbf{20} (1949), 31--36.


\bibitem{EssLap} D.~Essouabri and M.~L.~Lapidus, Analytic continuation of a class of multifractal zeta function, preprint (and work in progress), 2012.


\bibitem{Falc} K.~Falconer, \emph{Fractal Geometry -- Mathematical Foundations and Applications,} 2nd edn., John Wiley, Chichester, 2003.

\bibitem{Fed} J.~Feder, \emph{Fractals}, Plenum Press, New York, 1988.








\bibitem{Hut} J.~E.~Hutchinson, Fractals and self-similarity, {\it Indiana Univ. Math. J.} {\bf 30} (1981), 713--747.




\bibitem{Jaf} S.~Jaffard, Wavelet techniques in multifractal analysis, in: \cite{LapvF5}, Part 2, pp. 91--151.


\bibitem{KesKom} M.~Kesseb\"ohmer and S.~Kombrink, Fractal curvature measures and Minkowski content for self-conformal subsets of the real line, \emph{Adv. in Math.} \textbf{230} (2012), 2474--2512.


\bibitem{Lap2} M.~L.~Lapidus, Spectral and fractal geometry: From the Weyl--Berry conjecture for the vibrations of fractal drums to the Riemann zeta-function, in: \emph{Differential Equations and Mathematical Physics} (C. Bennewitz, ed.), Proc. Fourth UAB Internat. Conf. (Birmingham, March 1990), Academic Press, New York, 1992, pp. 151--182.



\bibitem{LLVR} M.~L.~Lapidus, J.~L{\'e}vy-V{\'e}hel and J.~A.~Rock, Fractal strings and multifractal zeta functions, \emph{Letters in Mathematical Physics} No.~1, {\bf 88} (2009), 101-129. (Special issue dedicated to the memory of Moshe Flato.) (Also: E-print, arXiv:math-ph/0610015v3, 2009.)









\bibitem{LapPeWin} M.~L.~Lapidus, E.~P.~J.~Pearse and S. Winter, Pointwise tube formulas for fractal sprays and self-similar tilings with arbitrary generators, {\it Adv. in Math.}  \textbf{227} (2011), 1349--1398. (Also: E-print, arXiv:1006.3807v3 [math.MG], 2011.)

\bibitem{LapPeWi2} M.~L.~Lapidus, E.~P.~J.~Pearse and S.~Winter, Minkowski measurability results for self-similar tilings and fractals with monophase generators, in: \emph{Fractal Geometry and Dynamical Systems in Pure and Applied Mathematics}, Part 1 (D.~Carfi, M.~L.~Lapidus, E.~P.~J.~Pearse and M.~van~Frankenhuijsen, eds.), Contemporary Mathematics, Amer. Math. Soc., Providence, RI, 2013, in press.  (Also: E-print, arXiv:1104.1641v2 [math.MG], 2012; IHES/M/12/33, 2012.)

\bibitem{LapPo1} M.~L.~Lapidus and C.~Pomerance, The Riemann zeta-function and the one-dimensional Weyl--Berry conjecture for fractal drums, \emph{Proc. London Math. Soc.} (3) \textbf{66} (1993), 41--69.

\bibitem{LapPo2} M.~L.~Lapidus and C.~Pomerance, Counterexamples to the modified \mbox{Weyl--Berry} conjecture on fractal drums, \emph{Math. Proc. Cambridge Philos. Soc.} \textbf{119} (1996), 167--178.

\bibitem{LapRo1} M.~L.~Lapidus and J.~A.~Rock, Towards zeta functions and complex dimensions of multifractals, \emph{Complex Variables and Elliptic Equations} No. 6, {\bf 54} (2009), 545--559. (Special issue dedicated to Fractal Analysis.) (Also: E-print, Institut des Hautes Etudes Scientfiques, IHES/M/08/34, 2008; arXiv:math-ph/0810.0789, 2008.)






\bibitem{LapvF5} M.~L.~Lapidus and M.~van~Frankenhuijsen (eds.), \emph{Fractal Geometry and Applications: A Jubilee of Beno\^it Mandelbrot}, Proceedings of Symposia in Pure Mathematics {\bf 72} (Part 1: \emph{Analysis, Number Theory and Dynamical Systems.} Part 2: \emph{Multifractals, Probability and Statistical Mechanics, and Applications}), Amer. Math. Soc., Providence, RI, 2004.

\bibitem{LapvF6} M.~L.~Lapidus and M.~van~Frankenhuijsen, \emph{Fractal Geometry, Complex Dimensions and Zeta Functions}: \emph{Geometry and Spectra of Fractal Strings,} 2nd revised and enlarged edition (of the 2006 edition), Springer Monographs in Mathematics, Springer-Verlag, New York, 2012.




\bibitem{LVM} J.~L{\'e}vy-V{\'e}hel and F.~Mendivil, Multifractal and higher-dimensional zeta functions, {\it Nonlinearity} No.~1, {\bf 24} (2011), 259--276.






\bibitem{BM2} B.~B.~Mandelbrot, \emph{The Fractal Geometry of Nature,} revised and enlarged edition (of the 1977 edition), W.~H. Freeman, New York, 1983.



\bibitem{MauUrb} R.~D.~Mauldin and M.~Urba\'nski, \emph{Graph Directed Markov Systems}: \emph{Geometry and Dynamics of Limit Sets}, Cambridge Tracts in Mathematics, Cambridge University Press, Cambridge, 2003.


\bibitem{Mor} P.~A.~P.~Moran, Additive functions of intervals and Hausdorff measure, {\it Math. Proc. Cambridge Philos. Soc.} {\bf 42} (1946), 15--23.

\bibitem{Ol1} L.~Olsen, \emph{Random Geometrically Graph Directed Self-Similar Multifractals}, Pitman Research Notes in Math. Series, vol. 307, Longman Scientific and Technical, London, 1994.

\bibitem{Ol2} L.~Olsen, A multifractal formalism, \emph{Adv. in Math.} \textbf{116} (1996), 82--196.

\bibitem{Ol3} L.~Olsen, Multifractal geometry, in: \emph{Fractal Geometry and Stochastics II} (Greifswald/Koserow, 1998), \emph{Prog. Prob.} \textbf{46} (2000), Birkh\"{a}user, Basel, pp. 3--37.

\bibitem{Ol4} L.~Olsen, A lower bound for the symbolic multifractal spectrum of a self-similar multifractal with arbitrary overlaps, \emph{Math. Nachr.} No. 10, \textbf{282} (2009), 1461--1477.




\bibitem{Rock} J.~A.~Rock, Zeta Functions, Complex Dimensions of Fractal Strings and Multifractal Analysis of Mass Distributions, \emph{Ph.D.~Dissertation}, University of California, Riverside, 2007.




\bibitem{STZ} R.~S.~Strichartz, A.~ Taylor, and T.~ Zhang, Densities of self-similar measures on the line, \emph{J.~Experimental Mathematics} No. 2, {\bf 4} (1995), 101--128.

\end{thebibliography}



\end{document}